\theoremstyle{plain}
\newtheorem{theorem}{Theorem}[section]
\newtheorem{proposition}[theorem]{Proposition}
\newtheorem{lemma}[theorem]{Lemma}
\theoremstyle{definition}
\newtheorem{definition}[theorem]{Definition}
\newtheorem{remark}[theorem]{Remark}
\newtheorem{example}[theorem]{Example}
\newtheorem{assumption}[theorem]{Assumption}
\newtheorem{notation}[theorem]{Notation}
\theoremstyle{remark}
{%
\end{oldthebibliography}%
}
\newcommand{\R}{\mathbb{R}}
\newcommand{\cB}{\mathcal{B}}
\newcommand{\noll}{0}
\newcommand{\mykill}[1]{}
\numberwithin{equation}{section}
\begin{document}

\title{\vspace{-1em} Randomization in Optimal Execution Games }
\date{\today}
\author{
  Steven Campbell%
  \thanks{
  Dept.\ of Statistics, Columbia University, sc5314@columbia.edu. Research supported by an NSERC Postdoctoral Fellowship (PDF‑599675-2025).}
  \and
  Marcel Nutz%
  \thanks{
  Depts.\ of Statistics and Mathematics, Columbia University, mnutz@columbia.edu. Research supported by NSF Grants DMS-2407074, DMS-2106056.}
  }
\maketitle \vspace{-1.2em}

\begin{abstract}
We study optimal execution in markets with transient price impact in a competitive setting with $N$ traders. Motivated by prior negative results on the existence of pure Nash equilibria, we consider randomized strategies for the traders and whether allowing such strategies can restore the existence of equilibria. We show that given a randomized strategy, there is a non-randomized strategy with strictly lower expected execution cost, and moreover this de-randomization can be achieved by a simple averaging procedure. As a consequence, Nash equilibria cannot contain randomized strategies, and non-existence of pure equilibria implies non-existence of randomized equilibria. Separately, we also establish uniqueness of equilibria. Both results hold in a general transaction cost model given by a strictly positive definite impact decay kernel and a convex trading cost.
\end{abstract}

\vspace{.3em}

{\small
\noindent \emph{Keywords} Randomization; Optimal Execution; Transient Price Impact; $N$-Player Game

\noindent \emph{AMS 2020 Subject Classification}
91A06; 91A15; 91G10

\noindent \emph{JEL Classification}
G24; 
C62	%
}
\vspace{.6em}

\section{Introduction}

We study optimal execution in a market with transient price impact, meaning that trades dislocate the security's price in the market but this impact diminishes over time according to a decay kernel $G$ (see \cite{CarteaJaimungalPenalva.15,GatheralSchied.13, Webster.23} for background and references). More specifically we are interested in a competitive setting where $N$ traders transact in the same security and are mutually aware of their competitors' trade intention. Early works in this setting include \cite{CarlinLoboViswanathan.07, PedersenBrunnermeier.05, Schoneborn.08, SchonebornSchied.09} whereas recent works with transient price impact include \cite{AbiJaberNeumanVoss.24, CampbellNutz.25a, FuHorstXia.22, NeumanVoss.23, SchiedStrehleZhang.17, SchiedZhang.19,Strehle.17}. To the best of our knowledge, there are no prior works focusing on randomization in optimal execution games.

The starting point of our investigation is a surprising  result of \cite{CampbellNutz.25a, SchiedStrehleZhang.17} which use the  popular Obizhaeva--Wang model~\cite{ObizhaevaWang.13} for price impact; i.e., the decay kernel~$G$ is exponential. In a straightforward generalization of the standard single-player optimal execution task of unwinding a given inventory in a martingale asset over a time interval $[0,T]$, it is shown that the $N$-player game does not admit a Nash equilibrium for any $N\geq2$.\footnote{In \cite{CampbellNutz.25a}, it is discussed how existence is restored when an additional quadratic cost on the trading rate (as in \cite{garleanu.pedersen.16,GraeweHorst.17}) is levied, whereas \cite{SchiedStrehleZhang.17} discretizes time.} In view of the non-existence result, several researchers have asked if allowing for randomized execution strategies would restore existence of a Nash equilibrium. Here randomization means that traders' strategies can depend on privately observed randomization devices. This question is well motivated, as in many other games without a (pure) Nash equilibrium, existence is indeed restored by allowing for randomization. We will show, however, that this is not the case here.

More broadly, we analyze randomized strategies in an $N$-player game with a general impact decay kernel~$G$ and general convex trading costs~$C$ (see \cref{se:setup} for details on the setup). After mathematically formalizing a model with idiosyncratic randomization, we show that randomized strategies are not desirable in our context. Specifically, the main result %
shows that if some trader $i$ considers a randomized strategy $X^{i}$, then while keeping the strategies of the competing traders fixed, replacing $X^{i}$ by a de-randomized version of $X^{i}$ yields a strict improvement in execution cost. In particular, this implies that \emph{a Nash equilibrium cannot contain randomized strategies}. Our analysis further clarifies that this de-randomization can be achieved by the predictable projection onto the market filtration (which does not include the randomization devices). Concretely, this projection simply amounts to averaging the randomized strategy over the scenarios of the randomization device---essentially, taking conditional expectation given the market information.

Mathematically, the result is driven by the strict positive definiteness of the kernel $G$, together with the theory of predictable and dual predictable projections. The proof intuition can be loosely described as a Jensen-type argument exploiting strict convexity: holding the other traders fixed, the transient-impact contribution to a trader's expected execution cost is a strictly convex quadratic functional of her own trading strategy whenever $G$ is strictly positive definite. This suggests that if a trader randomizes idiosyncratically, then averaging her randomized strategy over the randomization scenarios should strictly reduce the impact term, unless the strategy was already non-random. Making this intuition rigorous, however, is not immediate in continuous time because execution costs are built from Stieltjes integrals against trading strategies of finite variation, so one must justify how averaging interacts with the relevant integral terms and with the quadratic form induced by $G$. Our analysis provides this justification and pinpoints the structural features---including linear state dynamics, risk-neutral preferences, and convexity---that explain why, in this class of execution games, idiosyncratic randomization cannot appear in equilibrium.

Separately, we establish the \emph{uniqueness of Nash equilibria}\footnote{Meaning that there is \emph{at most} one equilibrium. As mentioned above, existence can fail depending on the choice of trading costs.} in our general setting. %
This generalizes earlier results (of \cite{SchiedStrehleZhang.17} for exponential decay kernel, $N=2$ traders and quadratic trading costs~$C$, and \cite{CampbellNutz.25a, Strehle.17}  for the same model as~\cite{SchiedStrehleZhang.17} but general~$N$) to a general decay kernel, general convex trading costs, and possibly randomized strategies. Like the result on de-randomization, the uniqueness proof also rests on the strict positive definiteness of the kernel.  

The remainder of this note is organized as follows. \Cref{se:setup} details the setup and notation, focusing on bounded decay kernels. \Cref{se:projections} recalls the predictable and dual predictable projection operators from the theory of stochastic processes.  \Cref{se:no.randomized} contains the main results on de-randomization of randomized strategies and the implications for Nash equilibria, while \cref{se:uniqueness} states the uniqueness of Nash equilibria. \Cref{se:singular} extends all results to the case of unbounded (weakly singular) decay kernels. \Cref{se:conclusion} concludes with remarks about modeling choices. All proofs are reported in \cref{se:proofs}.

\section{Setup}\label{se:setup}

We consider a market where $N\geq1$ agents trade in a single risky asset.\footnote{The case $N=1$ corresponds to the single-player problem where an equilibrium is just an optimal strategy. While some of our results simplify dramatically in that case, all of them remain valid.} We index the traders by $i\in\{1,\dots, N\}$ and denote their inventory processes by $X^{i}=(X^{i}_t)_{t\geq0}$, where $X^i_t$ indicates the number of shares held by trader~$i$ at time~$t$. Each trader $i$ is endowed with initial holdings $X_{0-}^{i}=x^{i}\in\mathbb{R}$ and trades up to the common terminal time $T>0$. Our setup will include problems where full liquidation at time~$T$ is enforced as well as problems where terminal inventory is merely penalized.

\subsection{Randomized Strategies}

Our first goal is to formalize (idiosyncratically) randomized strategies. Fix a filtered probability space $(\Omega^{\noll}, \mathcal{F}^{\noll}, \mathbb{F}^{\noll}=(\mathcal{F}_t^{\noll}), \mathbb{P}^{\noll})$ satisfying the usual conditions. To model randomization, we introduce auxiliary probability spaces 
\((\tilde{\Omega}_{i}, \tilde{\mathcal{F}}_{i}, \tilde{\mathbb{P}}_{i})\), $i=1\dots,N$ and denote by $(\tilde{\Omega}, \tilde{\mathcal{F}}, \tilde{\mathbb{P}})$ the completion of their product $\otimes_{i=1}^{N} (\tilde{\Omega}_{i}, \tilde{\mathcal{F}}_{i}, \tilde{\mathbb{P}}_{i})$. We shall work on the space $(\Omega, \mathcal{F}, \mathbb{P})$ defined as the completion of the product space $(\Omega^{\noll} \times \tilde{\Omega}, \mathcal{F}^{\noll} \otimes \tilde{\mathcal{F}}, \mathbb{P}^{\noll} \otimes \tilde{\mathbb{P}})$.

We define the market subfiltration $\mathbb{F}^M = (\mathcal{F}^M_t)_{t\geq0}$ as the usual augmentation of the filtration $(\mathcal{F}_t^{\noll}\otimes\sigma(\{\emptyset,\tilde{\Omega}\}))_{t\geq0}$.\footnote{The ``usual augmentation'' is the augmentation of the right-continuous version. In this case, $(\mathcal{F}_t^{\noll}\otimes\sigma(\{\emptyset,\tilde{\Omega}\}))_{t\geq0}$ is already right-continuous, so the usual augmentation is just the augmentation by nullsets.} It represents the ``public'' information available to all traders. 
By contrast, agent~$i$'s subfiltration $\mathbb{F}^i=(\mathcal{F}_t^i)_{t \geq 0}$ is defined as the usual augmentation of $(\mathcal{F}_t^{\noll} \otimes \sigma(\tilde{\pi}_{i}))_{t \geq 0}$, where $\tilde{\pi}_{i}:\tilde{\Omega}\to\tilde{\Omega}_i$ is the canonical projection $(\tilde{\omega}_1,\dots,\tilde{\omega}_N)\mapsto \tilde{\omega}_i$.\footnote{$\mathcal{F}_t^{\noll} \otimes \sigma(\tilde{\pi}_{i})$ is a short way of writing $\mathcal{F}_t^{\noll} \otimes \mathcal{E}_{1} \otimes \cdots \otimes\mathcal{E}_{i-1}\otimes\tilde{\mathcal{F}}_{i}\otimes\mathcal{E}_{i+1} \otimes \cdots \otimes\mathcal{E}_{N}$ where $\mathcal{E}_{j}=\sigma(\{\emptyset,\tilde{\Omega}_{j}\})$.} We can think of $\tilde{\pi}_{i}$ as a random number generator whose realization can be an input of agent~$i$'s strategy. 

\begin{notation}\label{notation:marginal.expectations}
  Let $U$ be an integrable random variable on \((\Omega, \mathcal{F},\mathbb{P})\). We denote by $\mathbb{E}\left[U\right]$ the expectation with respect to $\mathbb{P}$. Recall that $\Omega=\Omega^{\noll} \times \tilde{\Omega}$ and  $\mathbb{P}=\mathbb{P}^{\noll} \otimes \tilde{\mathbb{P}}$. We denote by $\tilde{\mathbb{E}}[U]$ the marginal expectation with respect to the second factor,
  \begin{align*}
    \tilde{\mathbb{E}}[U](\omega^{\noll}) &= \int_{\tilde{\Omega}} U(\omega^{\noll},\tilde{\omega}) \tilde{\mathbb{P}}(d\tilde{\omega}).
  \end{align*}
  (We emphasize that $\tilde{\mathbb{E}}[U]$ is a random variable, not a constant.) 
  When $X=(X_t)$ is a process, we abuse notation and write $\tilde{\mathbb{E}}[X]$ for the process $t\mapsto \tilde{\mathbb{E}}[X_t]$.
\end{notation}

 Now that the probability space has been introduced, we can define randomized strategies, the randomization being reflected by the dependence on $\tilde{\omega}_i\in \tilde{\Omega}_i$.

\begin{definition}\label{def:admissible.X}
    We say that a process $X^{i}=(X^{i}_t)_{t\geq0}$ on $\Omega$ is an \emph{admissible}  strategy for trader $i$ if
    \begin{enumerate}
        \item $X^{i}$ is c\`adl\`ag and $\mathbb{F}^i$-predictable,
        \item $X^{i}_{0-}=x^{i}$ and $X^{i}_t$ is constant for $t\geq T$,
        \item the paths $t\mapsto X^{i}_t$ have ($\mathbb{P}$-essentially) bounded total variation.
    \end{enumerate}
    We say that $X$ is a \emph{non-randomized} admissible strategy for trader $i$ if it is also $\mathbb{F}^M$-predictable, whereas it is \emph{strictly randomized} if that is not the case.
\end{definition}

\begin{remark}
    Although we introduce randomization via a private variable sampled at time~$0$, the auxiliary space can be chosen to be arbitrarily rich (for instance, it may encode a sequence of i.i.d.\ uniforms or be function-valued). As a result, our framework still allows traders to randomize their actions in a time-varying manner.    
\end{remark}

\subsection{Price Impact}\label{se:impact}

We assume that the unaffected asset price---the price that would obtain if the $N$ agents did not trade---evolves according to a c\`adl\`ag square-integrable $\mathbb{F}^M$-martingale $P=(P_t)_{t\in[0,T]}$. Note that since the traders' randomization devices are independent of $P$, it remains a martingale in the (larger) private filtrations $\mathbb{F}^i$, $i=1,\dots,N$. To describe the actual, ``affected'' price, we use a general impact decay kernel~$G$; see \cite{GatheralSchied.13} for background and references. For ease of exposition, we first consider bounded kernels---thus avoiding pesky integrability issues---and defer the extension of our results to singular (unbounded) kernels to \cref{se:singular}. Thus, we first impose the following assumption.

\begin{assumption}\label{as:kernel}
  The impact decay kernel $G:\mathbb{R}_+\to \mathbb{R}_+$ is continuous and strictly positive definite in the sense of Bochner; that is, 
  \[
  \int_0^\infty\!\!\int_0^\infty G(|t-s|) dX_s dX_t >0
  \]
  whenever $X:\mathbb{R}_+\to \mathbb{R}$ is a c\`adl\`ag function of finite variation that is not constant.
\end{assumption}

Usually $G$ is chosen to be non-increasing and convex. It is shown in \cite[Proposition~2]{AlfonsiSchiedSlynko.12} that if $G:\mathbb{R}_+\to \mathbb{R}_+$ is non-increasing, convex, and not constant, then the induced kernel is strictly positive definite. Two important examples are the exponential kernel $G(t)=\eta e^{-\lambda t}$ and the truncated power law kernel $G(t)=\eta (1+\lambda t)^{-\gamma}$, where $\eta, \lambda, \gamma>0$ are parameters.

Given a profile $\boldsymbol{X}=(X^{1},\dots,X^{N})$ of admissible strategies for traders $1,\dots,N$, we define the impact process $I=(I_t)_{t\geq0}$ and the affected price $S=(S_t)_{t\geq0}$ by
\begin{equation}\label{eq:def.I.and.S}
    I_t = \int_0^tG(t-s) \sum_{i=1}^NdX_s^{i}, \qquad S_t = P_t + I_t.
\end{equation}
Here and throughout the paper, $\int_{a}^{b}:=\int_{[a,b]}$, meaning that jumps of the integrator at $a$ and $b$ contribute to the integral, whereas 
$\int_{a}^{b-}:=\int_{[a,b)}$ excludes the jump at~$b$. 
Moreover, the ``a.s.'' qualifier is often suppressed.

We define the impact cost associated with $\boldsymbol{X} = (X^{1},\dots,X^{N})$ as follows. For trader $i$, the net proceeds from trading are
\begin{equation}\label{eq:OWcost}\int_0^T S_{t-} dX_t^{i}+\frac{1}{2}\sum_{t\in[0,T]} \Delta S_t  \Delta X_t^{i},
\end{equation}
where $\Delta X_t^{i} :=X^{i}_t - X_{t-}^{i}$. 
Thus continuous trading at time $t$ transacts at the price $S_{t-}$ while a block trade of size $\Delta X_t^{i}$ has an execution price of $S_{t-} +\frac{1}{2}\Delta S_t$. 
This means that trader $i$ obtains the average execution price of all trades happening at~$t$, and is equivalent to randomizing the order in which simultaneous block trades are executed across agents (see, e.g.,~\cite{CampbellNutz.25a} for a more detailed discussion).

Because our problem formulation allows for incomplete liquidation at time~$T$, we must add to the execution costs the change in the marked-to-market value of the holdings over $[0,T]$. This change is
\begin{equation}\label{eq:markToMarket}
    X_{0-}^iP_{0-}-X_T^iP_T
\end{equation}
where we use the unaffected\footnote{This is standard in the literature; see, e.g., \cite{NeumanVoss.22,NeumanVoss.23}. One reason is to ensure that the objective function is convex, which can fail if the reference price is instead taken to be $S_T$.} price $P$ for inventory valuation. If the terminal inventory is constrained to be zero, this accounting amounts to adding a constant to the trader's cost, hence has no effect on their optimal strategy.

\subsection{Additional Trading Costs and Penalizations}

Finally, we introduce additional trading and terminal costs $C(X^i)$. Our exposition allows for general \emph{convex} costs $C$ mapping real-valued c\`adl\`ag functions (realizations of admissible strategies) to $[0,\infty]$. We interpret $C(X)$ pathwise as $C(X)(\omega):=C(t\mapsto X_t(\omega))$ and we assume that $C(X)$ is $\mathcal F$-measurable for all admissible $X$ so that expectations are well defined. The uniqueness result holds in that generality. The results on randomized strategies require an additional Jensen-type inequality (\cref{as:C.no.randomized}) which will be discussed in \cref{se:no.randomized}.

Examples for~$C$ include the quadratic costs often used in the literature. Specifically, quadratic (including zero) trading costs on the trading rate, quadratic or infinite costs on terminal inventory (the latter meaning that the complete liquidation is enforced), and/or quadratic costs on block trades as in \cite{CampbellNutz.25a,SchiedStrehleZhang.17}. 

\begin{example}\label{ex:costsC}
Denoting by $\dot{X}^{i}_t$ the time derivative of $X^{i}_{t}$ (if it exists), let
 \begin{equation} \label{eq:addl.cost.A} 
    C(X^{i}):=
		\frac{\varepsilon}{2}\int_0^T (\dot{X}^{i}_t)^2dt 
		+\frac{1}{2}\sum_{t\in[0,T]} \vartheta_t(\Delta X_t^{i})^2
		+\frac{\varphi}{2} (X_T^{i})^2,
\end{equation}
where $\varepsilon, \vartheta_t \in [0,\infty)$ and $\varphi\in[0,\infty]$. The block cost $\vartheta_t$ is bounded and measurable in $t$. If $\varepsilon>0$, the formula is interpreted as $C(X^{i})=\infty$ if $\dot{X}^{i}$ does not exist, whereas for $\varepsilon=0$ the first term is dropped and nothing is assumed about differentiability. This example includes quadratic (or zero) costs on the trading rate, quadratic (or zero) costs on block trades, quadratic costs on terminal inventory, or a full liquidation constraint at $T$ (for $\varphi=\infty$).
\end{example} 

It is clear that quadratic costs $C$ as in \cref{ex:costsC} are convex. We will verify in \cref{le:C.no.randomized.holds.for.quadratic} that the Jensen-type inequality of \cref{as:C.no.randomized} is also satisfied. A slightly different class, also satisfying our assumptions, are discrete-time models as in \cite{SchiedStrehleZhang.17, Strehle.17}. These can be embedded by a cost~$C$ that charges block costs as in \cref{ex:costsC} on a given grid $\mathbb{T}:=\{0=t_0<t_1<\dots<t_n=T\}$ of trading dates but is infinite for any strategy acting outside the grid (more precisely, the support of its total variation measure is not contained in $\mathbb{T}$).

\subsection{Objective Function and Nash Equilibrium}
 
Combining the additional cost $C$ with the impact cost~\eqref{eq:OWcost} and the value~\eqref{eq:markToMarket} of the terminal inventory, trader $i$ has the following objective function if we fix the actions $\boldsymbol{X}^{-i}=(X^{1},\dots,X^{i-1},X^{i+1},\dots,X^{N})$ of the other players.

\begin{definition}\label{de:objective}
  Given a profile $\boldsymbol{X}=(X^{1},\dots,X^{N})$ of admissible strategies, the objective of trader $i$ is to minimize
\begin{align}\label{eqn:orig.obj} J(X^{i};&\boldsymbol{X}^{-i})
=\mathbb{E}\left[\int_0^T S_{t-} dX_t^{i}+\frac{1}{2}\sum_{t\in[0,T]} \Delta S_t \Delta X_t ^{i} +(X_{0-}^iP_{0-}-X_T^iP_T)+C(X^{i})\right].
\end{align}
\end{definition} 

The next proposition uses the martingale property of the unaffected price and provides a more explicit formula for the objective function.

\begin{proposition}\label{prop:obj.func.rep}
    Given a profile $\boldsymbol{X}=(X^{1},\dots,X^{N})$ of admissible strategies, the objective function $J(X^{i};\boldsymbol{X}^{-i})$ can be expressed as
    \begin{align}\label{eqn:obj.impact.rep}
    J(X^{i};\boldsymbol{X}^{-i})&=\mathbb{E}\left[\int_0^T I_{t-} dX_t^{i}+\frac{1}{2}\sum_{t\in[0,T]}  \Delta I_t \Delta X_t ^{i}+C(X^{i})\right]\\
    &=\mathbb{E}\Bigg[\frac{1}{2}\int_0^T\!\!\int_0^T G(|t-s|)dX_s^{i} dX_t^{i} + \int_0^T\!\!\int_0^{t-} G(t-s) \sum_{j\not=i}dX_s^{j} dX_t^{i} \nonumber \\
    &\quad \quad \quad \quad + \frac{G(0)}{2}\sum_{j\not=i}\sum_{t\in[0,T]} \Delta X_t^{j}\Delta X_t ^{i} +C(X^{i})\Bigg].\label{eqn:obj.schied.rep}
\end{align}
In particular, $J(X^{i};\boldsymbol{X}^{-i})<\infty$ as soon as $\mathbb{E}[C(X^i)]<\infty$.
\end{proposition}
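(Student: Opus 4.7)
The plan is to substitute $S = P + I$ in \eqref{eqn:orig.obj} and use the martingale property of $P$ to wipe out every term involving $P$ (including the mark-to-market piece $X_{0-}^i P_{0-} - X_T^i P_T$); this yields \eqref{eqn:obj.impact.rep}. A Fubini/symmetrization argument on the double integral then gives \eqref{eqn:obj.schied.rep}.

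For the first identity, I would apply integration by parts to the predictable finite-variation process $X^i$ against the c\`adl\`ag square-integrable martingale $P$:
\[
\int_0^T P_{t-}\,dX_t^i = X_T^i P_T - X_{0-}^i P_{0-} - \int_0^T X_{t-}^i\,dP_t - \sum_{t\in[0,T]}\Delta X_t^i\,\Delta P_t,
\]
where $[X^i,P]_T=\sum\Delta X_t^i\Delta P_t$ because $X^i$ has finite variation. Combined with $\Delta S_t = \Delta P_t + \Delta I_t$, this reduces the $P$-dependent part of \eqref{eqn:orig.obj} to a linear combination of $\int_0^T X_{t-}^i\,dP_t$ and $\sum_{t\in[0,T]}\Delta X_t^i\,\Delta P_t$. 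The stochastic integral is a true martingale (bounded predictable integrand, square-integrable $P$), hence has zero expectation. The jump sum likewise vanishes in expectation: since $X^i$ is $\mathbb{F}^i$-predictable and $P$ remains a martingale in $\mathbb{F}^i$ (the randomization devices being independent of $\mathbb{F}^{\noll}$), the jumps of $X^i$ occur only at predictable stopping times $T_n$, and $\mathbb{E}[\Delta P_{T_n}\mid\mathcal{F}_{T_n-}^i]=0$ handles each term by iterated conditioning.

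For the second identity, I would plug $I_{t-}=\int_0^{t-}G(t-s)\sum_j dX_s^j$ and $\Delta I_t = G(0)\sum_j\Delta X_t^j$ into \eqref{eqn:obj.impact.rep}, splitting each $j$-sum into the self-index $j=i$ and the cross-indices $j\neq i$. The cross-contributions match those in \eqref{eqn:obj.schied.rep} directly. For the self-impact piece, I would symmetrize the Lebesgue--Stieltjes double integral: decompose $[0,T]^2$ into its strict lower triangle, diagonal, and strict upper triangle; use the symmetry $G(|t-s|)=G(|s-t|)$ via Fubini to equate the two open triangles, and note that the diagonal carries mass $G(0)\sum_t(\Delta X_t^i)^2$ (only joint jumps contribute there). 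Collecting terms identifies the self-impact piece with $\tfrac12\int_0^T\!\int_0^T G(|t-s|)\,dX_s^i\,dX_t^i$, as required.

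The hard part is really just bookkeeping, namely verifying that each stochastic integral is a true (not merely local) martingale and that the Fubini interchange is legal. Admissibility provides what is needed: $X^i$ has $\mathbb{P}$-essentially bounded total variation and $G$ is continuous hence bounded on $[0,T]$ by \cref{as:kernel}, so $X_{t-}^i$ is bounded and the pathwise double integrals are uniformly bounded; together with $P$ square-integrable this gives all the required integrability, and in particular $J(X^i;\boldsymbol{X}^{-i})<\infty$ whenever $\mathbb{E}[C(X^i)]<\infty$.
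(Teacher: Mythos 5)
Your proposal is correct and follows essentially the same route as the paper: substitute $S=P+I$, integrate $\int_0^T P_{t-}\,dX_t^i$ by parts, kill the $P$-terms in expectation via the martingale property (the paper handles the jump sum by identifying $\sum_t\Delta X_t^i\Delta P_t=[X^i,P]_T=\int_0^T\Delta X_s^i\,dP_s$ as a true martingale, whereas you condition at the predictable jump times of $X^i$ --- two phrasings of the same fact), and then obtain \eqref{eqn:obj.schied.rep} by the identical Fubini symmetrization of the self-impact double integral with the diagonal contributing $G(0)\sum_t(\Delta X_t^i)^2$. The integrability bookkeeping you invoke (bounded $G$ on $[0,T]$, essentially bounded total variation, square-integrable $P$) is exactly what the paper uses.
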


Finally, the definition of Nash equilibrium is the usual one.

\begin{definition} \label{def:nash.equilibrium}
  A profile $\boldsymbol{X}=(X^{1},\dots,X^{N})$ of admissible strategies is a \emph{Nash equilibrium} if $J(X^{i};\boldsymbol{X}^{-i})<\infty$ and
\begin{align*}%
      J(Z^i;\boldsymbol{X}^{-i})\geq J(X^{i};\boldsymbol{X}^{-i})
    \end{align*} 
for all admissible strategies $Z^i$ of trader~$i$, for all $i=1,\dots,N$. %
\end{definition}

\begin{remark}\label{re:info}
We emphasize that, like the previous literature, we start with a full information setup where the unaffected price is known to agents. Under the martingale assumption, the unaffected price ultimately drops out of the objective (see \cref{prop:obj.func.rep}), and hence the strategies do not depend on it.

In our randomized extension, each agent observes the unaffected price together with their own randomization device, but not the competitors' randomization. Consequently, the aggregate order flow, and hence the impact process and affected price, need not be measurable with respect to an individual agent's filtration. For comparison, one may consider an alternative model in which agents' information sets additionally include the affected price (equivalently, the impact process, since the unaffected price is also observed). In that case the objective is expressed in terms of observable quantities (cf.\ \cref{de:objective}), and the game effectively reduces to a common-filtration (full-information) execution game. It is therefore unsurprising that allowing such randomization should not resolve the equilibrium non-existence phenomena established for the corresponding full-information models.

In real markets, the affected price is observed, whereas the unaffected price is not. When the unaffected price is modeled as a martingale but remains latent, observing the affected price may provide partial information about the impact process through filtering, so agents can in principle infer some information about competitors' actions even in the presence of randomization. In the (realistic) regime where the unaffected price is quite noisy and price impact is fairly small relative to that noise, this inference would be quite limited.\footnote{In reality, traders might gain additional information from various sources ranging from observing fills of their own orders in various venues to trade volume data.}  Our present setup, in which the affected price is not measurable in the agents' filtrations, can be interpreted as an idealized high-noise limit in which observing the affected price provides negligible incremental information about impact. We leave a rigorous treatment of partial-information extensions to future work. Results in that direction remain scarce; see \cite{MoallemiParkVanRoy.12} for a model where all impact is permanent and \cite{CasgrainJaimungal.19,CasgrainJaimungal.20} for a mean field model.
\end{remark}

\section{Preliminaries on Projections}\label{se:projections}

The purpose of this section is to recall the properties of the predictable and dual predictable projection operators, and to identify them explicitly in our particular context. The takeaway is that when projecting an admissible strategy of some agent onto the market filtration or the filtration of another agent, the projection is obtained by integrating out agent~$i$'s idiosyncratic randomization. Formally, this corresponds to the rule $\mathbb{E}[f(X,Y)|X]=\mathbb{E}[f(x,Y)]|_{x=X}$ for computing the conditional expectation when $X,Y$ are independent random variables. For background on predictable and dual predictable projections, see for instance 
\cite[Sections~I.2d and I.3b]{JacodShiryaev.03} or \cite[Section VI.2]{DellacherieMeyer.82}. The following two theorems recall the basic properties.

\begin{theorem}[predictable projection]\label{thm:pred.proj}
  Let $Z$ be a bounded $\cB(\R_{+})\otimes\mathcal{F}$-measurable process and let $\mathbb{G}=(\mathcal{G}_t)_{t\geq0}$ be a filtration on $(\Omega,\mathcal{F},\mathbb{P})$ satisfying the usual conditions.
  There exists a unique (up evanescence) $\mathbb{G}$-predictable process ${}^{\rm p^\mathbb{G}} \! Z$, called the predictable projection of $Z$, such that
    \[\mathds{1}_{\{\tau < \infty\}}\,{}^{\rm p^\mathbb{G}} \! Z_\tau=\mathbb{E}[\mathds{1}_{\{\tau < \infty\}}Z_\tau\,\vert\mathcal{G}_{\tau-}]\] \\\ \text{a.s.}
  for every $\mathbb{G}$-predictable stopping time $\tau$.
\end{theorem}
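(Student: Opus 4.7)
This statement is a classical result from the general theory of processes (see \cite{JacodShiryaev.03, DellacherieMeyer.82}), stated here purely to fix notation for the projection operators used later. Accordingly, the plan is to sketch the standard monotone class proof rather than devise anything new: establish existence on a generating family of ``elementary'' processes via martingale theory, extend by a functional monotone class argument to all bounded measurable $Z$, and then handle uniqueness separately via the predictable section theorem.

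For existence, I would first consider $Z_t(\omega) = H(\omega)\mathds{1}_{(a,b]}(t)$ with $H\in L^{\infty}(\mathcal{F})$. Let $M$ be a c\`adl\`ag version of the bounded $\mathbb{G}$-martingale $t\mapsto \mathbb{E}[H\mid \mathcal{G}_t]$; then $M_-$ is $\mathbb{G}$-predictable, so the candidate ${}^{\rm p^\mathbb{G}} \! Z_t := M_{t-}\mathds{1}_{(a,b]}(t)$ is $\mathbb{G}$-predictable as well. To verify the defining identity, fix a $\mathbb{G}$-predictable stopping time $\tau$ and apply optional sampling \emph{at a predictable time} to get $M_{\tau-}=\mathbb{E}[H\mid\mathcal{G}_{\tau-}]$ on $\{\tau<\infty\}$; multiplying by the $\mathcal{G}_{\tau-}$-measurable indicator $\mathds{1}_{\{a<\tau\le b\}}$ yields precisely the required identity. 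Linearity extends this to finite linear combinations, and the class of $Z$ admitting a predictable projection satisfying the required identity is closed under bounded monotone limits (apply dominated convergence on both sides of the identity). The functional monotone class theorem then delivers existence for every bounded $\cB(\R_+)\otimes\mathcal{F}$-measurable $Z$.

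For uniqueness, suppose $Y$ and $Y'$ are two $\mathbb{G}$-predictable processes both satisfying the defining identity. The random set $D=\{Y\neq Y'\}$ is then $\mathbb{G}$-predictable, and if it were not evanescent, the predictable section theorem (see e.g.\ \cite[Theorem~I.2.18]{JacodShiryaev.03}) would produce a $\mathbb{G}$-predictable stopping time $\tau$ with $\mathbb{P}(\tau<\infty)>0$ whose graph is contained in $D$. But then $Y_\tau\neq Y'_\tau$ on a set of positive probability, contradicting the fact that both $Y_\tau$ and $Y'_\tau$ coincide with $\mathbb{E}[\mathds{1}_{\{\tau<\infty\}}Z_\tau\mid\mathcal{G}_{\tau-}]$ on $\{\tau<\infty\}$. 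Hence $D$ is evanescent and ${}^{\rm p^\mathbb{G}} \! Z$ is unique up to evanescence.

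The hard part is not the monotone class scaffolding but the three classical inputs it rests on: existence of c\`adl\`ag modifications of martingales under the usual conditions, optional sampling at predictable times (which must be applied carefully to obtain conditioning on $\mathcal{G}_{\tau-}$ rather than $\mathcal{G}_\tau$), and especially the predictable section theorem for uniqueness, whose proof uses non-trivial capacity-theoretic arguments. Since the authors invoke the theorem only to set up the projection operators used in later sections, the appropriate treatment is to cite \cite{JacodShiryaev.03, DellacherieMeyer.82} rather than reproduce these substantial classical proofs.
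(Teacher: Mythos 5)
Your proposal is correct and matches the paper's treatment: \cref{thm:pred.proj} is stated only to recall a classical result, and the paper gives no proof, instead citing \cite{JacodShiryaev.03, DellacherieMeyer.82} exactly as you suggest. Your sketch of the standard argument (elementary processes $H\mathds{1}_{(a,b]}$ with candidate $M_{t-}\mathds{1}_{(a,b]}$, optional sampling at predictable times, monotone class, and uniqueness via the predictable section theorem) is the textbook proof and is accurate.
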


\begin{theorem}[dual predictable projection]\label{thm:dual.pred.proj}
  Let $Z$ be a $\cB(\R_{+})\otimes\mathcal{F}$-measurable process of bounded variation\footnote{Since we are working with a non-trivial initial $\sigma$-field $\mathcal {F}_{0-}$, let us specify that bounded variation processes are understood to have bounded total variation \emph{and} bounded initial value.} and let $\mathbb{G}=(\mathcal{G}_t)_{t\geq0}$ be a filtration on $(\Omega,\mathcal{F},\mathbb{P})$ satisfying the usual conditions. There exists a unique (up evanescence) $\mathbb{G}$-predictable process $Z^{\rm p^\mathbb{G}}$ with bounded variation, called the dual predictable projection of $Z$, such that
    \[\mathbb{E}\left[\int_0^\infty\xi_s dZ^{\rm p^\mathbb{G}}_s\right]=\mathbb{E}\left[\int_0^\infty \xi_s dZ_s\right]\]
    for all bounded $\mathbb{G}$-predictable processes $\xi$. Moreover,
  \begin{equation}\label{eq:dual.predictable.proj.properties}
  \mathbb{E}\left[\int_0^\infty\xi_s dZ^{\rm p^\mathbb{G}}_s\right]=\mathbb{E}\left[\int_0^\infty {}^{\rm p^{\mathbb{G}}}\!\xi_s dZ_s\right]=\mathbb{E}\left[\int_0^\infty {}^{\rm p^{\mathbb{G}}}\!\xi_s dZ_s^{\rm p^\mathbb{G}}\right]
  \end{equation}
for all bounded measurable processes $\xi$.
\end{theorem}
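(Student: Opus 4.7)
The plan is to prove uniqueness via the Doléans correspondence, establish existence via Doléans' representation theorem, and then derive \eqref{eq:dual.predictable.proj.properties} from the defining property combined with a monotone class argument. For \emph{uniqueness}, I would note that if two $\mathbb{G}$-predictable BV processes $A,B$ satisfy $\mathbb{E}[\int \xi\,d(A-B)]=0$ for every bounded $\mathbb{G}$-predictable $\xi$, they induce the same bounded signed measure on the $\mathbb{G}$-predictable $\sigma$-algebra $\cP(\mathbb{G})$; the bijection between such measures and equivalence classes of $\mathbb{G}$-predictable BV processes (up to indistinguishability) then forces $A$ and $B$ to be indistinguishable.

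For \emph{existence}, it suffices (by Jordan decomposition) to treat $Z$ bounded and increasing. I would define the positive bounded measure $\mu_Z$ on $\cB(\R_+)\otimes\mathcal{F}$ by $\mu_Z(H) = \mathbb{E}[\int H_s\,dZ_s]$. Since $\mu_Z$ does not charge $\mathbb{P}$-evanescent sets, Doléans' representation theorem yields a unique (up to indistinguishability) $\mathbb{G}$-predictable increasing process $Z^{\rm p^\mathbb{G}}$ representing $\mu_Z$ restricted to $\cP(\mathbb{G})$, i.e., $\mathbb{E}[\int \xi\,dZ^{\rm p^\mathbb{G}}] = \mathbb{E}[\int \xi\,dZ]$ for every bounded $\mathbb{G}$-predictable $\xi$. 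The second equality in \eqref{eq:dual.predictable.proj.properties} is then immediate upon applying this defining property to the bounded $\mathbb{G}$-predictable process ${}^{\rm p^{\mathbb{G}}}\!\xi$.

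For the first equality, a monotone class argument reduces matters to $\xi_s(\omega)=\mathbf{1}_{(u,v]}(s)\,Y(\omega)$ with $Y$ bounded $\mathcal{F}$-measurable; writing $M_s=\mathbb{E}[Y\mid\cG_s]$ for the càdlàg closing martingale, one has ${}^{\rm p^{\mathbb{G}}}\!\xi_s=\mathbf{1}_{(u,v]}(s)\,M_{s-}$, and the target identity reduces to $\mathbb{E}[Y(Z^{\rm p^\mathbb{G}}_v - Z^{\rm p^\mathbb{G}}_u)] = \mathbb{E}[\int_u^v M_{s-}\,dZ_s]$. I would deduce this by applying the defining property to the predictable process $\mathbf{1}_{(u,v]}M_{-}$ and combining with the identity $\mathbb{E}[\int (Y-M_{s-})\,dZ^{\rm p^\mathbb{G}}_s]=0$, which follows by integration by parts from $Z^{\rm p^\mathbb{G}}$ being $\mathbb{G}$-predictable BV and $M$ being a $\mathbb{G}$-martingale closing $Y$. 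The main obstacle is Doléans' representation theorem itself---the non-trivial measure-theoretic fact that any bounded measure on $\cP(\mathbb{G})$ not charging evanescent sets arises from a $\mathbb{G}$-predictable BV process---for which I would cite \cite{DellacherieMeyer.82} or \cite{JacodShiryaev.03} rather than reprove it; the remaining steps are essentially bookkeeping around this result and the predictable projection from \cref{thm:pred.proj}.
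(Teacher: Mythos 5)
This is a classical result that the paper does not prove at all: it is stated as a recalled fact with citations to \cite[Section I.3b]{JacodShiryaev.03} and \cite[Section VI.2]{DellacherieMeyer.82}, so there is no in-paper argument to compare against. Your sketch is the correct standard textbook proof and ultimately rests on the same source the paper cites, since the one genuinely non-trivial ingredient (Dol\'eans' representation theorem) is invoked rather than reproved; the uniqueness argument via the Dol\'eans-measure correspondence, the reduction to increasing $Z$, the derivation of the second identity in \eqref{eq:dual.predictable.proj.properties} by substituting ${}^{\rm p^{\mathbb{G}}}\!\xi$ into the defining property, and the first identity via the monotone class reduction to $\xi=\mathbf{1}_{(u,v]}Y$ with ${}^{\rm p^{\mathbb{G}}}\!\xi=\mathbf{1}_{(u,v]}M_{-}$ and the lemma $\mathbb{E}[\int M_{s-}\,dA_s]=\mathbb{E}[M_\infty A_\infty]$ for $\mathbb{G}$-predictable $A$ are all sound. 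Two small bookkeeping points you should make explicit if writing this out in full: the pathwise Jordan decomposition of a raw measurable BV process must be checked to be jointly measurable, and, given the paper's convention that $\int_0^\infty=\int_{[0,\infty)}$ together with the footnote about the non-trivial initial $\sigma$-field $\mathcal{F}_{0-}$, the generating class in your monotone class step must include the time-$0$ atoms $\{0\}\times A$ (and the initial value), not only the stochastic intervals $(u,v]$.
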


\begin{notation}\label{notation:projections}
  For brevity, the (dual) predictable projection onto the market filtration $\mathbb{F}^M$ will be denoted by ${\rm p}$, whereas the (dual) predictable projection onto trader $i$'s filtration $\mathbb{F}^i$ will be denoted by ${\rm p}^i$,
  \[
    {\rm p}:={\rm p}^{\mathbb{F}^M}, \qquad {\rm p}^i := {\rm p}^{\mathbb{F}^i}.
  \]
\end{notation}

The next two propositions characterize the projections of an admissible strategy of some agent to (a) the market filtration $\mathbb{F}^M$ and (b) the filtration of another agent. In either case, projecting amounts to integrating out the idiosyncratic randomization (see \cref{notation:marginal.expectations}).

\begin{proposition}\label{prop:pred.is.exp}
    Fix $i\in\{1,\dots,N\}$ and let $Z$ be a c\`adl\`ag $\mathbb{F}^i$-predictable process of bounded variation. Then, up to evanescence,
    \[{}^{\rm p}\! Z = Z^{\rm p} = \tilde{\mathbb{E}}[Z].\footnote{In fact, the bounded variation is used only for the dual predictable projection. The identity ${}^{\rm p}\!Z = \tilde{\mathbb{E}}[Z]$ holds for arbitrary bounded $\mathbb{F}^i$-predictable processes~$Z$.}\]
\end{proposition}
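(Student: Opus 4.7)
The plan is to verify that $Y_t := \tilde{\mathbb{E}}[Z_t]$ satisfies the defining properties of both the predictable and the dual predictable projection onto $\mathbb{F}^M$. The guiding principle is that, by the product construction of $(\Omega,\mathcal{F},\mathbb{P})$, the randomization device $\tilde{\pi}_i$ is independent of the market filtration $\mathbb{F}^M$, so conditioning on $\mathbb{F}^M$ amounts to integrating out $\tilde{\omega}$. First I would check that $Y$ is $\mathbb{F}^M$-predictable via a monotone class argument on generators of the $\mathbb{F}^i$-predictable $\sigma$-field: for $Z_t = H\,\mathbf{1}_{(a,b]}(t)$ with $H$ bounded and $\mathcal{F}^i_a$-measurable, Fubini gives $\tilde{\mathbb{E}}[Z_t] = \tilde{\mathbb{E}}[H]\,\mathbf{1}_{(a,b]}(t)$, where $\tilde{\mathbb{E}}[H]$ has an $\mathcal{F}^{\noll}_a$-measurable version because $\mathcal{F}^i_a = \mathcal{F}^{\noll}_a \otimes \sigma(\tilde{\pi}_i)$ (up to null sets) and $\tilde{\mathbb{P}}$ is a product measure. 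Left-continuity and $\mathbb{F}^M$-adaptedness then yield $\mathbb{F}^M$-predictability, and the monotone class theorem propagates this to all bounded $\mathbb{F}^i$-predictable $Z$ (hence to the bounded-variation case at hand by truncation).

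To identify $Y$ as ${}^{\rm p}Z$, I would verify the characterization in \cref{thm:pred.proj}: for every $\mathbb{F}^M$-predictable stopping time $\tau$,
\[
\mathbb{E}\bigl[\mathbf{1}_{\{\tau<\infty\}} Z_\tau \,\big|\, \mathcal{F}^M_{\tau-}\bigr] = \mathbf{1}_{\{\tau<\infty\}} Y_\tau \quad \text{a.s.}
\]
Because $\tau$ and every $\mathcal{F}^M_{\tau-}$-measurable random variable depend (up to null sets) only on $\omega^{\noll}$, while $(\omega^{\noll},\tilde{\omega})\mapsto Z_\tau$ is jointly measurable, Fubini together with the independence of $\tilde{\mathbb{P}}$ from $\mathbb{F}^M$ reduces the left-hand side to $\mathbf{1}_{\{\tau<\infty\}}\tilde{\mathbb{E}}[Z_\tau]=\mathbf{1}_{\{\tau<\infty\}}Y_\tau$, and uniqueness in \cref{thm:pred.proj} forces ${}^{\rm p}Z = Y$. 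For $Z^{\rm p} = Y$, note that $Y$ inherits bounded variation path-by-path, since the total variation satisfies $|dY|(\omega^{\noll}) \le \tilde{\mathbb{E}}[|dZ|](\omega^{\noll})$. For any bounded $\mathbb{F}^M$-predictable $\xi$, Fubini applied to the Stieltjes integral (with dominating kernel $|dZ|$, bounded uniformly in $\tilde{\omega}$) yields
\[
\mathbb{E}\!\left[\int_0^\infty \xi_s \,dZ_s\right] = \int \mathbb{P}^{\noll}(d\omega^{\noll})\int \xi_s(\omega^{\noll})\, d\tilde{\mathbb{E}}[Z]_s(\omega^{\noll}) = \mathbb{E}\!\left[\int_0^\infty \xi_s \,dY_s\right],
\]
which, together with the $\mathbb{F}^M$-predictability and bounded variation of $Y$, gives $Z^{\rm p} = Y$ by uniqueness in \cref{thm:dual.pred.proj}.

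The step I expect to require the most care is the Fubini exchange for the Stieltjes integral: one must verify that $\tilde{\omega}\mapsto \int_0^\infty \xi_s\, dZ_s(\omega^{\noll},\tilde{\omega})$ is $\tilde{\mathcal{F}}$-measurable and that its $\tilde{\mathbb{P}}$-integral equals $\int \xi\, dY$ for each $\omega^{\noll}$. I would establish this first for nonnegative elementary $\xi$ of the form $\mathbf{1}_{(a,b]}\mathbf{1}_A$ with $A\in \mathcal{F}^{\noll}_a$ (where both sides collapse to $\tilde{\mathbb{E}}[Z_b - Z_a]\mathbf{1}_A$), then extend by the monotone class theorem and linearity, using that $(|dZ|(\omega^{\noll},\tilde{\omega}))_{\tilde{\omega}}$ is a measurable kernel of finite measures thanks to the bounded-variation hypothesis. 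The footnote's stronger claim that ${}^{\rm p}Z = \tilde{\mathbb{E}}[Z]$ holds without the bounded-variation assumption is recovered because the predictable-projection argument above never invokes bounded variation.
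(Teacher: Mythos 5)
Your argument is correct, and the first half (the identity ${}^{\rm p}Z=\tilde{\mathbb{E}}[Z]$) follows essentially the same path as the paper: a monotone class argument on elementary predictable processes to get $\mathbb{F}^M$-predictability of $\tilde{\mathbb{E}}[Z]$, followed by verification of the conditional-expectation characterization at $\mathbb{F}^M$-predictable times, using that $\mathcal{F}^M_{\tau-}$-measurable quantities depend only on $\omega^{\noll}$ up to null sets. The two proofs diverge on the dual projection. You verify the defining duality relation of \cref{thm:dual.pred.proj} head-on: you show $\mathbb{E}[\int\xi\,dZ]=\mathbb{E}[\int\xi\,d\tilde{\mathbb{E}}[Z]]$ for bounded $\mathbb{F}^M$-predictable $\xi$ by a Fubini interchange between the $\tilde{\mathbb{P}}$-integral and the pathwise Stieltjes integral, bootstrapped from elementary $\xi$ with domination by the total-variation kernel. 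The paper instead invokes the Dellacherie--Meyer criterion (their Theorem VI.75(b)) that two integrable-variation processes share a dual projection iff the conditional expectations $\mathbb{E}[Z_\infty-Z_t\,|\,\mathcal{F}^M_t]$ (and the initial values) agree, which reduces everything to a single conditional-expectation computation exploiting the product structure --- no interchange of Stieltjes and $\tilde{\mathbb{P}}$ integration is needed. Your route is more self-contained (it uses only the statement of \cref{thm:dual.pred.proj} already quoted in the paper) at the cost of the kernel-measurability and monotone-class bookkeeping you correctly flag as the delicate step; the paper's route outsources that work to a standard reference. Both correctly identify that bounded variation of $\tilde{\mathbb{E}}[Z]$ follows from the pointwise domination of total variations (the paper's \cref{lem:TV.ineq}), and both recover the footnote's observation that bounded variation is only needed for the dual projection.
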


\begin{proposition}
    \label{prop:pred.proj.private.info}
    Fix $i,j\in\{1,\dots,N\}$ with $i\not=j$ and let $Z$ be a c\`adl\`ag $\mathbb{F}^j$-predictable process of bounded variation. Then, up to evanescence,
    \[{}^{{\rm p}^i}\! Z = {}^{\rm p}\! Z = \tilde{\mathbb{E}}[Z].\]
\end{proposition}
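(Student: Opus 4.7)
The second equality $\,{}^{\rm p}\!Z = \tilde{\mathbb{E}}[Z]$ is immediate from \cref{prop:pred.is.exp} applied with index $j$ in place of $i$; the previous proposition is stated for a generic trader, so this substitution is legitimate, and $Z$ being $\mathbb{F}^j$-predictable of bounded variation is exactly the hypothesis needed there.

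The substantive task is to establish ${}^{{\rm p}^i}\!Z = \tilde{\mathbb{E}}[Z]$. As the ${\rm p}$-projection of $Z$, $\tilde{\mathbb{E}}[Z]$ is $\mathbb{F}^M$-predictable, and hence also $\mathbb{F}^i$-predictable because $\mathbb{F}^M\subseteq\mathbb{F}^i$. By \cref{thm:pred.proj}, identification with ${}^{{\rm p}^i}\!Z$ then reduces to verifying
\[\mathds{1}_{\{\tau<\infty\}}\,\tilde{\mathbb{E}}[Z]_\tau = \mathbb{E}\bigl[\mathds{1}_{\{\tau<\infty\}}\, Z_\tau \,\big\vert\, \mathcal{F}^i_{\tau-}\bigr] \quad \text{a.s.}\]
for every $\mathbb{F}^i$-predictable stopping time $\tau$.

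The crux is an independence-and-tower argument exploiting the product structure of $\Omega$. By construction, $\tilde{\pi}_j$ is independent of $\mathcal{F}^0_\infty$ and of $\tilde{\pi}_i$, hence of $\mathcal{F}^i_\infty = \sigma(\mathcal{F}^0_\infty,\tilde{\pi}_i)$ (up to null sets). Since $Z$ is $\mathbb{F}^j$-predictable and $\tau$ is $\mathbb{F}^i$-predictable, $Z_t$ depends on the randomization coordinate only through $\tilde{\pi}_j$ while $\tau$ depends only through $\tilde{\pi}_i$. The standard formula for conditional expectation under independence therefore yields
\[\mathbb{E}\bigl[\mathds{1}_{\{\tau<\infty\}}\, Z_\tau \,\big\vert\, \mathcal{F}^i_\infty\bigr] = \mathds{1}_{\{\tau<\infty\}}\, \tilde{\mathbb{E}}[Z]_\tau,\]
where the fact that $Z$ does not depend on the coordinates $\tilde{\pi}_k$ for $k\neq j$ lets us replace the $\tilde{\pi}_j$-marginal by the full $\tilde{\mathbb{P}}$-marginal. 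The right-hand side is already $\mathcal{F}^i_{\tau-}$-measurable because $\tilde{\mathbb{E}}[Z]$ is $\mathbb{F}^i$-predictable and $\tau$ is $\mathbb{F}^i$-predictable, so the tower property over $\mathcal{F}^i_{\tau-}\subseteq\mathcal{F}^i_\infty$ delivers the required identity.

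I do not expect a genuine obstacle; all the analytic content has effectively been absorbed into \cref{prop:pred.is.exp}. The only delicate point is keeping track of which randomization device lives in which $\sigma$-algebra, so that the independence formula for conditional expectation is invoked with the correct decomposition of $\Omega$.
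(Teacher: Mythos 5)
Your proposal is correct and follows essentially the same route as the paper: reduce to $\mathbb{F}^i$-predictability of $\tilde{\mathbb{E}}[Z]$ plus the conditional-expectation identity at $\mathbb{F}^i$-predictable times, then condition on $\mathcal{F}^i_\infty$ using the product structure (the paper writes this out via factorizations $\tau=\bar\tau(\omega^{0},\tilde\omega_i)$ and $Z_t=\bar Z(t,\omega^{0},\tilde\omega_j)$, which is exactly your independence argument made explicit), and finish with the tower property over $\mathcal{F}^i_{\tau-}\subseteq\mathcal{F}^i_\infty$. No gaps.
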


\section{Non-Existence of Randomized Equilibria}\label{se:no.randomized}

\Cref{prop:pred.is.exp} shows that the projection of a (possibly) randomized admissible strategy~$X^i$ onto the  market filtration is given by~$\tilde{\mathbb{E}}[X^i]$. We record that this de-randomized strategy is again admissible.

\begin{lemma}
    \label{lem:exp.admissible}
     Let $X^i$ be an admissible strategy for player $i$. Then $\tilde{\mathbb{E}}[X^i]$ is a non-randomized admissible strategy for player $i$.
\end{lemma}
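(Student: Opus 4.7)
The plan is to verify each of the three admissibility conditions from \cref{def:admissible.X} for the process $\tilde{\mathbb{E}}[X^i]$, and then use \cref{prop:pred.is.exp} to upgrade $\mathbb{F}^i$-predictability to $\mathbb{F}^M$-predictability. The crucial input is \cref{prop:pred.is.exp}, which identifies $\tilde{\mathbb{E}}[X^i]$ (up to evanescence) with the predictable projection ${}^{\rm p}X^i$; since the latter is $\mathbb{F}^M$-predictable by definition, we may choose a version of $\tilde{\mathbb{E}}[X^i]$ that is $\mathbb{F}^M$-predictable, which simultaneously gives the non-randomization claim.

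Next I would verify the endpoint conditions. The initial value is trivial: $\tilde{\mathbb{E}}[X^i_{0-}] = \tilde{\mathbb{E}}[x^i] = x^i$ since $x^i$ is deterministic. For the terminal constancy, the assumption $X^i_t = X^i_T$ for $t \geq T$ immediately gives $\tilde{\mathbb{E}}[X^i_t] = \tilde{\mathbb{E}}[X^i_T]$ for all such $t$. For the bounded variation condition, let $M$ be a deterministic essential bound on the total variation of the paths of $X^i$. For any partition $0 \leq t_0 < t_1 < \cdots < t_n$, applying the triangle inequality inside $\tilde{\mathbb{E}}$ and Fubini--Tonelli yields
\[
\sum_{k=1}^n \bigl|\tilde{\mathbb{E}}[X^i_{t_k}] - \tilde{\mathbb{E}}[X^i_{t_{k-1}}]\bigr| \leq \tilde{\mathbb{E}}\!\left[\sum_{k=1}^n |X^i_{t_k}-X^i_{t_{k-1}}|\right] \leq M \quad \mathbb{P}^{\noll}\text{-a.s.},
\]
so the paths of $\tilde{\mathbb{E}}[X^i]$ have total variation essentially bounded by $M$.

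For the càdlàg property, I would use that any bounded-variation path is bounded, so $|X^i_t| \leq |x^i|+M$ uniformly in $t$, a.s. Fix $\omega^{\noll}$ in a full-measure set on which, for $\tilde{\mathbb{P}}$-a.e.\ $\tilde{\omega}$, the path $t\mapsto X^i_t(\omega^{\noll},\tilde{\omega})$ is càdlàg and of variation at most $M$. For any sequence $s_n \downarrow t$, dominated convergence (with dominating constant $|x^i|+M$) gives $\tilde{\mathbb{E}}[X^i_{s_n}](\omega^{\noll}) \to \tilde{\mathbb{E}}[X^i_t](\omega^{\noll})$, and similarly for $s_n \uparrow t$ the left-limits exist and equal $\tilde{\mathbb{E}}[X^i_{t-}](\omega^{\noll})$. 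This shows that $\tilde{\mathbb{E}}[X^i]$ has a càdlàg modification.

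There is no serious obstacle here: each step is a routine application of Fubini and dominated convergence, with the main substantive content being the predictability upgrade provided by \cref{prop:pred.is.exp}. The only minor subtlety is that the version of $\tilde{\mathbb{E}}[X^i]$ chosen for predictability must coincide with the càdlàg and bounded-variation version; this is automatic because the redefinition occurs only on an evanescent set, which does not affect path properties.
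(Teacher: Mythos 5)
Your proof is correct and follows essentially the same route as the paper: the non-randomization claim comes from identifying $\tilde{\mathbb{E}}[X^i]$ with the predictable projection ${}^{\rm p}X^i$ via \cref{prop:pred.is.exp}, and the remaining admissibility conditions are checked directly, with your partition argument for the total variation bound being exactly the paper's \cref{lem:TV.ineq}. The paper leaves the c\`adl\`ag and endpoint verifications as ``straightforward''; your dominated-convergence argument is a valid way to fill in that detail.
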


Our first proposition shows that an agent is invariant between facing (possibly) randomized strategies of the competitors or their de-randomized projections. More formally, when fixing the strategy profile $\boldsymbol{X}^{-i}$ of the competitors, the expected performance of any strategy $X^i$ against $\boldsymbol{X}^{-i}$ is the same as its performance against the projection $\tilde{\mathbb{E}}[\boldsymbol{X}^{-i}]$ of their strategies onto the market information. %

\begin{proposition}\label{prop:reduced.rep.obj}
    For any admissible strategy profile $\boldsymbol{X}$ and any $i\in\{1,\dots, N\}$, the objective function $J$ satisfies
    \begin{align*}J(X^i;\boldsymbol{X}^{-i}) %
     =J(X^i;\tilde{\mathbb{E}}[\boldsymbol{X}^{-i}]).
    \end{align*}
\end{proposition}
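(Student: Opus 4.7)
The plan is to exploit the multilinear structure of the expanded objective \eqref{eqn:obj.schied.rep} together with the product structure of $(\Omega,\mathbb{P})$. Only the cross terms involving $dX^j\,dX^i$ and $\Delta X^j\,\Delta X^i$ for $j\neq i$ depend on $\boldsymbol{X}^{-i}$; the self-interaction of $X^i$ and the cost $C(X^i)$ are unchanged under the substitution $\boldsymbol{X}^{-i}\mapsto \tilde{\mathbb{E}}[\boldsymbol{X}^{-i}]$. Setting $M^j:=X^j - \tilde{\mathbb{E}}[X^j]$, multilinearity (one factor at a time, telescoping over $j\neq i$) reduces the claim to showing, for every $j\neq i$,
\begin{align*}
\mathbb{E}\left[\int_0^T\!\!\int_0^{t-} G(t-s)\, dM^j_s\, dX^i_t + \tfrac{G(0)}{2}\sum_{t\in[0,T]}\Delta M^j_t\,\Delta X^i_t\right] = 0.
\end{align*}

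The key structural observation is that $M^j$ depends only on $(\omega^{\noll}, \tilde{\omega}_j)$---since $X^j$ is $\mathbb{F}^j$-predictable and $\tilde{\mathbb{E}}[X^j]$ is $\mathbb{F}^M$-predictable by \cref{lem:exp.admissible}---while $X^i$ depends only on $(\omega^{\noll}, \tilde{\omega}_i)$, and the factors $\tilde{\mathbb{P}}_i$ and $\tilde{\mathbb{P}}_j$ are independent under $\tilde{\mathbb{P}}$. Moreover, $\tilde{\mathbb{E}}[M^j]\equiv 0$ as a process: linearity of $\tilde{\mathbb{E}}$ gives $\tilde{\mathbb{E}}[M^j_t]=0$ for each $t$, and dominated convergence (using the ($\mathbb{P}$-essentially) bounded total variation of \cref{def:admissible.X}(iii)) lets $\tilde{\mathbb{E}}$ commute with left-limits and thus with $\Delta$.

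I would then invoke Fubini--Tonelli to pull the inner integration $\int\cdot\,d\tilde{\mathbb{P}}_j$ inside each Stieltjes integral. The boundedness of $G$ from \cref{as:kernel} combined with the essentially bounded total variation of admissible strategies yields a uniform majorant for the integrand, so the interchange is legitimate and replaces $dM^j_s$ by $d\tilde{\mathbb{E}}[M^j]_s$ in the double integral, and $\Delta M^j_t$ by $\Delta\tilde{\mathbb{E}}[M^j]_t$ in the jump sum (the latter written as $\int_0^T\Delta M^j_t\,dX^i_t$ against the pure-jump integrand). Both replacements produce zero, establishing the displayed identity. The main obstacle is purely technical: verifying the joint measurability and uniform bounds needed for Fubini to cover both the double Stieltjes integral and the jump-sum term simultaneously; the probabilistic content is simply the independence of $\tilde{\omega}_i$ and $\tilde{\omega}_j$ guaranteed by the product construction, and no use of the strict positive definiteness of $G$ is required at this stage.
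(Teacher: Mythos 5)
Your argument is correct in substance but packaged quite differently from the paper's. The paper works through the (dual) predictable projection machinery: it uses \cref{thm:dual.pred.proj} to replace the integrand $\int_0^{t-}G(t-s)\sum_{j\neq i}dX^j_s$ by its $\mathbb{F}^i$-predictable projection, then \cref{prop:pred.proj.private.info} and \cref{prop:pred.is.exp} to identify that projection with $\tilde{\mathbb{E}}[\,\cdot\,]$, and finally the rule $\leftindex^{\rm p}(\Delta A)=\Delta(A^{\rm p})$ for the jump sum. You bypass all of that by exploiting the product structure directly: since $X^i$ is (a.s.) a function of $(\omega^{\noll},\tilde{\omega}_i)$ only, integrating out $\tilde{\omega}_j$ first and using $\tilde{\mathbb{E}}[M^j]\equiv 0$ with $M^j=X^j-\tilde{\mathbb{E}}[X^j]$ kills each cross term by bilinearity. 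This is a legitimate and arguably more elementary route — it is essentially what the paper's proofs of \cref{prop:pred.is.exp,prop:pred.proj.private.info} reduce to anyway — at the cost of not producing the reusable projection identities that the paper exploits again in \cref{lem:jensen.ineq.precursor} and \cref{thm:non.existence.random.eq}.

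One caveat: the step you label ``Fubini--Tonelli to pull $\int\cdot\,d\tilde{\mathbb{P}}_j$ inside each Stieltjes integral'' is not a product-measure Fubini, because the inner integrator $dM^j_s(\tilde{\omega}_j)$ itself depends on the variable being integrated out; $\tilde{\mathbb{P}}_j(d\tilde{\omega}_j)\,dM^j_s(\tilde{\omega}_j)$ is a kernel, not a product. What you actually need is the identity $\tilde{\mathbb{E}}\bigl[\int_0^{r} G(r-s)\,dX^j_s\bigr]=\int_0^{r} G(r-s)\,d\tilde{\mathbb{E}}[X^j_s]$, i.e.\ that averaging the random measures coincides with the measure of the averaged path when tested against $G$. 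This is precisely the paper's identity \eqref{eq:proof.reduced.rep.obj}, which it proves by integrating by parts for continuously differentiable $G$ (so that the random integrator becomes $ds$ and genuine Fubini applies) and then approximating a general continuous kernel uniformly. Either that argument, or a monotone-class argument starting from indicators of intervals (where the identity holds by definition of the distribution-function measure, using the uniform total-variation bound), is needed to close this step; with it supplied, your proof is complete. The interchange against the outer integrator $dX^i_t$ and the treatment of the jump sum are fine as you describe, since $X^i$ does not depend on $\tilde{\omega}_j$ there.
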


Our main result below uses the following condition on the convex cost functional $C$, which can be viewed as a Jensen-type inequality for marginal expectations. Such inequalities hold in standard Banach-space settings under additional assumptions, for instance for Bochner expectations and continuous convex functionals; cf.~\cite{to1975generalized}. In our setup, however, we do not impose a specific topological vector space structure on the class of c\`adl\`ag finite-variation strategies, and we allow general extended-valued path functionals $C$ (e.g., to encode hard constraints such as discrete-time trading). In this generality, convexity of $C$ alone does not imply Assumption~\ref{as:C.no.randomized} (see \cref{app:counterexample} for a counterexample).

\begin{assumption}\label{as:C.no.randomized}
  If $X$ is an admissible strategy for some trader $i$, then
  \[\mathbb{E}\left[C(X)\right]\geq \mathbb{E}\left[C\left(\tilde{\mathbb{E}}[X]\right)\right].\]
\end{assumption}

Next, we verify that \cref{as:C.no.randomized} is satisfied for the quadratic costs that are predominantly used in the literature.

\begin{lemma}\label{le:C.no.randomized.holds.for.quadratic}
    \Cref{as:C.no.randomized} holds for the quadratic costs $C$ specified in \cref{ex:costsC}.
\end{lemma}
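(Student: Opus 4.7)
Fix an admissible strategy $X$ for trader $i$. Assume $\mathbb{E}[C(X)]<\infty$, as otherwise the desired inequality is trivial; this forces each of the three summands in \eqref{eq:addl.cost.A} to be integrable. I would treat each summand separately, each time applying Jensen's inequality for the marginal expectation $\tilde{\mathbb{E}}$ (which, for fixed $\omega^{\noll}\in\Omega^{\noll}$, is genuinely an expectation against $\tilde{\mathbb{P}}$).

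\emph{Terminal term.} Jensen applied pointwise in $\omega^{\noll}$ yields $(\tilde{\mathbb{E}}[X_T])^2\leq \tilde{\mathbb{E}}[(X_T)^2]$; take $\mathbb{E}$ and multiply by $\varphi/2$. Note $\tilde{\mathbb{E}}[X]_T=\tilde{\mathbb{E}}[X_T]$ by definition.

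\emph{Block term.} The key identity is
\[
\Delta\tilde{\mathbb{E}}[X]_t=\tilde{\mathbb{E}}[\Delta X_t].
\]
This holds because $X$ has essentially bounded total variation, so $|X_s|$ is uniformly bounded on $[0,T]$ and we may use dominated convergence on $(\tilde{\Omega},\tilde{\mathbb{P}})$ as $s\uparrow t$ to pass the marginal integral through the left limit, giving $\tilde{\mathbb{E}}[X]_{t-}=\tilde{\mathbb{E}}[X_{t-}]$. Then Jensen pointwise in $(t,\omega^{\noll})$ gives $(\tilde{\mathbb{E}}[\Delta X_t])^2\leq \tilde{\mathbb{E}}[(\Delta X_t)^2]$; summing over $t\in[0,T]$ with the nonnegative weights $\vartheta_t$ and applying Tonelli,
\[
\sum_{t\in[0,T]}\vartheta_t\bigl(\Delta\tilde{\mathbb{E}}[X]_t\bigr)^2\leq \tilde{\mathbb{E}}\!\left[\sum_{t\in[0,T]}\vartheta_t(\Delta X_t)^2\right].
\]
Take $\mathbb{E}$ and multiply by $1/2$.

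\emph{Rate term.} If $\varepsilon=0$ there is nothing to do; otherwise integrability of $\int_0^T(\dot X_t)^2\,dt$ forces $X$ to be absolutely continuous on $[0,T]$ a.s.\ with $\dot X\in L^2(dt\otimes d\mathbb{P})$. By Fubini,
\[
\tilde{\mathbb{E}}[X]_t-\tilde{\mathbb{E}}[X]_0=\tilde{\mathbb{E}}\!\left[\int_0^t \dot X_s\,ds\right]=\int_0^t \tilde{\mathbb{E}}[\dot X_s]\,ds,
\]
so $\tilde{\mathbb{E}}[X]$ is absolutely continuous with derivative $\tilde{\mathbb{E}}[\dot X]$ a.e. Jensen gives $(\tilde{\mathbb{E}}[\dot X_t])^2\leq \tilde{\mathbb{E}}[(\dot X_t)^2]$; integrate over $[0,T]$, take $\mathbb{E}$, and multiply by $\varepsilon/2$.

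Adding the three bounds yields $\mathbb{E}[C(\tilde{\mathbb{E}}[X])]\leq \mathbb{E}[C(X)]$, which is \cref{as:C.no.randomized}. The only non-routine step is the commutation of $\tilde{\mathbb{E}}$ with left limits and with the time derivative; both are justified by dominated convergence using the essentially bounded total variation of $X$ (and, for the rate term, the $L^1$-bound on $\dot X$ induced by finiteness of the rate cost).
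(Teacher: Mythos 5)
Your proposal is correct. The terminal and rate terms are handled exactly as in the paper: pointwise Jensen for the square, and for the rate term the identification $\frac{d}{dt}\tilde{\mathbb{E}}[X_t]=\tilde{\mathbb{E}}[\dot X_t]$ via Fubini, using the essential bound on $\int_0^T|\dot X_s|\,ds$ coming from admissibility. Where you genuinely diverge is the block term. The paper expands the nonnegative quantity $\mathbb{E}\bigl[\tfrac12\sum_t \vartheta_t(\Delta X_t-\Delta\tilde{\mathbb{E}}[X_t])^2\bigr]$ and eliminates the cross term by invoking the dual predictable projection identity $X^{\rm p}=\tilde{\mathbb{E}}[X]$ (\cref{prop:pred.is.exp} together with \cref{thm:dual.pred.proj}), i.e.\ it leans on the general projection machinery already developed in \cref{se:projections}. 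You instead prove the commutation $\Delta\tilde{\mathbb{E}}[X]_t=\tilde{\mathbb{E}}[\Delta X_t]$ directly, by passing $\tilde{\mathbb{E}}$ through the left limit via dominated convergence (the paths are uniformly bounded since $X_{0-}=x^i$ and the total variation is essentially bounded), and then apply Jensen pointwise in $t$ and sum over the jump set of $\tilde{\mathbb{E}}[X]$ with Tonelli. Both routes are valid; yours is more elementary and self-contained (no appeal to dual projections), while the paper's is consistent with how the same cross terms are treated elsewhere (e.g.\ in \cref{lem:jensen.ineq.precursor}) and avoids having to discuss the interchange of $\tilde{\mathbb{E}}$ with left limits. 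One small point of care in your version, which you handle correctly but should state explicitly: the jump set of $X(\omega^{\noll},\tilde\omega)$ varies with $\tilde\omega$, so the sum should be taken over the (countable, $\tilde\omega$-independent) jump set of $\tilde{\mathbb{E}}[X](\omega^{\noll})$ on the left and dominated by the full sum on the right, all terms being nonnegative.
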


We can now state our first main result. It shows that de-randomizing a randomized strategy by projecting leads to an improved performance while holding competitors' strategies fixed. In particular, this implies the non-existence of (strictly) randomized equilibria.

\begin{theorem}\label{thm:non.existence.random.eq}
    Let $C$ satisfy \cref{as:C.no.randomized} and let $\boldsymbol{X}=(X^{1},\dots,X^{N})$ be an admissible strategy profile where $X^i$ is strictly randomized. Then $Z^i:=\tilde{\mathbb{E}}[X^i]$ is a non-randomized admissible strategy for trader $i$ and 
    \[J(X^i; \boldsymbol{X}^{-i}) > J(Z^i; \boldsymbol{X}^{-i}).\]
    That is, a strictly randomized strategy $X^i$ can be strictly improved by unilaterally deviating to the non-randomized strategy $Z^i = \tilde{\mathbb{E}}[X^i]$. In particular, a Nash equilibrium profile cannot contain strictly randomized strategies. 
\end{theorem}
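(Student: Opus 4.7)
The plan is to exploit the decomposition in \eqref{eqn:obj.schied.rep}, which expresses $J(X^i;\boldsymbol{X}^{-i})$ as the sum of (a) a self-impact quadratic in $X^i$ alone, (b) a cross term coupling $X^i$ to the competitors' strategies, and (c) the cost $\mathbb{E}[C(X^i)]$. The goal is to show that each of (a)--(c) can only increase if $X^i$ is randomized, with strict increase in (a) under the strict-randomization hypothesis. Admissibility of $Z^i:=\tilde{\mathbb{E}}[X^i]$ is given by \cref{lem:exp.admissible}. Applying \cref{prop:reduced.rep.obj} to both $X^i$ and $Z^i$, the problem reduces to comparing $J(X^i;\boldsymbol{Y}^{-i})$ with $J(Z^i;\boldsymbol{Y}^{-i})$, where $\boldsymbol{Y}^{-i}:=\tilde{\mathbb{E}}[\boldsymbol{X}^{-i}]$ is $\mathbb{F}^M$-predictable.

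The cost term is immediate from \cref{as:C.no.randomized}. For the cross term (b), every integrand coupling to $X^i$---whether continuous, of the form $\int_0^{t-}G(t-s)\sum_{j\neq i}dY^j_s$, or discrete, at the $\mathbb{F}^M$-predictable jump times of some $Y^j$---is $\mathbb{F}^M$-predictable. Combining the dual predictable projection formula in \cref{thm:dual.pred.proj} with \cref{prop:pred.is.exp}, which identifies $(X^i)^{\rm p}=\tilde{\mathbb{E}}[X^i]=Z^i$, shows that the cross term is invariant under $X^i\leadsto Z^i$. For the self-impact term, write $W:=X^i-Z^i$ and use bilinearity and the symmetry of $(X,Y)\mapsto\int_0^T\!\!\int_0^T G(|t-s|)\,dX_s\,dY_t$ to expand
\begin{equation*}
\int_0^T\!\!\int_0^T G(|t-s|)\,dX^i_s\,dX^i_t=\int_0^T\!\!\int_0^T G(|t-s|)\,dZ^i_s\,dZ^i_t+2\int_0^T\!\!\int_0^T G(|t-s|)\,dZ^i_s\,dW_t+\int_0^T\!\!\int_0^T G(|t-s|)\,dW_s\,dW_t.
\end{equation*}
Taking $\tilde{\mathbb{E}}$ annihilates the middle term: for each fixed $\omega^{\noll}$ the inner integrand $t\mapsto \int_0^T G(|t-s|)\,dZ^i_s$ does not depend on $\tilde\omega$, so integration by parts plus Fubini against $\tilde{\mathbb{E}}[W]\equiv 0$ (using the deterministic initial value $W_{0-}=0$) yields zero. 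The remaining self-variance term is non-negative by positive definiteness.

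For strict positivity on a $\mathbb{P}$-positive set, I would argue as follows: strict randomization of $X^i$ means $X^i$ is not indistinguishable from the $\mathbb{F}^M$-predictable $Z^i$, so $W$ is non-evanescent; combined with the deterministic initial condition $W_{0-}=0$, this forces the path $t\mapsto W_t(\omega)$ to be non-constant on a set of positive $\mathbb{P}$-measure, and the strict part of \cref{as:kernel} then delivers $\int_0^T\!\!\int_0^T G(|t-s|)\,dW_s\,dW_t>0$ there. Taking $\mathbb{E}$ concludes the proof. The main technical hurdle I foresee is the careful bookkeeping of the jump portion of (b)---specifically, verifying that $\mathbb{E}[\sum_t\Delta Y^j_t\,\Delta X^i_t]=\mathbb{E}[\sum_t\Delta Y^j_t\,\Delta Z^i_t]$ by rewriting $\sum_t\Delta Y^j_t\,\Delta X^i_t=\int_0^T\Delta Y^j_t\,dX^i_t$ against the $\mathbb{F}^M$-predictable jump process $\Delta Y^j$ and invoking \cref{thm:dual.pred.proj}---while the Jensen-type positive-definiteness bound for (a) and the transfer of strictness to the paths of $W$ are conceptually clean.
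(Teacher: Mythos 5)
Your proposal is correct and follows essentially the same route as the paper: reduce to de-randomized competitors via \cref{prop:reduced.rep.obj}, use the dual predictable projection to show the cross and jump terms are invariant under $X^i\leadsto Z^i$, establish the Pythagorean identity for the self-impact double integral (the paper's \cref{lem:jensen.ineq.precursor}) so that strict positive definiteness applied to $W=X^i-\tilde{\mathbb{E}}[X^i]$ (non-constant with positive probability since $W_{0-}=0$ and $W$ is non-evanescent) yields the strict gap, and absorb the cost term with \cref{as:C.no.randomized}. The only cosmetic difference is that you annihilate the mixed term $\iint G\,dZ^i_s\,dW_t$ by a direct Fubini/integration-by-parts argument exploiting that $Z^i$ is $\tilde{\omega}$-independent and $\tilde{\mathbb{E}}[W]\equiv 0$, whereas the paper routes this through the dual predictable projection machinery; both ultimately rest on the same identity~\eqref{eq:for.proof.reduced.rep.obj}.
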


The proof shows that this result is driven by the strict positive definiteness of the impact decay kernel. The assumption on $C$, which is a non-strict inequality, merely ensures that $C$ does not counteract that property of the kernel.

\section{Uniqueness of Nash Equilibria}\label{se:uniqueness}

We show the uniqueness of Nash equilibria in our general framework.\footnote{To be clear, \cref{as:C.no.randomized} is not in force in this section.} The first result in this direction is due to \cite{SchiedStrehleZhang.17}, where uniqueness is shown for the exponential decay kernel (i.e., the Obizhaeva--Wang model), $N=2$ traders and a quadratic cost $C$ that is a special case of \cref{ex:costsC}. This result was extended to similar games with an arbitrary finite number~$N$ of traders in \cite{CampbellNutz.25a,Strehle.17}. A different uniqueness argument, based on Fredholm equations of the second kind, was given in \cite{AbiJaberNeumanVoss.24} for linear-quadratic models that are regularized by a quadratic cost on the trading rate. Here, we establish a general version of the uniqueness property including general convex costs~$C$, a general decay kernel, and (possibly) randomized strategies. We emphasize that the theorem merely asserts uniqueness; existence need not hold under the present assumptions (see, e.g., \cite{CampbellNutz.25a}).

\begin{theorem}\label{thm:nash.eq.unique}
    There is at most one Nash equilibrium. %
\end{theorem}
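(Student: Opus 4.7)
Suppose $\boldsymbol{X}$ and $\boldsymbol{Y}$ are both Nash equilibria and set $D^i := X^i - Y^i$, $\|Z\|_G^2 := \int_0^T\!\!\int_0^T G(|t-s|)\,dZ_s\,dZ_t$. Since $D^i_{0-}=0$, \cref{as:kernel} reduces the task to showing $\mathbb{E}\|D^i\|_G^2 = 0$ for each $i$: then each $D^i$ is a constant path a.s., hence identically zero. The strategy is to establish the variational lower bound $J(Y^i;\boldsymbol{X}^{-i}) - J(X^i;\boldsymbol{X}^{-i}) \ge \tfrac{1}{2}\mathbb{E}\|D^i\|_G^2$ (and the analogue with the roles of $\boldsymbol{X}$ and $\boldsymbol{Y}$ swapped), driven by the strict convexity of the kernel part of $J$, and then to telescope the sum over $i$ via polarization so that the only surviving constraint forces vanishing.

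\textbf{Variational step.} Using representation \eqref{eqn:obj.schied.rep}, decompose $J(X;\boldsymbol{X}^{-i}) = \mathbb{E}[Q(X)] + L^i(X;\boldsymbol{X}^{-i}) + \mathbb{E}[C(X)]$ with $Q(X) := \tfrac{1}{2}\|X\|_G^2$ quadratic and $L^i$ linear in $X$. Admissibility is closed under convex combinations, so $X^i_\epsilon := X^i - \epsilon D^i = (1-\epsilon)X^i + \epsilon Y^i$ is admissible for each $\epsilon \in [0,1]$. Combining the Nash inequality $J(X^i_\epsilon;\boldsymbol{X}^{-i}) \ge J(X^i;\boldsymbol{X}^{-i})$ with the exact second-order expansion of $Q$ and $L^i$ (via symmetry of $G$) and the convexity bound $\mathbb{E}[C(X^i_\epsilon)] \le (1-\epsilon)\mathbb{E}[C(X^i)] + \epsilon\mathbb{E}[C(Y^i)]$, then dividing by $\epsilon$ and letting $\epsilon \downarrow 0$, yields
\[
    \mathbb{E}\!\left[\int_0^T\!\!\int_0^T G(|t-s|)\, dX^i_s\, dD^i_t\right] + L^i(D^i;\boldsymbol{X}^{-i}) \;\le\; \mathbb{E}[C(Y^i) - C(X^i)].
\]
Substituting this into the exact identity for $J(Y^i;\boldsymbol{X}^{-i}) - J(X^i;\boldsymbol{X}^{-i})$, whose remaining nontrivial term is $\mathbb{E}[Q(D^i)] = \tfrac{1}{2}\mathbb{E}\|D^i\|_G^2$, gives the claimed bound. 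The analogous argument starting from Nash optimality at $\boldsymbol{Y}$ yields $J(X^i;\boldsymbol{Y}^{-i}) - J(Y^i;\boldsymbol{Y}^{-i}) \ge \tfrac{1}{2}\mathbb{E}\|D^i\|_G^2$.

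\textbf{Polarization and conclusion.} Summing the two families of inequalities over $i$, the $\mathbb{E}[Q(\cdot)]$ and $\mathbb{E}[C(\cdot)]$ contributions cancel pairwise between the families and only the cross-player interaction survives. Writing $K(U,V) := \mathbb{E}\!\left[\int\!\!\int_{s<t}G(t-s)\,dV_s\,dU_t + \tfrac{G(0)}{2}\sum_t\Delta V_t\Delta U_t\right]$ and using the identity $K(U,V) + K(V,U) = \mathbb{E}\int\!\!\int G(|t-s|)\,dU\,dV$, the left-hand side collapses to $-\sum_{i\neq j}K(D^i,D^j) = \tfrac{1}{2}\sum_i \mathbb{E}\|D^i\|_G^2 - \tfrac{1}{2}\mathbb{E}\|\sum_i D^i\|_G^2$ by polarization. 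Comparing with the right-hand side $\sum_i \mathbb{E}\|D^i\|_G^2$ rearranges to
\[
    \mathbb{E}\bigl\|\textstyle\sum_i D^i\bigr\|_G^2 \;+\; \sum_i \mathbb{E}\|D^i\|_G^2 \;\le\; 0.
\]
Both summands are nonnegative by positive definiteness, so each must vanish; strict positive definiteness (\cref{as:kernel}) and $D^i_{0-}=0$ then force $D^i \equiv 0$, i.e.\ $X^i = Y^i$ for every $i$. The only delicate point is the $\epsilon\downarrow 0$ passage in the variational step without any smoothness assumption on $C$, but this is standard given the convexity of $C$ and the finiteness of $\mathbb{E}[C(X^i)], \mathbb{E}[C(Y^i)]$ (both guaranteed because $\boldsymbol{X}$ and $\boldsymbol{Y}$ are Nash equilibria with finite objective); every other step is pure algebra of the symmetric and bilinear forms built from $G$.
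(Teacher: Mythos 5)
Your proof is correct. It runs on the same engine as the paper's argument---the Nash inequality, convexity of $C$, the polarization identity relating the one-sided cross term to the symmetric form $\iint G(|t-s|)\,dU\,dV$, and strict positive definiteness of $G$ to close---but the packaging is genuinely different. The paper interpolates between the two equilibria, forms $V(\alpha)=\sum_i\big(J(X^{\alpha,i};\boldsymbol{X}^{0,-i})+J(X^{1-\alpha,i};\boldsymbol{X}^{1,-i})\big)$, proves $V$ is convex, and derives a contradiction by showing $\dot V(0+)<0$; the cost term is handled there by a subgradient argument exploiting the symmetry of $\alpha\mapsto \mathbb{E}[C(X^{\alpha,i})+C(X^{1-\alpha,i})]$ about $\alpha=1/2$. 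You instead extract from the first-order condition at each equilibrium the explicit quadratic growth bound $J(Y^i;\boldsymbol{X}^{-i})-J(X^i;\boldsymbol{X}^{-i})\ge\tfrac12\mathbb{E}\|D^i\|_G^2$ (and its mirror image), sum over $i$ and over the two families so that the diagonal and cost contributions cancel exactly, and land on $\mathbb{E}\|\sum_i D^i\|_G^2+\sum_i\mathbb{E}\|D^i\|_G^2\le 0$. This is the endpoint (Minty-type monotonicity) version of the paper's derivative-at-zero argument; it buys a cleaner final inequality and dispenses with both the auxiliary function $V$ and the subgradient discussion, since the convexity upper bound $\mathbb{E}[C(X^i_\epsilon)]\le(1-\epsilon)\mathbb{E}[C(X^i)]+\epsilon\,\mathbb{E}[C(Y^i)]$ (legitimate because both expectations are finite by the definition of equilibrium) makes the $\epsilon\downarrow 0$ passage purely polynomial. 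The paper's route, on the other hand, isolates where only positive semi-definiteness versus strict positive definiteness is used, which your polarization step also exhibits ($\mathbb{E}\|\sum_i D^i\|_G^2\ge0$ needs only semi-definiteness; forcing $D^i\equiv0$ from $\mathbb{E}\|D^i\|_G^2=0$ and $D^i_{0-}=0$ needs strictness). Both arguments are confined to the bounded-kernel setting of \cref{as:kernel}, as is the theorem; the singular case is deferred to \cref{thm:extensions.singular} in either treatment.
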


In particular, this implies that there exists at most one pure Nash equilibrium. Following  the initial argument of \cite{SchiedStrehleZhang.17}, the proof in~\cref{se:proofs.uniqueness} rests on the strict positive definiteness of the decay kernel. The cost $C$ is assumed to be convex but not necessarily strictly convex ($C\equiv0$ is allowed). Again, the convexity assumption makes sure that the cost does not counteract the properties of the kernel, but on its own clearly does not yield uniqueness.

\section{Extension to Singular Kernels}\label{se:singular}

In this section we extend our results to impact decay kernels with $G(0+)=\infty$. See \cite{GatheralSchied.13} for the significance of such kernels. The most important example is the power-law kernel $G(t)=t^{\gamma}$ where $\gamma\in(0,1)$; cf.\ \cite{Gatheral.10}. We put ourselves in the framework of~\cite{GatheralSchiedSlynko.12} for (weakly) singular kernels and impose the following throughout this section.

\begin{assumption}[singular kernel]\label{as:singular.kernel}
  The function $G:\mathbb{R}_+\to [0,\infty]$ is convex and non-increasing, continuous and finite on $(0,\infty)$ with $\int_{0}^{1} G(t) dt<\infty$, and $G(0)=\lim_{t\downarrow0} G(t)=\infty$.
\end{assumption}

Note that, for convenience of exposition, we have made $G(0)=\infty$ part of the assumption: we focus on the singular case, rather than including the bounded case of \cref{as:singular.kernel} whose results were already stated. The definition of admissibility needs to be amended as follows.

\begin{definition}\label{def:admissible.X.singular}
    A process $X^{i}=(X^{i}_t)_{t\geq0}$ is an \emph{admissible} strategy for trader $i$ if it satisfies \cref{def:admissible.X} and in addition
				\begin{equation}\label{eq:admissible.integrability}
                \mathbb{E}\left[\int_0^\infty\!\!\int_0^\infty G(|t-s|) d|X^{i}|_sd|X^{i}|_t\right]<\infty.
				\end{equation}
\end{definition}

While~\eqref{eq:admissible.integrability} was automatic for bounded~$G$ (as $X^{i}$ is of bounded variation), this additional condition is needed to ensure that the quantities of interest are well-defined when~$G$ is unbounded. The next remark emphasizes that in the present setting with $G(0)=\infty$, \emph{admissible strategies cannot have jumps}; that is, there are no block trades.

\begin{remark}\label{rk:no.jumps}
  The integral $\int_0^T \int_0^T G(|t-s|) d|X|_s d|X|_t$ is necessarily infinite if~$X$ has a jump, as $\int_0^T \int_0^T G(|t-s|) d|X|_s d|X|_t\geq G(0)|\Delta X_{t}|^{2}$ for any~$t$ and $G(0)=\infty$. Thus, in view of~\eqref{eq:admissible.integrability}, admissible strategies cannot have jumps in the present setting. In financial terms, block trades have infinite price impact and hence infinite execution cost.
\end{remark}

As in the previous sections, our main results will be driven by the strict positive definiteness of the kernel. Let us record that strict positive definiteness indeed holds. As detailed in the proof, this property can be seen from a Fourier-type representation shown in~\cite{GatheralSchiedSlynko.12}.

\begin{lemma}\label{le:singular.posdef}
  Let $G$ satisfy \cref{as:singular.kernel}. Then $G$ is strictly positive definite; i.e.,
  \[
  \int_0^T\!\!\int_0^T G(|t-s|) dX_s dX_t >0
  \]
  whenever $X:[0,T]\to \mathbb{R}$ is a c\`adl\`ag function of finite variation that is not constant and satisfies $\int_0^T\!\!\int_0^T G(|t-s|) d|X|_s d|X|_t<\infty$.
\end{lemma}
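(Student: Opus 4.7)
The plan is to rewrite the quadratic form in a manifestly non-negative fashion, then invoke the real-analyticity of the Fourier transform of a compactly supported measure to conclude.

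First I would establish the integral representation
\[
G(t) = G(\infty) + \int_0^\infty (u-t)_+ \, \nu(du), \qquad t > 0,
\]
valid because $G$ is convex, non-increasing on $(0,\infty)$ with $G(\infty) := \lim_{t \to \infty} G(t) \in [0,\infty)$; here $\nu$ is the positive distributional second derivative $G''$ on $(0,\infty)$. Crucially, $\nu$ is \emph{not} the zero measure, since otherwise $G$ would be affine on $(0,\infty)$, which is incompatible with $G(0+) = \infty$. Combining this representation with the autocorrelation identity
\[
(u - |t-s|)_+ = \int_\R \mathbf{1}_{[r, r+u]}(s)\, \mathbf{1}_{[r, r+u]}(t) \, dr
\]
and Fubini (applicable thanks to the hypothesis $\int\!\!\int G(|t-s|) d|X|_s d|X|_t < \infty$), I would obtain the manifestly non-negative formula
\[
\int_0^T\!\!\int_0^T G(|t-s|) \, dX_s\, dX_t = G(\infty)\,\mu([0,T])^2 + \int_0^\infty \!\!\int_\R \mu([r, r+u])^2 \, dr \, \nu(du),
\]
where $\mu := dX$ denotes the signed Stieltjes measure of $X$ on $[0,T]$.

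Second, suppose this quadratic form equals zero. Then both terms vanish, so in particular $\mu([r, r+u]) = 0$ for Lebesgue-a.e.\ $r \in \R$ and $\nu$-a.e.\ $u > 0$. Since $\nu \not\equiv 0$, pick any such $u_0 > 0$. Recognizing $r \mapsto \mu([r, r+u_0])$ as the convolution $\mu \ast \mathbf{1}_{[-u_0, 0]}$ and taking Fourier transforms yields
\[
\hat\mu(\xi) \cdot \frac{e^{i\xi u_0} - 1}{i\xi} = 0 \quad \text{for a.e. } \xi \in \R.
\]
The second factor vanishes only on the discrete set $\{2\pi k / u_0 : k \in \Z\}$, so $\hat\mu(\xi) = 0$ for Lebesgue-a.e.\ $\xi \in \R$.

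Finally, because $\mu$ is a finite signed measure supported in the bounded interval $[0,T]$, its Fourier transform $\hat\mu$ is the restriction to $\R$ of an entire function of exponential type $T$ (Paley-Wiener), and in particular real-analytic on $\R$. A real-analytic function vanishing on a set of positive Lebesgue measure must be identically zero; hence $\hat\mu \equiv 0$, so $\mu = 0$ by Fourier uniqueness, i.e., $X$ is constant on $[0,T]$. This contradicts the hypothesis and proves strict positive definiteness. The main obstacle is carefully justifying the integral representation of $G$ and the Fubini interchange under the singular-kernel integrability hypothesis; thereafter, the Paley-Wiener plus real-analyticity argument is standard, and the overall scheme parallels the spectral reasoning of \cite{GatheralSchiedSlynko.12}.
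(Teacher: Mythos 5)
Your proof is correct, but it takes a genuinely different (and more self-contained) route than the paper. The paper's proof is a two-line citation: it invokes \cite{GatheralSchiedSlynko.12} (Proposition~4.5 and Lemma~4.2) for the spectral representation $\int_0^T\!\!\int_0^T G(|t-s|)\,dX_s\,dX_t=\int|\widehat X(z)|^2\mu(dz)$ with a spectral measure $\mu$ of full support, and then notes that $\widehat X$ is continuous and not identically zero when $X$ is non-constant. You instead reconstruct positive definiteness from first principles via the second-derivative measure $\nu=G''$ and the autocorrelation identity for $(u-|t-s|)_+$ --- essentially the Alfonsi--Schied--Slynko decomposition that the paper cites only for the bounded case --- and then obtain \emph{strictness} by a Fourier-uniqueness argument on the windowed averages $r\mapsto\mu([r,r+u_0])$. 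All the key points check out: the representation $G(t)=G(\infty)+\int(u-t)_+\,\nu(du)$ is valid for convex, non-increasing $G$ finite on $(0,\infty)$; $\nu\not\equiv0$ follows from $G(0+)=\infty$; the Fubini interchanges are covered by the hypothesis $\iint G(|t-s|)\,d|X|_s\,d|X|_t<\infty$ (which, as in \cref{rk:no.jumps}, also forces $X$ to be continuous, so the diagonal where the pointwise representation of $G$ could fail is $d|X|\otimes d|X|$-null --- a point worth stating explicitly); and the final step could even be simplified, since $\widehat\mu$ is continuous and a continuous function vanishing Lebesgue-a.e.\ vanishes identically, making Paley--Wiener unnecessary. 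What your approach buys is independence from the specific lemmas of \cite{GatheralSchiedSlynko.12}; what the paper's approach buys is brevity and a uniform treatment of the strictness via the full support of the spectral measure.
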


The following approximation will enable us to exploit our results for bounded kernels.

\begin{remark}[monotone approximation]\label{rk:kernel.approx}
  Let $G$ satisfy \cref{as:singular.kernel}. Then given any $n>0$ such that $G$ is not constant on $[1/n,\infty)$, the function $G^n(t):=G(t+1/n)$, $t\in\R_{+}$ is a bounded kernel satisfying \cref{as:kernel} (by the sufficient condition mentioned below \cref{as:kernel}). %
  Moreover, we have the monotone limit $G^n(t)\uparrow G(t)$ as $n\uparrow\infty$. %
  This will be useful for monotone convergence arguments.
\end{remark} 

The objective function~$J$ is the same as in \cref{se:setup}, with the following caveats. First, the expressions involving jump terms can be dropped by \cref{rk:no.jumps}. %
Second, in contrast to the case of bounded kernels, it may not be obvious that various integrals are well defined, and some steps in the proof of \cref{prop:obj.func.rep} need additional explanations. We provide those in \cref{se:proofs.singular}, and we also verify by elementary arguments that~\eqref{eq:admissible.integrability} and positive definiteness imply that all expressions are well defined. We thus have the following simplified expression for \cref{prop:obj.func.rep}.

\begin{proposition}\label{prop:singular.obj.func.rep}
    Given a profile $\boldsymbol{X}=(X^{1},\dots,X^{N})$ of admissible strategies, the objective function $J(X^{i};\boldsymbol{X}^{-i})$ can be expressed as
    \begin{align}%
    J(X^{i};\boldsymbol{X}^{-i})&=\mathbb{E}\left[\int_0^T I_{t-} dX_t^{i}+C(X^{i})\right]\nonumber \\
    &=\mathbb{E}\Bigg[\frac{1}{2}\int_0^T\!\!\int_0^T G(|t-s|)dX_s^{i} dX_t^{i} + \int_0^T\!\!\int_0^{t-} G(t-s) \sum_{j\not=i}dX_s^{j} dX_t^{i} +C(X^{i})\Bigg].\label{eqn:singular.obj.schied.rep}
\end{align}
Moreover, $|J(X^{i};\boldsymbol{X}^{-i})|<\infty$ as soon as $\mathbb{E}[C(X^i)]<\infty$.
\end{proposition}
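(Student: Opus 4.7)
The plan is to adapt the proof of \cref{prop:obj.func.rep} to the singular setting, addressing two new issues. First, by \cref{rk:no.jumps}, admissible strategies are continuous, so every jump term in \eqref{eq:OWcost} and \eqref{eqn:obj.schied.rep} vanishes identically; this already explains the simplified form of \eqref{eqn:singular.obj.schied.rep}. Second, because $G(0)=\infty$, one must check a priori that every single and double integral in sight is well defined and integrable---something that was automatic in the bounded case.

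I would begin by establishing integrability. The bilinear form $B(\mu,\nu):=\int_0^T\!\!\int_0^T G(|t-s|)\,d\mu_s\,d\nu_t$ is symmetric and positive semidefinite on signed measures of finite $G$-energy (this follows from \cref{le:singular.posdef} by polarization). A direct expansion into positive and negative parts gives $B(X^i,X^i)\le B(|X^i|,|X^i|)=\int_0^T\!\!\int_0^T G(|t-s|)\,d|X^i|_s\,d|X^i|_t$, and Cauchy--Schwarz for positive semidefinite forms then yields
\[
\left|\int_0^T\!\!\int_0^T G(|t-s|)\,dX^j_s\,dX^i_t\right|
\le \prod_{k\in\{i,j\}}\left(\int_0^T\!\!\int_0^T G(|t-s|)\,d|X^k|_s\,d|X^k|_t\right)^{1/2},
\]
and each factor has finite expectation by \eqref{eq:admissible.integrability}. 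Consequently, every term on the right of \eqref{eqn:singular.obj.schied.rep} lies in $L^1(\mathbb{P})$, which already yields the finiteness claim once $\mathbb{E}[C(X^i)]<\infty$.

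Next, I would derive the two equalities by monotone approximation. Let $G^n(t):=G(t+1/n)$ be the bounded kernels from \cref{rk:kernel.approx} and let $J^n$, $I^n$ denote the associated objective function and impact process. Applying \cref{prop:obj.func.rep} to $G^n$ (where all jump terms have already vanished by continuity of the $X^k$) gives
\[
J^n(X^i;\boldsymbol{X}^{-i})=\mathbb{E}\Bigg[\frac12\int_0^T\!\!\int_0^T G^n(|t-s|)\,dX^i_s\,dX^i_t+\int_0^T\!\!\int_0^{t-} G^n(t-s)\sum_{j\neq i}dX^j_s\,dX^i_t+C(X^i)\Bigg].
\]
Decomposing each $X^k$ into its positive and negative variation parts and invoking monotone convergence on the four resulting non-negative pieces identifies the $n\to\infty$ limit of the right-hand side with the right-hand side of \eqref{eqn:singular.obj.schied.rep}; the exchange of expectation and limit is legitimate by dominated convergence, using the Cauchy--Schwarz bound above as an integrable majorant. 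For the first equality, writing $S_{t-}=P_{t-}+I_{t-}$ reduces the claim to $\mathbb{E}\bigl[\int_0^T P_{t-}\,dX^i_t+X^i_{0-}P_{0-}-X^i_TP_T\bigr]=0$, which follows from continuity and bounded variation of $X^i$ (so $[P,X^i]=0$) via integration by parts and the martingale property of $P$, with the usual localization made possible by the essential boundedness of the paths of $X^i$.

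The main obstacle is the Fubini step that turns $\int_0^T I_{t-}\,dX^i_t$ into the double integral on the right of \eqref{eqn:singular.obj.schied.rep}: a direct application would require absolute integrability of $G$ against the signed product measure $dX^j\otimes dX^i$, which is not immediate from \eqref{eq:admissible.integrability} alone. The Cauchy--Schwarz bound together with the monotone approximation $G^n\uparrow G$ is precisely what provides the integrable dominator needed to legitimize these exchanges, and the diagonal identity $\int_0^T\!\!\int_0^{t-} G(t-s)\,dX^i_s\,dX^i_t=\tfrac12\int_0^T\!\!\int_0^T G(|t-s|)\,dX^i_s\,dX^i_t$ follows by symmetry once one notes that the diagonal $\{s=t\}$ carries no mass under the continuous integrator $X^i$.
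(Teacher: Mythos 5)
Your proposal is correct and follows essentially the same route as the paper: both reduce to the bounded-kernel computation of \cref{prop:obj.func.rep}, observe that all jump terms vanish by \cref{rk:no.jumps}, and justify the Fubini/symmetrization step by controlling the cross terms $\int\!\!\int G(|t-s|)\,d|X^j|\,d|X^i|$ through the positive definiteness of $G$ applied to total-variation measures. The only (immaterial) differences are that you use the Cauchy--Schwarz form of this bound where the paper uses the arithmetic-mean variant (\cref{lem:cross.term.integrable}), and that you route the double-integral identity through the approximation $G^n\uparrow G$ of \cref{rk:kernel.approx}, whereas the paper performs the Fubini computation directly for the singular $G$ once the dominating integral is known to be finite.
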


A technical difficulty in the present setup is that the impact process $I_t$ may be infinite for some exceptional set of~$t$, even when strategies are admissible. This does not contradict the preceding statements because the objective function only includes certain integrals of the impact process, not the impact process at a fixed time. Thus, strictly speaking, there is no need to address the difficulty. Let us mention, however, the following result of \cite[Proposition~2.27]{GatheralSchiedSlynko.12}.

\begin{remark}\label{rk:impact.finite.quasi}
   Let $X:[0,T]\to\mathbb{R}$ be c\`adl\`ag function of finite variation with $\int_0^T\!\!\int_0^T G(|t-s|) d|X|_s d|X|_t<\infty$. Then $\int_0^t G(t-s) dX_s$ is well defined and finite for \emph{quasi} every $t\in[0,T]$, meaning that the set of exceptional~$t$ is $\nu$-null for every finite Borel measure $\nu$ satisfying $\iint G(|t-s|)\nu(ds)\nu(dt)<\infty$. 
\end{remark} 

We may apply this with $X$ being a path of an admissible strategy and $\mu$ being the total variation measure of the path of another admissible strategy. But, as mentioned, the elementary results in \cref{se:proofs.singular} suffice for our purpose. 

We can now formally state that all our results extend to the singular setting.

\begin{theorem}\label{thm:extensions.singular}
  Let $G$ satisfy \cref{as:singular.kernel} and define admissibility as in \cref{def:admissible.X.singular}. Then \cref{lem:exp.admissible}, \cref{prop:reduced.rep.obj},
  \cref{thm:non.existence.random.eq} and \cref{thm:nash.eq.unique} hold as stated.
\end{theorem}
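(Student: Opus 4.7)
The plan is to show that each of \cref{lem:exp.admissible}, \cref{prop:reduced.rep.obj}, \cref{thm:non.existence.random.eq}, and \cref{thm:nash.eq.unique} extends to the singular setting with only modest additions to the bounded-kernel proofs. Three ingredients from \cref{se:singular} make this painless: \cref{rk:no.jumps}, which eliminates all the jump terms appearing in the bounded-kernel arguments; \cref{le:singular.posdef}, which supplies strict positive definiteness of the singular kernel; and the simplified objective representation \cref{prop:singular.obj.func.rep}. The monotone approximation $G^n\uparrow G$ from \cref{rk:kernel.approx} is kept in reserve as a backup but I do not expect it to be needed on the main line of argument.

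First I would establish the singular version of \cref{lem:exp.admissible}. All properties except \eqref{eq:admissible.integrability} transfer from the bounded case verbatim. For the new condition, set $Y=\tilde{\mathbb{E}}[X^i]$. The triangle inequality for $\tilde{\mathbb{E}}$ applied to any partition of $[0,T]$ gives $\sum_k |Y_{t_{k+1}}-Y_{t_k}| \le \tilde{\mathbb{E}}[\sum_k |X^i_{t_{k+1}}-X^i_{t_k}|]$; passing to the supremum over partitions and using Tonelli yields the pathwise measure domination $\int f\,d|Y| \le \tilde{\mathbb{E}}[\int f\,d|X^i|]$ for nonnegative Borel~$f$. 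Applying this twice together with Tonelli gives
\[
\mathbb{E}\Bigl[\int_0^T\!\!\int_0^T G(|t-s|)\,d|Y|_s\,d|Y|_t\Bigr] \le \mathbb{E}\Bigl[\int_0^T\!\!\int_0^T G(|t-s|)\,d|X^i|_s\,d|X^i|_t\Bigr] < \infty.
\]

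Next, \cref{prop:reduced.rep.obj} follows by reusing the bounded-case proof on the simplified representation \eqref{eqn:singular.obj.schied.rep}: its only inputs are the dual predictable projection identity \eqref{eq:dual.predictable.proj.properties} together with \cref{prop:pred.is.exp} and \cref{prop:pred.proj.private.info}, none of which depend on~$G$. For \cref{thm:non.existence.random.eq}, setting $Z^i=\tilde{\mathbb{E}}[X^i]$ and $W^i=X^i-Z^i$, the extended \cref{prop:reduced.rep.obj} reduces the comparison to the self-term and $C$-term; expanding $X^i=Z^i+W^i$ in the quadratic form and invoking the dual predictable projection to kill the cross term yields
\[
J(X^i;\boldsymbol{X}^{-i}) - J(Z^i;\boldsymbol{X}^{-i}) = \tfrac12\,\mathbb{E}\Bigl[\int_0^T\!\!\int_0^T G(|t-s|)\,dW^i_s\,dW^i_t\Bigr] + \mathbb{E}[C(X^i)] - \mathbb{E}[C(Z^i)].
\]
Strict randomization forces $W^i$ to be non-evanescent, so \cref{le:singular.posdef} makes the quadratic form strictly positive on a set of positive probability; the $C$-term is nonnegative by \cref{as:C.no.randomized}, giving the strict improvement. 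Finally, \cref{thm:nash.eq.unique} transfers without change: its proof only uses strict positive definiteness of the kernel (now \cref{le:singular.posdef}) and convexity of~$C$.

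The main obstacle I expect is justifying Fubini and the cross-term cancellation, which requires each integral to be absolutely integrable so that Tonelli and the dual predictable projection can both be applied cleanly. This is delivered by \eqref{eq:admissible.integrability} combined with the elementary bound $\bigl|\iint G\,d\mu\,d\nu\bigr| \le \tfrac12\bigl(\iint G\,d|\mu|\,d|\mu| + \iint G\,d|\nu|\,d|\nu|\bigr)$. Should any integrability check prove delicate, the fallback is to run each argument on the bounded $G^n$ and pass to the monotone limit via \cref{rk:kernel.approx}, using \cref{le:singular.posdef} to preserve the strict inequality in the limit.
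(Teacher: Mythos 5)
Your outline identifies the right ingredients (absence of jumps, strict positive definiteness via \cref{le:singular.posdef}, the cross-term bound for Fubini), but the "main line" has a genuine gap in the admissibility step, and the monotone approximation $G^n\uparrow G$ that you relegate to a fallback is in fact indispensable. Concretely: your pathwise domination gives $d|\tilde{\mathbb{E}}[X^i]|\leq d\tilde{\mathbb{E}}[|X^i|]$ as measures, and applying it twice bounds the double integral by $\iint G(|t-s|)\,d\tilde{\mathbb{E}}[|X^i|]_s\,d\tilde{\mathbb{E}}[|X^i|]_t$. But Tonelli does \emph{not} take you from there to $\tilde{\mathbb{E}}\bigl[\iint G\,d|X^i|_s\,d|X^i|_t\bigr]$: the quadratic form is not linear in the measure, and unpacking $\tilde{\mathbb{E}}$ produces a cross term $\iint G\,d|X^i(\tilde\omega)|_s\,d|X^i(\tilde\omega')|_t$ over two independent copies $\tilde\omega,\tilde\omega'$, not the diagonal term. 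Bounding the cross term by the diagonal requires positive definiteness of $G$ (the inequality $2\iint G\,d\mu\,d\nu\leq \iint G\,d\mu\,d\mu+\iint G\,d\nu\,d\nu$), i.e.\ a Jensen-type inequality for the $G$-quadratic form applied to the total variation processes. This is exactly what the paper proves (\cref{lem:TV.meas.ineq}), via \cref{lem:jensen.ineq.precursor} applied to $t\mapsto TV(X;[0,t])$ --- and since \cref{lem:jensen.ineq.precursor} is only established for bounded kernels, the paper runs it for $G^n$ and concludes by monotone convergence.

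The same issue recurs in the other parts: the bounded-case proof of \cref{prop:reduced.rep.obj} is not reusable verbatim because its key identity
\begin{equation*}
\tilde{\mathbb{E}}\Bigl[\textstyle\int_0^{t-}G(t-s)\,dX^j_s\Bigr]=\int_0^{t-}G(t-s)\,d\tilde{\mathbb{E}}[X^j_s]
\end{equation*}
is proved there by uniform approximation of $G$ by $C^1$ functions on $[0,T]$, which fails for unbounded $G$; moreover the appeal to \cref{thm:dual.pred.proj} needs the integrand $\int_0^{t-}G(t-s)\sum_j dX^j_s$ to be controlled, which is no longer automatic (cf.\ \cref{rk:impact.finite.quasi}). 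The paper's route is to prove a stability result (convergence of $J_{G^n}$ to $J_G$ by dominated convergence, using \eqref{eq:admissible.integrability} and the cross-term bound) and transfer \cref{prop:reduced.rep.obj}, the cross-term cancellation, and \cref{lem:jensen.ineq.precursor} from $G^n$ to $G$ in the limit; strictness in \cref{thm:non.existence.random.eq} is then restored by \cref{le:singular.posdef}, exactly as you say. So your fallback is the actual proof; promote it to the main argument and supply the positive-definiteness step in the admissibility claim, and the proposal is complete.
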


We emphasize that this is purely a de-randomization and uniqueness result; it makes no assertions about (non)existence of a deterministic equilibrium. Existence results are available in \emph{regularized} execution games, typically using additional quadratic costs (e.g., on the trading rate), and for broad classes of decay kernels; see, for instance, \cite{AbiJaberNeumanVoss.24} and the references therein. Absent regularization, the conclusions may be different for \emph{singular} kernels with $G(0)=\infty$. In this regime block trades are inadmissible, which removes one of the mechanisms behind the equilibrium nonexistence in classical Obizhaeva--Wang type models. We are unaware of work in this direction. %

\section{Conclusion}\label{se:conclusion}

In our model of an optimal execution game with transient price impact, we have shown that Nash equilibria are unique and do not contain randomized strategies. Our starting point was the non-existence of pure Nash equilibria in certain optimal execution games with full information, and the question whether allowing for randomization restores existence. We have chosen a direct extension of the full information setup predominant in the literature. While in some other games, such an extension enables Nash equilibria, that is not the case here. Our setting is the strongest form of randomization because traders cannot see the impact of each other's actions, and thus gives a conservative answer to the initial question. A natural direction for future work is a systematic treatment of partial-information variants in which the affected price is observed while the unaffected price is latent, so that agents can potentially infer impact and competitors' activity through filtering.

\section{Proofs}\label{se:proofs}

\subsection{Notation}

Recall that we write $\int_a^b H_{t}dX_{t}$ to denote the integral over $[a,b]\subset \mathbb{R}$ for suitable integrands~$H$ and integrators~$X$. In particular, $\int_a^b H_{t}dX_{t}$ includes the contribution $H_{a}\Delta X_{a}$ of a jump in $X$ at $t=a$ and similarly at $t=b$. Whereas, $\int_a^{b-} H_{t}dX_{t}$ is the integral over $[a,b)$, excluding the jump at $b$. 

The subsequent subsections, except for the last one, all consider a (bounded) kernel $G$ satisfying \cref{as:kernel}. The last subsection deals with the extension to singular kernels, meaning that $G$ satisfies \cref{as:singular.kernel}.

\subsection{Proofs for \cref{se:setup} (Objective Function)}\label{se:proofs.setup}

\begin{proof}[Proof of \cref{prop:obj.func.rep}]
Expanding \eqref{eqn:orig.obj} gives
\begin{align*}J(X^{i};\boldsymbol{X}^{-i})&=\mathbb{E}\bigg[\int_0^T P_{t-} dX_t^{i}+\frac{1}{2}\sum_{t\in[0,T]}  \Delta P_t \Delta X_t^{i} +\int_0^T I_{t-} dX_t^{i} + \frac{1}{2}\sum_{t\in[0,T]} \Delta I_t \Delta X_t ^{i}\\
&\quad \quad \quad \quad +(X_{0-}^iP_{0-}-X^{i}_TP_T)+C(X^{i})\bigg].
\end{align*}
Using integration by parts,
\begin{align*}
    \int_0^T P_{t-} dX_t^{i}  &= P_TX_T^i-P_{0-}X_{0-}^i-\int_0^T X_{t-}^i dP_t -[X^i,P]_T
\end{align*}
and thus
\begin{align*}J(X^{i};\boldsymbol{X}^{-i})&=\mathbb{E}\bigg[-\int_0^T X^{i}_{t-} dP_t - [X^{i},P]_T+\frac{1}{2}\sum_{t\in[0,T]}  \Delta P_t \Delta X_t^{i} \\
&\quad \quad \quad \quad +\int_0^T I_{t-} dX_t^{i} + \frac{1}{2}\sum_{t\in[0,T]} \Delta I_t \Delta X_t^{i}  +C(X^{i})\bigg].
\end{align*}
As $P$ is a square-integrable $(\mathbb{F}^i,\mathbb{P})$-martingale and $X^i$ is a bounded predictable process, both $\int_0^t X_{s-}^i dP_s$ and $[X^i,P]_t=\int_0^t\Delta X^i_sdP_s$ are true $(\mathbb{F}^i,\mathbb{P}) $-martingales, hence have vanishing expectation. Moreover,
\[\frac{1}{2}\sum_{t\in[0,T]} \Delta X_t^{i} \Delta P_t=\frac{1}{2}[X^{i},P]_T,\]
so its expectation also vanishes. This proves the claimed representation~\eqref{eqn:obj.impact.rep}.

For the remaining representation~\eqref{eqn:obj.schied.rep}, we insert the definition~\eqref{eq:def.I.and.S} of~$I$ into \eqref{eqn:obj.impact.rep} to get
\begin{align}\label{eqn:obj.intermediate.rep}J(X^{i};&\boldsymbol{X}^{-i})
= \mathbb{E}\bigg[C(X^{i})+ \int_0^T\!\!\int_0^{t-} G(t-s) \sum_{j=1}^NdX_s^{j} dX_t^{i} + \frac{G(0)}{2}\sum_{j=1}^N\sum_{t\in[0,T]}  \Delta X_t^{j} \Delta X_t ^{i}\bigg].%
\end{align}
Splitting the integral and applying Fubini's theorem to interchange the order of integration,
\begin{align*}\int_0^T\!\!\int_0^{t-} &G(t-s) dX_s^{i} dX_t^{i}\\
&=\frac{1}{2}\int_0^T\!\!\int_0^{t-} G(t-s) dX_s^{i} dX_t^{i}+\frac{1}{2}\int_0^T\!\!\int_0^{t-} G(t-s) dX_s^{i} dX_t^{i}\\
&=\frac{1}{2}\int_0^T\!\!\int_0^{t-} G(t-s) dX_s^{i} dX_t^{i}+\frac{1}{2}\int_0^T\!\!\int_{s}^T G(t-s) dX_t^{i} dX_s^{i}-\frac{G(0)}{2}\sum_{s\in[0,T]}(\Delta X^i_s)^2\\
&=\frac{1}{2}\int_0^T\!\!\int_0^T G(|t-s|) dX_s^{i} dX_t^{i}-\frac{G(0)}{2}\sum_{t\in[0,T]}(\Delta X^i_t)^2.
\end{align*}
Substituting this expression into \eqref{eqn:obj.intermediate.rep} completes the proof of~\eqref{eqn:obj.impact.rep}. Finiteness of the objective when $\mathbb{E}[C(X^{i})]<\infty$ is clear from~\eqref{eqn:obj.impact.rep}. 
\end{proof}

\subsection{Proofs for \cref{se:projections} (Projections)}\label{se:proofs.projections}

\begin{lemma}\label{lem:filtrationRC}
  The filtration $(\mathcal{F}_t^{\noll}\otimes \sigma(\{\emptyset,\tilde{\Omega}\}))_{t\geq0}$ is right continuous and hence its augmentation is the market filtration $\mathbb{F}^{M}$.
\end{lemma}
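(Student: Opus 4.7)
The plan is to exploit that the second factor $\sigma(\{\emptyset,\tilde{\Omega}\})$ is trivial, so the product $\sigma$-algebra $\mathcal{F}_t^{0}\otimes \sigma(\{\emptyset,\tilde{\Omega}\})$ is simply the collection of ``cylinders'' $\{A\times \tilde{\Omega}:A\in\mathcal{F}_t^{0}\}$. The uniqueness of such a representation (the set $A$ is determined by any one $\omega^{0}$-slice of $A\times \tilde{\Omega}$) lets us transfer right-continuity from $\mathbb{F}^{0}$ to the product filtration without any loss.

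Concretely, first I would record the elementary fact that every set $B\in\mathcal{F}_t^{0}\otimes \sigma(\{\emptyset,\tilde{\Omega}\})$ has the form $B=A\times \tilde{\Omega}$ for a unique $A\in\mathcal{F}_t^{0}$; this follows from the definition of product $\sigma$-algebra and the fact that the generator on the second factor is $\{\emptyset,\tilde{\Omega}\}$, using a monotone class / $\pi$-$\lambda$ argument if one wants to be meticulous. Next I would verify both inclusions for right-continuity at time $t$. For the nontrivial direction, take $B\in\bigcap_{s>t}\bigl(\mathcal{F}_s^{0}\otimes \sigma(\{\emptyset,\tilde{\Omega}\})\bigr)$; by the cylinder description, for each $s>t$ there is a unique $A_s\in\mathcal{F}_s^{0}$ with $B=A_s\times \tilde{\Omega}$, and uniqueness forces all $A_s$ to equal one common set $A$. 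Then $A\in\bigcap_{s>t}\mathcal{F}_s^{0}=\mathcal{F}_t^{0}$, where the last equality uses that $\mathbb{F}^{0}$ satisfies the usual conditions and is in particular right-continuous. Hence $B\in\mathcal{F}_t^{0}\otimes \sigma(\{\emptyset,\tilde{\Omega}\})$. The reverse inclusion is immediate.

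Finally, the ``usual augmentation'' of a right-continuous filtration is, by definition, the augmentation by $\mathbb{P}$-nullsets. Since the preceding step shows $(\mathcal{F}_t^{0}\otimes \sigma(\{\emptyset,\tilde{\Omega}\}))_{t\ge0}$ is already right-continuous, its usual augmentation coincides with its augmentation by nullsets, which is exactly the definition of $\mathbb{F}^{M}$ given in \cref{se:setup}.

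I do not anticipate any genuine obstacle here; the only subtle point is the uniqueness of the cylinder representation $B=A\times \tilde{\Omega}$, but this is a standard consequence of the structure of the product $\sigma$-algebra with a trivial factor and can be settled once and for all at the start of the proof.
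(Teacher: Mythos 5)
Your proof is correct and matches the paper's (the paper simply declares the right-continuity ``elementary to verify'' and notes that the second claim is immediate from the definition of $\mathbb{F}^M$); your cylinder-set argument $B=A\times\tilde{\Omega}$ with $A$ unique, combined with right-continuity of $\mathbb{F}^{\noll}$, is exactly the intended verification. The only cosmetic point: the set $A$ is recovered from a fixed-$\tilde{\omega}$ slice of $B$ (a subset of $\Omega^{\noll}$), not an ``$\omega^{\noll}$-slice,'' but this does not affect the argument.
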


\begin{proof}
  It is elementary to verify the right-continuity. The second claim follows as $\mathbb{F}^{M}$ was defined as the usual augmentation of $(\mathcal{F}_t^{\noll}\otimes \sigma(\{\emptyset,\tilde{\Omega}\}))_{t\geq0}$.
\end{proof} 

We record the following  consequence of \cref{lem:filtrationRC} for ease of reference.

\begin{lemma}\label{lem:mkt.cond.exp.simplification}
  Let $t\geq0$ and let $U$ be an integrable $\mathcal{F}$-measurable random variable. Then $\mathbb{E}\left[U\big|\mathcal{F}_t^{\noll}\otimes \sigma(\{\emptyset,\tilde{\Omega}\})\right]=\mathbb{E}\left[U\vert \mathcal{F}^M_t\right]$ $\mathbb{P}$-a.s.
\end{lemma}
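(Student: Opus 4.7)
The plan is to reduce the claim to the elementary measure-theoretic fact that conditional expectation is invariant under augmentation of the conditioning $\sigma$-field by $\mathbb{P}$-null sets. By \cref{lem:filtrationRC}, the filtration $(\mathcal{F}_t^{\noll}\otimes\sigma(\{\emptyset,\tilde{\Omega}\}))_{t\geq0}$ is already right-continuous, so its ``usual augmentation'' coincides with its augmentation by $\mathbb{P}$-null sets, which by definition is $\mathbb{F}^M$. Hence $\mathcal{F}^M_t$ is exactly $\mathcal{F}_t^{\noll}\otimes\sigma(\{\emptyset,\tilde{\Omega}\})$ enriched by the $\mathbb{P}$-null sets of $\mathcal{F}$.

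The key step is then to invoke the following standard fact: if $\mathcal{G}\subset\mathcal{G}'\subset\mathcal{F}$ are $\sigma$-fields on $(\Omega,\mathcal{F},\mathbb{P})$ such that $\mathcal{G}'$ is obtained from $\mathcal{G}$ by adjoining $\mathbb{P}$-null sets (equivalently, every $\mathcal{G}'$-measurable random variable has a $\mathcal{G}$-measurable version agreeing with it $\mathbb{P}$-a.s.), then for any integrable $U$ one has $\mathbb{E}[U\vert\mathcal{G}] = \mathbb{E}[U\vert\mathcal{G}']$ $\mathbb{P}$-a.s. I would either cite this (e.g.\ from \cite{DellacherieMeyer.82}) or give the one-line verification: the random variable $V:=\mathbb{E}[U\vert\mathcal{G}]$ is $\mathcal{G}$-measurable, hence $\mathcal{G}'$-measurable, and for any $A\in\mathcal{G}'$ there is $A'\in\mathcal{G}$ with $\mathbb{P}(A\triangle A')=0$, so $\mathbb{E}[U\mathds{1}_A]=\mathbb{E}[U\mathds{1}_{A'}]=\mathbb{E}[V\mathds{1}_{A'}]=\mathbb{E}[V\mathds{1}_A]$, which identifies $V$ as a version of $\mathbb{E}[U\vert\mathcal{G}']$.

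Applying this with $\mathcal{G}=\mathcal{F}_t^{\noll}\otimes\sigma(\{\emptyset,\tilde{\Omega}\})$ and $\mathcal{G}'=\mathcal{F}^M_t$ immediately yields the claimed identity. There is no genuine obstacle here; the only thing to be careful about is the appeal to \cref{lem:filtrationRC}, which ensures that no passage to a right-continuous version is needed and hence that $\mathcal{F}^M_t$ really is a null-set enlargement of $\mathcal{F}_t^{\noll}\otimes\sigma(\{\emptyset,\tilde{\Omega}\})$ rather than a strictly larger $\sigma$-field containing $\mathcal{F}_{t+}^{\noll}\otimes\sigma(\{\emptyset,\tilde{\Omega}\})$.
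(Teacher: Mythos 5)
Your proposal is correct and is exactly the argument the paper intends: the paper states the lemma as an immediate consequence of \cref{lem:filtrationRC} without writing out the details, and those details are precisely what you supply (right-continuity means $\mathcal{F}^M_t$ is a null-set enlargement of $\mathcal{F}_t^{\noll}\otimes\sigma(\{\emptyset,\tilde{\Omega}\})$, and conditional expectation is invariant under such enlargement). No gaps.
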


The next lemma formalizes that conditioning a random variable from agent $i$'s filtration to the market filtration is equivalent to integrating out the randomization.

\begin{lemma}\label{lem:cond.exp.mkt}
    Let $t\geq0$ and let $U$ be an integrable $\mathcal{F}^i_t$-measurable random variable for some $i\in\{1,\dots, N\}$. Then 
    $\tilde{\mathbb{E}}[U]=\mathbb{E}[U|\mathcal{F}_{t}^M]$ $\mathbb{P}$-a.s.
\end{lemma}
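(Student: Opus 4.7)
The plan is to verify directly the two defining properties of $\mathbb{E}[U|\mathcal{F}_t^M]$ for $\tilde{\mathbb{E}}[U]$, namely $\mathcal{F}_t^M$-measurability and the averaging identity against sets in $\mathcal{F}_t^M$. The substantive content will be a single application of Fubini's theorem; the rest is bookkeeping with augmentations, for which \cref{lem:mkt.cond.exp.simplification} does most of the work.

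First I would address measurability. Because $\mathcal{F}_t^i$ is the usual augmentation of $(\mathcal{F}_t^{\noll}\otimes\sigma(\tilde{\pi}_i))_{t\geq 0}$, which is already right-continuous (by the same reasoning as in \cref{lem:filtrationRC}), $U$ admits an $\mathcal{F}_t^{\noll}\otimes\sigma(\tilde{\pi}_i)$-measurable version, and replacing $U$ by it affects neither side of the claim. Fubini's theorem applied on $(\Omega^{\noll}\times\tilde{\Omega},\mathcal{F}_t^{\noll}\otimes\tilde{\mathcal{F}},\mathbb{P}^{\noll}\otimes\tilde{\mathbb{P}})$ then makes $\omega^{\noll}\mapsto\tilde{\mathbb{E}}[U](\omega^{\noll})$ well defined (and integrable) as an $\mathcal{F}_t^{\noll}$-measurable function. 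Regarded as a function on $\Omega$ that is constant in $\tilde{\omega}$, it is $\mathcal{F}_t^{\noll}\otimes\sigma(\{\emptyset,\tilde{\Omega}\})$-measurable, and hence $\mathcal{F}_t^M$-measurable.

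Next I would check the averaging identity. By \cref{lem:mkt.cond.exp.simplification}, it suffices to show
\[
\int_A U\,d\mathbb{P} \;=\; \int_A \tilde{\mathbb{E}}[U]\,d\mathbb{P} \qquad\text{for every } A\in \mathcal{F}_t^{\noll}\otimes\sigma(\{\emptyset,\tilde{\Omega}\}),
\]
and every such $A$ has the form $A^{\noll}\times\tilde{\Omega}$ with $A^{\noll}\in\mathcal{F}_t^{\noll}$. Another application of Fubini yields
\[
\int_{A^{\noll}\times\tilde{\Omega}} U\,d\mathbb{P} \;=\; \int_{A^{\noll}}\!\int_{\tilde{\Omega}} U(\omega^{\noll},\tilde{\omega})\,\tilde{\mathbb{P}}(d\tilde{\omega})\,\mathbb{P}^{\noll}(d\omega^{\noll}) \;=\; \int_{A^{\noll}} \tilde{\mathbb{E}}[U]\,d\mathbb{P}^{\noll} \;=\; \int_{A^{\noll}\times\tilde{\Omega}} \tilde{\mathbb{E}}[U]\,d\mathbb{P},
\]
which combined with the measurability above identifies $\tilde{\mathbb{E}}[U]$ with $\mathbb{E}[U|\mathcal{F}_t^M]$ $\mathbb{P}$-a.s.

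The main obstacle is really just the careful juggling of the unaugmented product $\sigma$-algebras and their usual augmentations. On the $\mathbb{F}^i$ side this is resolved by passing to a raw product-measurable representative of $U$; on the $\mathbb{F}^M$ side it is handled precisely by \cref{lem:mkt.cond.exp.simplification}, which says that the $\mathbb{P}$-null augmentation does not change the conditional expectation. Once those two reductions are made, everything collapses to a routine application of Fubini on $\Omega^{\noll}\times\tilde{\Omega}$.
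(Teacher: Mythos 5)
Your proposal is correct and follows essentially the same route as the paper's proof: reduce via \cref{lem:mkt.cond.exp.simplification} to the unaugmented product $\sigma$-field, establish $\mathcal{F}_t^{\noll}\otimes\sigma(\{\emptyset,\tilde{\Omega}\})$-measurability of $\tilde{\mathbb{E}}[U]$, and verify the averaging identity on product sets $A'\times\tilde{\Omega}$ by Fubini. Your extra remark about passing to a raw product-measurable representative of $U$ is a slightly more explicit handling of the completion on the $\mathbb{F}^i$ side, which the paper subsumes under the measurability of marginal expectations on completed product spaces.
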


\begin{proof}
    By \cref{lem:mkt.cond.exp.simplification} it suffices to show that $\tilde{\mathbb{E}}[U]$ is a version of $\mathbb{E}[U|\mathcal{F}_{t}^{\noll}\otimes\sigma(\{\emptyset,\tilde{\Omega}\})]$.  The measurability of marginal expectations on completed product spaces entails that $\omega^{\noll}\mapsto\tilde{\mathbb{E}}\left[U\right](\omega^{\noll})$ is $\mathcal{F}_t$-measurable, and hence that $(\omega^{\noll},\tilde{\omega}) \mapsto\tilde{\mathbb{E}}\left[U\right](\omega^{\noll},\tilde{\omega})$ is $\mathcal{F}_{t}^{\noll}\otimes\sigma(\{\emptyset,\tilde{\Omega}\})$-measurable. Let $A\in\mathcal{F}_{t}^{\noll}\otimes\sigma(\{\emptyset,\tilde{\Omega}\})$, so that $A=A' \times \tilde{\Omega}$ for some $A'\in\mathcal{F}_{t}^{\noll}$. By the definition of the product measure $\mathbb{P}$, we have
$\int_{A} U d\mathbb{P} =\int_{A'} \int_{\tilde{\Omega}}U d\tilde{\mathbb{P}}d\mathbb{P}^{\noll}=\int_{A'} \tilde{\mathbb{E}}\left[U\right]d\mathbb{P}^{\noll}=\int_{A}\tilde{\mathbb{E}}\left[U\right]d\mathbb{P}. 
$
\end{proof}

We can now detail the proof of \cref{prop:pred.is.exp}, which is a version of \cref {lem:cond.exp.mkt} for processes.

\begin{proof}[Proof of \cref{prop:pred.is.exp}]
    If $Z$ is an elementary $\mathbb{F}^i$-predictable process then it takes the form
    \[Z = Z_{0-}\mathds{1}_{0-}(\cdot)+ Z_{0}\mathds{1}_{0}(\cdot)+\sum_{k=1}^d Z_{s_k}\mathds{1}_{(s_k,t_k]}(\cdot)\]
    for some $s_k<t_k$, $\mathcal{F}_{s_k}^i$-measurable $Z_{s_k}$, and $\mathcal{F}_{0-}^i$-measurable $Z_{0-},Z_0$. Taking expectations with respect to $\tilde{\mathbb{P}}$ and invoking \cref{lem:cond.exp.mkt}, it is clear that $\tilde{\mathbb{E}}[Z]$ is an elementary  $\mathbb{F}^M$-predictable process. By a monotone class argument it follows that $\tilde{\mathbb{E}}[Z]$ is $\mathbb{F}^M$-predictable also for a general bounded predictable processes~$Z$. %

    Let $Z$ be bounded and $\mathbb{F}^i$-predictable; we show that ${}^{\rm p} Z = \tilde{\mathbb{E}}[Z]$. Since $\tilde{\mathbb{E}}[Z]$ is predictable, $\tilde{\mathbb{E}}[Z_{\tau}]$ is $\mathcal{F}^M_{\tau-}$-measurable for all $\mathbb{F}^M$-predictable times $\tau$.\footnote{Our notation is ambiguous, but note that the integral $\tilde{\mathbb{E}}[Z_{\tau}]$ indeed coincides the the evaluation of the process $t\mapsto \tilde{\mathbb{E}}[Z_{t}]$ at $t=\tau$ when $\tau$ is an $\mathbb{F}^M$-stopping time.}
    Thus, we may check the definition of the conditional expectation. Recall that by definition,
    \begin{align*}\mathcal{F}_{\tau-}^M &=\sigma\left(\left\{B \cap \{t<\tau\}: t\geq0, B\in \mathcal{F}_t^M\right\} \cup\mathcal{F}_{0-}^M\right).
    \end{align*}
    All of these generating sets are necessarily of the form $(A'\times \tilde{\Omega}) \cup N$ for a $\mathbb{P}$-null set $N$, and this extends to the $\sigma$-field. For any $A=(A'\times \tilde{\Omega}) \cup N\in\mathcal{F}_{\tau-}^M$, 
    we argue as in the proof of \cref{lem:cond.exp.mkt} that $\int_{A} Z_\tau d\mathbb{P} = \int_{A}\tilde{\mathbb{E}}\left[Z_\tau\right]d\mathbb{P}$.
    We conclude that 
    $\mathbb{E}[Z_\tau|\mathcal{F}_{\tau-}^M] =\tilde{\mathbb{E}}[Z_\tau]$ and so,
    \[\mathbb{E}[1_{\{\tau < \infty\}} Z_\tau\vert\mathcal{F}_{\tau-}^M]=1_{\{\tau < \infty\}} \tilde{\mathbb{E}}[Z_\tau].\]
    By the uniqueness up to evanescence in \cref{thm:pred.proj}, ${}^{\rm p} Z = \tilde{\mathbb{E}}[Z]$. 

We turn to the claim that $Z^{\rm p}=\tilde{\mathbb{E}}[Z]$ when $Z$ is c\`adl\`ag and of bounded variation. By the above, $\tilde{\mathbb{E}}[Z]$ is a $\mathbb{F}^M$-predictable process of bounded variation. In particular, $\tilde{\mathbb{E}}[Z]=\tilde{\mathbb{E}}[Z]^{\rm p}$. To show that $Z^{\rm p}=\tilde{\mathbb{E}}[Z]$, it then suffices (by \cite[Theorem VI.75(b)]{DellacherieMeyer.82}) to check  that
$$
  \mathbb{E}\left[Z_{0}| \mathcal{F}^M_{0-}\right]=\mathbb{E}\left[ \tilde{\mathbb{E}}[Z_{0}] \big| \mathcal{F}^M_{0-}\right], \qquad
\mathbb{E}\left[Z_{\infty}-Z_{t}\big| \mathcal{F}^M_{t}\right]=\mathbb{E}\left[ \tilde{\mathbb{E}}[Z_{\infty}] - \tilde{\mathbb{E}}[Z_{t}] \big| \mathcal{F}^M_{t}\right], \quad t\geq0.
$$
We prove the second part; the first is similar. Fix $t\geq0$. To simplify notation, let $\zeta=Z_{\infty}-Z_{t}$. %
Moreover, note that it suffices to show
\begin{equation}\label{eq:F.proof.cond}
  \mathbb{E}\left[\zeta| \mathcal{F}^M_{\infty}\right]=\tilde{\mathbb{E}}[\zeta],
\end{equation}
as an application of the tower property then yields the claim. To show~\eqref{eq:F.proof.cond}, note that by
construction of $\mathbb{F}^i$ we almost surely have $\zeta(\omega^{\noll},\tilde{\omega})=\bar{\zeta}(\omega^{\noll}, \tilde{\omega}_{i})$
for some measurable function~$\bar{\zeta}$. 
Using the product form of $\mathbb{P}$,
\begin{align*}
  \mathbb{E}[ \zeta \vert\mathcal{F}^M_{\infty}] (\omega^{\noll},\tilde{\omega})
  = \int \bar{\zeta}(\omega^{\noll},\tilde{\omega}_{i})
  \tilde{\mathbb{P}}_{i}(d\tilde{\omega}_{i}) 
  =  \tilde{\mathbb{E}}[\zeta] (\omega^{\noll}),
\end{align*} 
completing the proof.
\end{proof}

Finally, we prove the analogue for projecting onto the filtration of another agent.

\begin{proof}[Proof of \cref{prop:pred.proj.private.info}]
By \cref{prop:pred.is.exp}, %
$\tilde{\mathbb{E}}[Z]$ is an $\mathbb{F}^M$-predictable process. In particular, $\tilde{\mathbb{E}}[Z]$ is predictable in the filtration~$\mathbb{F}^i\supset \mathbb{F}^M$ and it suffices to check that
    \[\mathds{1}_{\{\tau < \infty\}}\tilde{\mathbb{E}}[Z_t]|_{t=\tau}=\mathbb{E}[\mathds{1}_{\{\tau < \infty\}}Z_\tau\,\vert\mathcal{F}^i_{\tau-}]\]
    for all $\mathbb{F}^i$-predictable times $\tau$. 
    By construction of $\mathbb{F}^i$ and $\mathbb{F}^j$ we almost surely have 
    \[
    \tau(\omega^{\noll},\tilde{\omega})=\bar{\tau}(\omega^{\noll}, \tilde{\omega}_{i}), \qquad  \mathds{1}_{\{t < \infty\}}Z_{t}(\omega^{\noll},\tilde{\omega})=\bar{Z}(t,\omega^{\noll},\tilde{\omega}_{j})
    \]
    for some measurable functions $\bar{\tau},\bar{Z}$. Hence $\mathds{1}_{\{\tau < \infty\}}Z_\tau$ is the composition
    \[
      (\mathds{1}_{\{\tau < \infty\}}Z_\tau)(\omega^{\noll},\tilde{\omega})= \bar{Z}(\bar{\tau}(\omega^{\noll},\tilde{\omega}_{i}),\omega^{\noll}, \tilde{\omega}_{j})
    \]
    and by the product form of $\mathbb{P}$,
		\begin{align*}
      \mathbb{E}[\mathds{1}_{\{\tau < \infty\}}Z_\tau\,\vert\mathcal{F}^i_{\infty}] (\omega^{\noll},\tilde{\omega})
      &= \int \bar{Z}(\bar{\tau}(\omega^{\noll},\tilde{\omega}_{i}),\omega^{\noll},\tilde{\omega}_{j})
      \tilde{\mathbb{P}}_{j}(d\tilde{\omega}_{j}) \\
      &= \big(\mathds{1}_{\{\tau < \infty\}} \tilde{\mathbb{E}}[Z_t]|_{t=\tau}\big) (\omega^{\noll},\tilde{\omega}).
		\end{align*} 
		As we already know that the right hand side is measurable with respect to $\mathcal{F}^i_{\tau-}\subset \mathcal{F}^i_{\infty}$, the claim follows by the tower property of conditional expectation. 
\end{proof}

\subsection{Proofs for \cref{se:no.randomized} (De-Randomization)}\label{se:proofs.no.randomized}

\subsubsection{Admissibility of $\tilde{\mathbb{E}}[X]$ (\Cref{lem:exp.admissible})}

In what follows we denote by $TV(X;[a,b])$ the total variation of the path $t\mapsto X_t$ on $[a,b]$. Our first result is straightforward.

\begin{lemma}\label{lem:TV.ineq} 
$TV(\tilde{\mathbb{E}}[X];[a,b])\leq \tilde{\mathbb{E}}[TV(X;[a,b])]$ for any $0\leq a\leq b\leq \infty$
\end{lemma}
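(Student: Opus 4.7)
The plan is to reduce the inequality to Jensen's inequality for the marginal expectation $\tilde{\mathbb{E}}$ applied to the convex function $|\cdot|$, taken along each finite partition of $[a,b]$, and then to pass to the supremum over partitions by a countable approximation argument.

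First I would fix an arbitrary finite partition $\pi:\,a=t_0<t_1<\cdots<t_n=b$ (with the usual convention when $b=\infty$, exploiting that $X$ is constant for $t\geq T$). By linearity of $\tilde{\mathbb{E}}$ followed by Jensen's inequality applied $\omega^{0}$-wise to $|\cdot|$,
\[
  \bigl|\tilde{\mathbb{E}}[X_{t_k}] - \tilde{\mathbb{E}}[X_{t_{k-1}}]\bigr| = \bigl|\tilde{\mathbb{E}}[X_{t_k}-X_{t_{k-1}}]\bigr| \le \tilde{\mathbb{E}}\bigl[|X_{t_k}-X_{t_{k-1}}|\bigr]
\]
$\mathbb{P}^{0}$-a.s.\ for each $k$. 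Summing over $k$, pulling $\tilde{\mathbb{E}}$ out by linearity, and dominating the pathwise variation of $X$ along $\pi$ by the total variation on $[a,b]$ yields
\[
  V_{\pi}(\tilde{\mathbb{E}}[X]) := \sum_{k=1}^{n}\bigl|\tilde{\mathbb{E}}[X_{t_k}]-\tilde{\mathbb{E}}[X_{t_{k-1}}]\bigr|\;\le\;\tilde{\mathbb{E}}\bigl[TV(X;[a,b])\bigr]\quad \mathbb{P}^{0}\text{-a.s.}
\]

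Finally, I would take the supremum over partitions to recover $TV(\tilde{\mathbb{E}}[X];[a,b])$ on the left-hand side. The only subtle point—and what I expect to be the main obstacle—is that the exceptional null set in the above pointwise inequality depends on $\pi$, so an uncountable supremum is not permissible directly. The standard workaround is to restrict to the countable family of partitions with interior endpoints in $\mathbb{Q}\cap(a,b)$ (augmented by $a$ and $b$) and to use right-continuity of the càdlàg paths of $X$ (and hence of a càdlàg modification of $\tilde{\mathbb{E}}[X]$) to show that this countable family already achieves the total variation on either side. The union of the countably many null sets remains null, and the desired inequality
\[
  TV(\tilde{\mathbb{E}}[X];[a,b])\le \tilde{\mathbb{E}}\bigl[TV(X;[a,b])\bigr]
\]
then holds $\mathbb{P}^{0}$-a.s., completing the proof.
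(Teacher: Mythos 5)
Your proof is correct and follows essentially the same route as the paper: triangle/Jensen inequality for $\tilde{\mathbb{E}}$ along an arbitrary partition, domination by $TV(X;[a,b])$, then a supremum over partitions. The "main obstacle" you flag is not actually present: since $\tilde{\mathbb{E}}[X_t](\omega^{\noll})$ is a genuine $\omega^{\noll}$-wise integral and admissible strategies have $\mathbb{P}$-essentially bounded variation, the inequality holds for every partition simultaneously on a single full-measure set of $\omega^{\noll}$ (the exceptional set does not depend on $\pi$), so the paper takes the supremum directly without any rational-partition approximation.
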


\begin{proof}
    For any partition $\Pi=\{t_0,\dots,t_n\}$ of $[a,b]$,
\begin{align*}
    \sum_{i=0}^{n-1}|\tilde{\mathbb{E}}[X_{t_{i+1}}]-\tilde{\mathbb{E}}[X_{t_i}]|&\leq \tilde{\mathbb{E}}\left[\sum_{i=0}^{n-1}|X_{t_{i+1}}-X_{t_i}|\right]\leq \tilde{\mathbb{E}}[TV(X;[a,b])].
\end{align*}
Taking the supremum over partitions $\Pi$ yields the claim.
\end{proof}
 
\begin{proof}[Proof of \Cref{lem:exp.admissible}]
As $\tilde{\mathbb{E}}[X^i]={}^{\rm p}X^i$ (cf.\ \cref{prop:pred.is.exp}), it is non-randomized and $\mathbb{F}^M$-predictable, hence $\mathbb{F}^i$-predictable. Using \cref{lem:TV.ineq}, it is then straightforward to verify the conditions of \cref{def:admissible.X}. %
\end{proof}

\subsubsection{Proof of \cref{prop:reduced.rep.obj}}

\begin{proof}[Proof of \cref{prop:reduced.rep.obj}]
    We first note a trivial case of~\eqref{eq:dual.predictable.proj.properties}: 
    if the process $Z$ in  \cref{thm:dual.pred.proj} is itself $\mathbb{G}$-predictable, clearly $Z^{\rm p^\mathbb{G}}=Z$, and then~\eqref{eq:dual.predictable.proj.properties} includes the identity
    \[\mathbb{E}\left[\int_0^\infty\xi_s dZ_s\right]=\mathbb{E}\left[\int_0^\infty {}^{\rm p^{\mathbb{G}}}\xi_s dZ_s\right].\]
    
    Consider \eqref{eqn:obj.schied.rep} and more specifically the terms that depend on $\boldsymbol{X}^{-i}$. Denoting those by $J^{-i}(X^i;\boldsymbol{X}^{-i})$, applying \cref{thm:dual.pred.proj}, and using the $\mathbb{F}^i$-predictability of $X^i$ as just noted,
\begin{align*}
    J^{-i}(X^i;\boldsymbol{X}^{-i})
    &:=\mathbb{E}\Bigg[\int_0^T\!\!\int_0^{t-} G(t-s) \sum_{j\not=i}dX_s^{j} dX_t^{i} + \frac{G(0)}{2}\sum_{j\not=i}\sum_{t\in[0,T]} \Delta X_t^{j}\Delta X_t ^{i}\Bigg]\\
    &=\mathbb{E}\Bigg[\int_0^T\leftindex^{\;{\rm p}^i\!\!\!}{\left(\int_0^{t-}G(t-s)\sum_{j\not=i}dX_s^{j}\right)}dX_t^{i} + \frac{G(0)}{2}\sum_{j\not=i}\sum_{t\in[0,T]} \leftindex^{{\rm p}^i}(\Delta X_t^{j})\Delta X_t ^{i}\Bigg].
\end{align*}
 (To apply \cref{thm:dual.pred.proj}, note that the sum is a special case of an integral, $\sum_{t\in[0,T]} \Delta X_t^{j}\Delta X_t ^{i}= \int_0^T \Delta X_t^{j} dX_t ^{i}$.) 
Applying \cref{prop:pred.proj.private.info} we then get 
\begin{align}\mathbb{E}&\Bigg[\int_0^T\leftindex^{\;{\rm p}^i\!\!\!}{\left(\int_0^{t-} G(t-s) \sum_{j\not=i}dX_s^{j}\right)}dX_t^{i} + \frac{G(0)}{2}\sum_{j\not=i}\sum_{t\in[0,T]} \leftindex^{{\rm p}^i}(\Delta X_t^{j})\Delta X_t ^{i}\Bigg] \nonumber\\
&=\mathbb{E}\Bigg[\int_0^T\leftindex^{\;{\rm p}\!\!\!}{\left(\int_0^{t-} G(t-s)\sum_{j\not=i}dX_s^{j}\right)}dX_t^{i} + \frac{G(0)}{2}\sum_{j\not=i}\sum_{t\in[0,T]} \leftindex^{{\rm p}}(\Delta X_t^{j})\Delta X_t ^{i}\Bigg] \nonumber\\
&=\mathbb{E}\Bigg[\int_0^T\leftindex^{\;{\rm p}\!\!\!}{\left(\int_0^{t-} G(t-s) \sum_{j\not=i}dX_s^{j}\right)}dX_t^{i} + \frac{G(0)}{2}\sum_{j\not=i}\sum_{t\in[0,T]} \Delta ((X_t^{j})^{\rm p})\Delta X_t ^{i}\Bigg],\label{eqn:intermediate.eqn.pred.proj.obj}
\end{align}
where the last equality used the general fact that $\leftindex^{{\rm p}}(\Delta A)=\Delta (A^{\rm p})$ for any c\`adl\`ag adapted process $A$ with integrable variation (see \cite[3.21, p.\,33]{JacodShiryaev.03}). Denote the first integrand in~\eqref{eqn:intermediate.eqn.pred.proj.obj} by
\[I_t^{-i}:=\int_0^{t-} G(t-s) \sum_{j\not=i}dX_s^{j}. \]%
From \cref{prop:pred.is.exp} we know that ${}^{\rm p}I_t^{-i} = \tilde{\mathbb{E}}[I_t^{-i}]$ and we show below that 
\begin{align}\label{eq:proof.reduced.rep.obj}
    \tilde{\mathbb{E}}[I_t^{-i}] &=
    \int_0^{t-} G(t-s)\sum_{j\not=i} d\tilde{\mathbb{E}}[X^j_s].
\end{align}
Next, we substitute these identities into the first term of \eqref{eqn:intermediate.eqn.pred.proj.obj}.
Using also that $(X^j)^{\rm p} = \tilde{\mathbb{E}}[X^j]$ by \cref{prop:pred.is.exp} to rewrite the second term of of \eqref{eqn:intermediate.eqn.pred.proj.obj}, we get
\begin{align*}
    J^{-i}(X^i;\boldsymbol{X}^{-i})
    &=\mathbb{E}\Bigg[\int_0^T\!\!\int_0^{t-} G(t-s) \sum_{j\not=i}d\tilde{\mathbb{E}}[X_s^{j}] dX_t^{i} + \frac{G(0)}{2}\sum_{j\not=i}\sum_{t\in[0,T]} \Delta \tilde{\mathbb{E}}[X^j_s]\Delta X_t ^{i}\Bigg].
\end{align*}
Comparing with \eqref{eqn:obj.schied.rep} we see that the right hand side equals $J^{-i}(X^i;\tilde{\mathbb{E}}[\boldsymbol{X}^{-i}])$, thus establishing that $J(X^i;\boldsymbol{X}^{-i})=J(X^i;\tilde{\mathbb{E}}[\boldsymbol{X}^{-i}])$.

It remains to show~\eqref{eq:proof.reduced.rep.obj}. This is clear for $t=0$ so we may focus on $t>0$. Recall that~$G$ is continuous on $[0,T]$. Suppose for a moment that $G$ is even continuously differentiable. Then integrating by parts gives
\begin{align*}
    \int_0^r G(r-s)\sum_{j\not=i} dX^{j}_s
    = G(0) \sum_{j\not=i} X^j_r - G(r) \sum_{j\not=i} X^j_{0-} +\int_0^r G'(r-s) \sum_{j\not=i} X^j_s  ds.
\end{align*}
Taking expectations, applying Fubini's theorem, and reversing the integration by parts yields
\begin{align}
    \tilde{\mathbb{E}}\left[\int_0^r G(r-s)\sum_{j\not=i} dX^{j}_s\right] &=G(0) \sum_{j\not=i} \tilde{\mathbb{E}}[X^j_r] - G(r) \sum_{j\not=i} \tilde{\mathbb{E}}[X^j_{0-}] +\int_0^r G'(r-s) \sum_{j\not=i}\tilde{\mathbb{E}}[X^j_s]  ds \nonumber\\
    &=\int_0^r G(r-s)\sum_{j\not=i} d\tilde{\mathbb{E}}[X^j_s]. \label{eq:for.proof.reduced.rep.obj}
\end{align}
In fact, these steps hold for any continuously differentiable function~$G$. The general continuous kernel $G$ can be approximated uniformly on~$[0,T]$ with continuously differentiable functions, hence it follows that~\eqref{eq:for.proof.reduced.rep.obj} also holds for the general kernel~$G$. The claim~\eqref{eq:proof.reduced.rep.obj} now follows by taking the limit $r\uparrow t$ and using that~$G$ is continuous and bounded.
\end{proof}

\subsubsection{Counterexample regarding \cref{as:C.no.randomized}}\label{app:counterexample}

We give a counterexample showing that Assumption~4.3 is \emph{not} automatic from convexity of $C$ on our set of admissible strategies.

Fix $T>0$ and consider the deterministic setup $\Omega^{0}=\{\omega^{0}\}$ with
$\mathcal F^{0}=\{\emptyset,\Omega^{0}\}$, $\mathbb P^{0}(\Omega^{0})=1$, and the trivial filtration
$\mathbb F^{0}=(\mathcal F^{0}_{t})_{t\ge0}$ with $\mathcal F^{0}_{t}=\mathcal F^{0}$ for all $t$.
Suppose there is a single player ($N=1$) and let the randomization space be $\tilde\Omega=\mathbb N=\{1,2,\dots\}$ with $\tilde{\mathcal F}=2^{\mathbb N}$ and a probability
$\tilde{\mathbb P}$ assigning strictly positive mass to each $n\in\mathbb N$; e.g.
$\tilde{\mathbb P}(\{n\})=2^{-n}$ for $n\ge1$.
Set $\Omega:=\Omega^{0}\times\tilde\Omega$, $\mathcal F:=\mathcal F^{0}\otimes\tilde{\mathcal F}$ and
$\mathbb P:=\mathbb P^{0}\otimes\tilde{\mathbb P}$, and let the market filtration $\mathbb F^{M}$ be the usual
augmentation of $(\mathcal F^{0}_{t}\otimes\sigma(\{\emptyset,\tilde\Omega\}))_{t\ge0}$, so that $\mathbb F^{M}$ is
trivial.

Choose a strictly increasing sequence of times $(t_n)_{n\ge1}\subset(0,T)$ with $t_n\uparrow T$.
Define a process $X=(X_t)_{t\ge0}$ on $\Omega$ by
\[
X_t(\omega^{0},n):=\mathbf 1_{[t_n,\infty)}(t),\qquad t\ge0,\ n\in\mathbb N.
\]
Then $X$ is c\`adl\`ag, of bounded variation, and constant for $t\ge T$. Moreover, since $\mathbb F^{1}$ is the usual
augmentation of $(\mathcal F^{0}_{t}\otimes\sigma(\tilde\pi_{1}))_{t\ge0}$, the process $X$ is trivially
$\mathbb F^{1}$-predictable and therefore admissible for initial data $x^1=0$.

Its marginal expectation is the deterministic (hence $\mathbb F^{M}$-predictable) c\`adl\`ag function
\[
\tilde{\mathbb E}[X_t]
=\sum_{n\ge1}\tilde{\mathbb P}(\{n\})\,\mathbf 1_{[t_n,\infty)}(t),\qquad t\geq0,
\]
which has jumps of size $\tilde{\mathbb P}(\{n\})>0$ at each time $t_n$ and therefore has \emph{infinitely many}
jumps on $[0,T]$.

Now let $\mathcal X$ denote the set of real-valued c\`adl\`ag functions $z:[0,\infty)\to\mathbb R$ of finite variation
such that $z_{0-}=0$ and $z_t=z_T$ for all $t\ge T$ (i.e., the path space of admissible strategies with initial inventory $0$). Define the convex set
\[
K:=\Big\{z\in\mathcal X:\ z \text{ has only finitely many jumps on }[0,T] \Big\}.
\]
Then $K$ is convex: if $y,z\in K$ and $\alpha\in[0,1]$, the jump times of $\alpha y+(1-\alpha)z$ are contained in the
finite union of the jump time sets of $y$ and $z$, hence $\alpha y+(1-\alpha)z\in K$.
Define the extended-valued cost functional $C:\mathcal X\to[0,\infty]$ as the indicator of $K$,
\[
C(z):=\iota_{K}(z):=
\begin{cases}
0, & z\in K,\\
\infty, & z\notin K.
\end{cases}
\]
This $C$ is convex (as the indicator of a convex set).

By construction, $X(\cdot,n)\in K$ for every $n$ (it has exactly one jump), hence $C(X)=0$ $\mathbb P$-a.s. and thus
$\mathbb E[C(X)]=0$.
On the other hand, $\tilde{\mathbb E}[X]\notin K$ because it has infinitely many jumps, so
$C(\tilde{\mathbb E}[X])=\infty$ and $\mathbb E[C(\tilde{\mathbb E}[X])]=\infty$.
Therefore,
\[
\mathbb E[C(X)] < \mathbb E\big[C(\tilde{\mathbb E}[X])\big],
\]
showing that Assumption~4.3 does not follow from convexity of $C$ alone.

\subsubsection{Proof of \cref{le:C.no.randomized.holds.for.quadratic}}

Next, we show that the quadratic costs of \cref{ex:costsC} satisfy the Jensen-type inequality of \cref{as:C.no.randomized}.
    
\begin{proof}[Proof of \cref{le:C.no.randomized.holds.for.quadratic}]
    If $\mathbb{E}\left[C(X)\right]=\infty$, the conclusion is immediate. 
    We therefore fix an admissible strategy $X$ for some trader $i$ with $\mathbb{E}\left[C(X)\right]<\infty$.
    Observe that by Fubini's theorem, %
    \begin{equation}\mathbb{E}\left[C(X)\right]
    = \int_{\Omega^{\noll}}\tilde{\mathbb{E}}[C(X)]d\mathbb{P}^{\noll}=\int_{\Omega^{\noll}}\int_{\tilde{\Omega}}\tilde{\mathbb{E}}[C(X)]d\tilde{\mathbb{P}}d\mathbb{P}^{\noll}=\mathbb{E}\left[\tilde{\mathbb{E}}[C(X)]\right].\label{eqn:C.init.equality}
    \end{equation}
    Invoking Fubini's theorem once again with a pointwise application of Jensen's inequality on the quadratic function $z\mapsto z^2$ for $z\in\mathbb{R}$,
\begin{align}\mathbb{E}\left[\tilde{\mathbb{E}}[C(X)]\right]&=
		\mathbb{E}\left[\tilde{\mathbb{E}}\left[\frac{\varepsilon}{2}\int_0^T (\dot{X}_t)^2dt
		+\frac{1}{2}\sum_{t\in[0,T]} \vartheta_t(\Delta X_t)^2
		+\frac{\varphi}{2} (X_T)^2\right]\right]\nonumber\\
        &\geq \mathbb{E}\left[\frac{\varepsilon}{2}\int_0^T \left(\tilde{\mathbb{E}}\left[\dot{X}_t\right]\right)^2dt
		+\tilde{\mathbb{E}}\left[\frac{1}{2}\sum_{t\in[0,T]} \vartheta_t\left(\Delta X_t\right)^2\right]
		+\frac{\varphi}{2} \left(\tilde{\mathbb{E}}\left[X_T\right]\right)^2\right].%
        \label{eqn:C.pt.1}
\end{align}
With a view towards the first term, we next argue that if $\varepsilon>0$, then $\frac{d}{dt}\tilde{\mathbb{E}}\left[X_t\right]=\tilde{\mathbb{E}}[\dot{X}_t]$ $dt\otimes d\mathbb{P}$ almost surely. 
Indeed, if $\varepsilon>0$, then by the definition in \cref{ex:costsC} and $\mathbb{E}\left[C(X)\right]<\infty$ we know that $X$ is absolutely continuous. 
Thus, given the initial value $\tilde{\mathbb{E}}[X_{0}]=X_{0} = x^i\in\mathbb{R}$,
\[\tilde{\mathbb{E}}[X_t] = x^i+ \tilde{\mathbb{E}}\left[\int_0^t \dot{X}_sds\right]. \]
Since $X$ is admissible, it has $\mathbb{P}$-essentially bounded variation and in particular
$\int_0^t |\dot{X}_s|ds$ is uniformly bounded. Thus, we may apply Fubini's theorem to obtain
$\tilde{\mathbb{E}}[X_t] = x^i+\int_0^t \tilde{\mathbb{E}} [\dot{X}_s]ds$. In other words, $\tilde{\mathbb{E}}[X_t]$ is absolutely continuous with derivative $\frac{d}{dt}\tilde{\mathbb{E}}\left[X_t\right]=\tilde{\mathbb{E}}\left[\dot{X}_t\right]$ $dt\otimes d\mathbb{P}$ almost surely, and we have
\begin{equation}\label{eqn:C.pt.1.firstTerm}
    \mathbb{E}\left[\tilde{\mathbb{E}}\left[\frac{\varepsilon}{2}\int_0^T (\dot{X}_t)^2dt \right]\right]
    \geq\mathbb{E}\left[\frac{\varepsilon}{2}\int_0^T \left(\frac{d}{dt}\tilde{\mathbb{E}}\left[X_t\right]\right)^2dt \right].
\end{equation}

It remains to treat the second term in~\eqref{eqn:C.pt.1}. By \cref{thm:dual.pred.proj} and the identity $X^{\rm p}=\tilde{\mathbb{E}}[X]$ from \cref{prop:pred.is.exp},
\begin{align*}0&\leq \mathbb{E}\left[
		\frac{1}{2}\sum_{t\in[0,T]} \vartheta_t\left(\Delta X_t^{i}-\Delta \tilde{\mathbb{E}}[X_t^{i}]\right)^2\right]\\
        &=\mathbb{E}\left[\frac{1}{2}\sum_{t\in[0,T]} \vartheta_t\left((\Delta X_t)^2-2\Delta X_t\Delta \tilde{\mathbb{E}}[X_t]+(\Delta \tilde{\mathbb{E}}[X_t])^2\right)\right]\\
        &=\mathbb{E}\left[\frac{1}{2}\sum_{t\in[0,T]} \vartheta_t(\Delta X_t)^2-\frac{1}{2}\sum_{t\in[0,T]} \vartheta_t(\Delta \tilde{\mathbb{E}}[X_t])^2\right] \\
        &\quad +\mathbb{E}\left[\sum_{t\in[0,T]} \vartheta_t(\Delta \tilde{\mathbb{E}}[X_t])^2-\sum_{t\in[0,T]} \vartheta_t\Delta X_t\Delta \tilde{\mathbb{E}}[X_t]\right]\\
        &=\mathbb{E}\left[\frac{1}{2}\sum_{t\in[0,T]} \vartheta_t(\Delta X_t)^2-\frac{1}{2}\sum_{t\in[0,T]} \vartheta_t(\Delta \tilde{\mathbb{E}}[X_t])^2\right] \\
        & \quad +\mathbb{E}\left[\sum_{t\in[0,T]} \vartheta_t(\Delta \tilde{\mathbb{E}}[X_t])^2-\sum_{t\in[0,T]} \vartheta_t\Delta X_t^{\rm p}\Delta \tilde{\mathbb{E}}[X_t]\right]\\
        &=\mathbb{E}\left[\frac{1}{2}\sum_{t\in[0,T]} \vartheta_t(\Delta X_t)^2-\frac{1}{2}\sum_{t\in[0,T]} \vartheta_t(\Delta \tilde{\mathbb{E}}[X_t])^2\right].
\end{align*}
By rearranging we conclude that
\begin{equation}\label{eqn:C.pt.2}\mathbb{E}\left[\tilde{\mathbb{E}}\left[\frac{1}{2}\sum_{t\in[0,T]} \vartheta_t(\Delta X_t)^2\right]\right]=\mathbb{E}\left[\frac{1}{2}\sum_{t\in[0,T]} \vartheta_t(\Delta X_t)^2\right] \geq \mathbb{E}\left[\frac{1}{2}\sum_{t\in[0,T]} \vartheta_t(\Delta \tilde{\mathbb{E}}[X_t])^2\right].
\end{equation}
Combining \cref{eqn:C.init.equality,eqn:C.pt.1.firstTerm,eqn:C.pt.1,eqn:C.pt.2} completes the proof.
\end{proof}

\subsubsection{Proof of \cref{thm:non.existence.random.eq}}

We first prove the following technical lemma.

\begin{lemma}\label{lem:jensen.ineq.precursor}
    For any %
    kernel $G$ satisfying \cref{as:kernel} and any admissible strategy~$X$,
    \begin{align*}
    \mathbb{E}&\left[\frac{1}{2}\int_0^\infty\!\!\int_0^\infty G(|t-s|) d(X_s-\tilde{\mathbb{E}}[X_s]) d(X_t-\tilde{\mathbb{E}}[X_t]) \right]\\
    &\quad \quad \quad \quad \quad =\mathbb{E}\left[\frac{1}{2}\int_0^\infty\!\!\int_0^\infty G(|t-s|) dX_sdX_t -\frac{1}{2}\int_0^\infty\!\!\int_0^\infty G( |t-s|)d\tilde{\mathbb{E}}[X_s] d\tilde{\mathbb{E}}[X_t]\right].
\end{align*}
\end{lemma}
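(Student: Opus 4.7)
The plan is to expand the left-hand side as a symmetric bilinear form and reduce the identity to a single cross-term equality. Define, for admissible $Y,Z$,
\[
B(Y,Z):=\mathbb{E}\!\left[\int_0^\infty\!\!\int_0^\infty G(|t-s|)\,dY_s\,dZ_t\right].
\]
Since $G$ is bounded under \cref{as:kernel} and both $X$ and $\tilde{\mathbb{E}}[X]$ have ($\mathbb{P}$-essentially) bounded total variation (\cref{def:admissible.X} and \cref{lem:exp.admissible}), the integrand is bounded $\omega$-wise by $\|G\|_\infty\, TV(Y)\,TV(Z)$ and all four mixed integrals lie in $L^1(\mathbb{P})$. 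By Fubini and the symmetry $G(|t-s|)=G(|s-t|)$, the form $B$ is symmetric in its two arguments. Expanding $d(X-\tilde{\mathbb{E}}[X])_s\,d(X-\tilde{\mathbb{E}}[X])_t$ bilinearly then gives
\[
\mathrm{LHS}=\tfrac{1}{2}\bigl[B(X,X)-2B(X,\tilde{\mathbb{E}}[X])+B(\tilde{\mathbb{E}}[X],\tilde{\mathbb{E}}[X])\bigr],
\]
so the lemma boils down to the cross-term identity $B(X,\tilde{\mathbb{E}}[X])=B(\tilde{\mathbb{E}}[X],\tilde{\mathbb{E}}[X])$.

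To prove this cross-term identity, I set $\xi_t:=\int_0^\infty G(|t-s|)\,dX_s$, so $B(X,\tilde{\mathbb{E}}[X])=\mathbb{E}\!\left[\int \xi_t\,d\tilde{\mathbb{E}}[X_t]\right]$ by Fubini. Since $\tilde{\mathbb{E}}[X]$ is $\mathbb{F}^M$-predictable (by \cref{prop:pred.is.exp}) and of bounded variation, it equals its own dual predictable projection, so the identity \eqref{eq:dual.predictable.proj.properties} (with $\mathbb{G}=\mathbb{F}^M$) lets me replace $\xi$ by its $\mathbb{F}^M$-predictable projection:
\[
\mathbb{E}\!\left[\int \xi_t\,d\tilde{\mathbb{E}}[X_t]\right]=\mathbb{E}\!\left[\int {}^{\rm p}\xi_t\,d\tilde{\mathbb{E}}[X_t]\right].
\]
Should only essential boundedness of $\xi$ be available rather than uniform boundedness, a standard localization on the sublevel sets of $TV(X)$ together with monotone convergence handles that technicality.

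The crux is then to identify ${}^{\rm p}\xi_t=\int_0^\infty G(|t-s|)\,d\tilde{\mathbb{E}}[X_s]$. Since $\xi_t$ is $\mathcal{F}^i$-measurable, \cref{prop:pred.is.exp} (pointwise in $t$, via \cref{lem:cond.exp.mkt}) gives ${}^{\rm p}\xi_t=\tilde{\mathbb{E}}[\xi_t]$; it remains to commute $\tilde{\mathbb{E}}$ with the Stieltjes integral in $s$. I would reuse exactly the approximation already carried out in the proof of \cref{prop:reduced.rep.obj}: for $G\in C^1$, integration by parts rewrites $\int G(|t-s|)\,dX_s$ as a quantity to which Fubini applies, yielding $\tilde{\mathbb{E}}[\xi_t]=\int G(|t-s|)\,d\tilde{\mathbb{E}}[X_s]$; a general continuous $G$ is handled by uniform approximation on $[0,T]$ by $C^1$ kernels. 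Plugging this back produces $B(X,\tilde{\mathbb{E}}[X])=B(\tilde{\mathbb{E}}[X],\tilde{\mathbb{E}}[X])$, and the lemma follows. The main obstacle is precisely this commutation of the marginal expectation with a Stieltjes integral against the very process being averaged; the $C^1$ approximation imported from \cref{prop:reduced.rep.obj} is the key tool, and everything else is symmetry and bilinearity of $B$.
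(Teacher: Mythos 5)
Your bilinear expansion of the left-hand side and the reduction to the cross-term identity $B(X,\tilde{\mathbb{E}}[X])=B(\tilde{\mathbb{E}}[X],\tilde{\mathbb{E}}[X])$ match the paper's first step. The gap is in your ``crux'' step. The process $\xi_t=\int_0^\infty G(|t-s|)\,dX_s$ is \emph{anticipating}: it contains the contribution $\int_{t}^{\infty}G(s-t)\,dX_s$ and is therefore only $\mathcal{F}^i_\infty$-measurable, not $\mathcal{F}^i_t$-measurable, let alone $\mathbb{F}^i$-predictable. \Cref{prop:pred.is.exp} and \cref{lem:cond.exp.mkt} apply only to predictable processes (resp.\ $\mathcal{F}^i_t$-measurable variables), so they do not yield ${}^{\rm p}\xi_t=\tilde{\mathbb{E}}[\xi_t]$. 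Indeed that identity is false here: by \cref{thm:pred.proj}, ${}^{\rm p}\xi_t=\mathbb{E}[\xi_t\,\vert\,\mathcal{F}^M_{t-}]$ averages out the post-$t$ market randomness as well as the randomization, whereas $\tilde{\mathbb{E}}[\xi_t]=\int_0^\infty G(|t-s|)\,d\tilde{\mathbb{E}}[X_s]$ is not even $\mathcal{F}^M_{t-}$-measurable. So the identification ${}^{\rm p}\xi_t=\int_0^\infty G(|t-s|)\,d\tilde{\mathbb{E}}[X_s]$ cannot be obtained from the cited results.

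The cross-term identity you are after is nevertheless true, and the repair is to drop the projection detour at this point: since $d\tilde{\mathbb{E}}[X_t]$ depends only on $\omega^{\noll}$, Fubini on the product $\mathbb{P}=\mathbb{P}^{\noll}\otimes\tilde{\mathbb{P}}$ gives directly $\mathbb{E}\left[\int\xi_t\,d\tilde{\mathbb{E}}[X_t]\right]=\mathbb{E}\left[\int\tilde{\mathbb{E}}[\xi_t]\,d\tilde{\mathbb{E}}[X_t]\right]$, and then your $C^1$-approximation argument (extended to the non-causal half $\int_t^\infty G(s-t)\,dX_s$, which is not covered by the computation in \cref{prop:reduced.rep.obj} as stated) identifies $\tilde{\mathbb{E}}[\xi_t]$ with $\int_0^\infty G(|t-s|)\,d\tilde{\mathbb{E}}[X_s]$. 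This measurability issue is precisely why the paper's proof takes the more laborious route of first decomposing $\frac{1}{2}\int_0^T\!\!\int_0^T G(|t-s|)\,dM_s\,dN_t$ into causal integrals over $\{s<t\}$ plus jump terms (identity \eqref{eqn:int.identity}): the resulting inner integrands $\int_0^{t-}G(t-s)\,dX_s$ \emph{are} $\mathbb{F}^i$-predictable, so the predictable and dual predictable projection theorems apply legitimately there. Either repair works, but as written your argument misapplies \cref{prop:pred.is.exp}.
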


\begin{proof}
    Since $X$ and $\tilde{\mathbb{E}}[X]$ are constant after $T$ it suffices to check the equality for upper limit of integration $T$. We have
    
    \begin{align}
    \mathbb{E}\bigg[\frac{1}{2}\int_0^T\!\!\int_0^T& G(|t-s|) d(X_s^{i}-\tilde{\mathbb{E}}[X_s^i]) d(X_t^{i}-\tilde{\mathbb{E}}[X_t^i]) \bigg]\nonumber\\
    &=\mathbb{E}\bigg[\frac{1}{2}\int_0^T\!\!\int_0^T G(|t-s|) dX_s^{i}dX_t^{i} +\frac{1}{2}\int_0^T\!\!\int_0^T G(|t-s|)d\tilde{\mathbb{E}}[X_s^i] d\tilde{\mathbb{E}}[X_t^i] \nonumber\\
    &\quad \quad \quad -\int_0^T\!\!\int_0^T G(|t-s|)d\tilde{\mathbb{E}}[X_s^i]dX_t^i \bigg]\nonumber\\
    &=\mathbb{E}\bigg[\frac{1}{2}\int_0^T\!\!\int_0^T G(|t-s|) dX_s^{i}dX_t^{i} -\frac{1}{2}\int_0^T\!\!\int_0^T G(|t-s|)d\tilde{\mathbb{E}}[X_s^i] d\tilde{\mathbb{E}}[X_t^i] \nonumber\\
    &\quad \quad \quad +\int_0^T\!\!\int_0^T G(|t-s|)d\tilde{\mathbb{E}}[X_s^i] d\tilde{\mathbb{E}}[X_t^i] -\int_0^T\!\!\int_0^T G(|t-s|)d\tilde{\mathbb{E}}[X_s^i]dX_t^i \bigg].\label{eqn:intermediate.calc.non.exist.1}
\end{align}
Next, note that for arbitrary Lebesgue--Stieltjes integrators $M,N$ we have
\begin{align*}
    \int_0^T\!\!\int_0^{t-} &G(t-s) dM_sdN_t \\
    & = \frac{1}{2}\int_0^T\!\!\int_0^{t-} G(t-s)dM_sdN_t+ \frac{1}{2}\int_0^T\!\!\int_0^{t-} G(t-s) dM_sdN_t\\
    &=\frac{1}{2}\int_0^T\!\!\int_0^{t-} G(t-s) dM_sdN_t+ \frac{1}{2}\int_0^T\!\!\int_s^{T} G(t-s) dN_tdM_s-\frac{G(0)}{2}\sum_{t\in[0,T]}\Delta M _t\Delta N_t\\
    &=\frac{1}{2}\int_0^T\!\!\int_0^{t-} G(t-s) dM_sdN_t+ \frac{1}{2}\int_0^T\!\!\int_t^T G(s-t) dM_sdN_t -\frac{G(0)}{2}\sum_{t\in[0,T]}\Delta M _t\Delta N_t \\
    &\quad \quad +\frac{1}{2}\int_0^T\!\!\int_t^{T} G(s-t) dN_sdM_t - \frac{1}{2}\int_0^T\!\!\int_t^T G(s-t)dM_sdN_t \\
    &=\frac{1}{2}\int_0^T\!\!\int_0^{T} G(|t-s|) dM_sdN_t-\frac{G(0)}{2}\sum_{t\in[0,T]}\Delta M _t\Delta N_t\\
    &\quad \quad +\frac{1}{2}\int_0^T\!\!\int_0^{t-} G(t-s) dM_sdN_t - \frac{1}{2}\int_0^T\!\!\int_0^{t-} G(t-s) dN_sdM_t.
\end{align*}
Rearranging, this leads to the identity
\begin{align} 
\frac{1}{2}\int_0^T\!\!\int_0^{T} G(|t-s|) dM_sdN_t&= \int_0^T\!\!\int_0^{t-} G(t-s) dM_sdN_t+\frac{G(0)}{2}\sum_{t\in[0,T]}\Delta M _t\Delta N_t \label{eqn:int.identity}\\
&\quad +\frac{1}{2}\int_0^T\!\!\int_0^{t-} G(t-s) dN_sdM_t -\frac{1}{2}\int_0^T\!\!\int_0^{t-} G(t-s)dM_sdN_t.\nonumber
\end{align}
In particular, if $M\equiv N$,
\begin{align} 
\frac{1}{2}\int_0^T\!\!\int_0^{T} G(|t-s|) dM_sdM_t= \int_0^T\!\!\int_0^{t-} G(t-s) dM_sdM_t+\frac{G(0)}{2}\sum_{t\in[0,T]}\Delta M _t\Delta M_t.\label{eqn:int.identity.2}
\end{align}
Using \eqref{eqn:int.identity} with $M_s=\tilde{\mathbb{E}}[X_s^i]$ and $N_t=X_t^i$ and then applying \cref{thm:dual.pred.proj} yields
\begin{align*}
    \mathbb{E}\Bigg[\int_0^T\!\!\int_0^T &G(|t-s|) d\tilde{\mathbb{E}}[X_s^i]dX_t^i\Bigg]\\
    &=\mathbb{E}\Bigg[2\int_0^T\!\!\int_0^{t-} G(t-s) d\tilde{\mathbb{E}}[X_s^i]dX_t^i+G(0)\sum_{t\in[0,T]}\Delta \tilde{\mathbb{E}}[X_t^i]\Delta X_t^i\\
    &\quad \quad + \int_0^T\!\!\int_0^{t-} G(t-s) dX_s^id\tilde{\mathbb{E}}[X_t^i] -\int_0^T\!\!\int_0^{t-} G(t-s) d\tilde{\mathbb{E}}[X_s^i]dX_t^i\Bigg]\\
    &=\mathbb{E}\Bigg[2\int_0^T\!\!\int_0^{t-} G(t-s) d\tilde{\mathbb{E}}[X_s^i]d(X_t^i)^{\rm p}+G(0)\sum_{t\in[0,T]}\Delta \tilde{\mathbb{E}}[X_t^i]\Delta (X_t^i)^{\rm p}\\
    &\quad \quad + \int_0^T\leftindex^{\;\rm p\!\!\!}{\left(\int_0^{t-} G(t-s) dX_s^i\right)}d\tilde{\mathbb{E}}[X_t^i] -\int_0^T\!\!\int_0^{t-} G(t-s) d\tilde{\mathbb{E}}[X_s^i]d(X_t^i)^{\rm p}\Bigg]\\
    &=\mathbb{E}\Bigg[2\int_0^T\!\!\int_0^{t-} G(t-s) d\tilde{\mathbb{E}}[X_s^i]d\tilde{\mathbb{E}}[X_t^i]+G(0)\sum_{t\in[0,T]}\left(\Delta \tilde{\mathbb{E}}[X_t^i]\right)^2\\
    &\quad \quad + \int_0^T\leftindex^{\;\rm p\!\!\!}{\left(\int_0^{t-} G(t-s) dX_s^i\right)}d\tilde{\mathbb{E}}[X_t^i] -\int_0^T\!\!\int_0^{t-} G(t-s) d\tilde{\mathbb{E}}[X_s^i]d\tilde{\mathbb{E}}[X_t^i]\Bigg].
\end{align*}
As seen in the proof of \cref{prop:reduced.rep.obj},
\[\leftindex^{\;\rm p\!\!\!}{\left(\int_0^{t-} G(t-s) dX_s^i\right)} = \int_0^{t-} G(t-s) d\tilde{\mathbb{E}}[X_s^i].\]
Substituting this, simplifying, and applying \eqref{eqn:int.identity.2} with $M_s=\tilde{\mathbb{E}}[X_s^i]$,
\begin{align}
    \mathbb{E}\Bigg[\int_0^T\!\!\int_0^T &G(|t-s|)d\tilde{\mathbb{E}}[X_s^i]dX_t^i\Bigg]\nonumber\\&=\mathbb{E}\Bigg[2\int_0^T\!\!\int_0^{t-} G(t-s) d\tilde{\mathbb{E}}[X_s^i]d\tilde{\mathbb{E}}[X_t^i]+G(0)\sum_{t\in[0,T]}\left(\Delta \tilde{\mathbb{E}}[X_t^i]\right)^2\nonumber\\
    &\quad \quad + \int_0^T\!\!\int_0^{t-} G(t-s) d\tilde{\mathbb{E}}[X^i_s]d\tilde{\mathbb{E}}[X_t^i] -\int_0^T\!\!\int_0^{t-} G(t-s)d\tilde{\mathbb{E}}[X_s^i]d\tilde{\mathbb{E}}[X_t^i]\Bigg]\nonumber\\
    &=\mathbb{E}\Bigg[2\int_0^T\!\!\int_0^{t-} G(t-s) d\tilde{\mathbb{E}}[X_s^i]d\tilde{\mathbb{E}}[X_t^i]+G(0)\sum_{t\in[0,T]}\left(\Delta \tilde{\mathbb{E}}[X_t^i]\right)^2\Bigg]\nonumber\\
    &=\mathbb{E}\left[\int_0^T\!\!\int_0^T G(|t-s|)d\tilde{\mathbb{E}}[X_s^i]d\tilde{\mathbb{E}}[X_t^i]\right].\label{eqn:intermediate.calc.non.exist.2}
\end{align}
Substituting \eqref{eqn:intermediate.calc.non.exist.2} into \eqref{eqn:intermediate.calc.non.exist.1} and simplifying completes the proof.
\end{proof}

Next, we prove the main result.

\begin{proof}[Proof of \cref{thm:non.existence.random.eq}]
Using \cref{prop:reduced.rep.obj} and~\eqref{eqn:obj.schied.rep} and the definition of the dual predictable projection in \cref{thm:dual.pred.proj},
\begin{align}
J(X^i;\boldsymbol{X}^{-i}) 
&=J(X^i;\tilde{\mathbb{E}}[\boldsymbol{X}^{-i}]) \nonumber \\
&=\mathbb{E}\Bigg[\frac{1}{2}\int_0^T\!\!\int_0^T G(|t-s|) dX_s^{i} dX_t^i + \int_0^T\!\!\int_0^{t-} G(t-s) \sum_{j\not=i}d\tilde{\mathbb{E}}[X_s^{j}] d(X_t^{i})^{\rm p} \nonumber \\
        &\quad \quad \quad \quad + \frac{G(0)}{2}\sum_{j\not=i}\sum_{t\in[0,T]} \Delta \tilde{\mathbb{E}}[X_t^{j}]\Delta(X_t^{i})^{\rm p} +C(X^{i})\Bigg]\nonumber\\
        &=\mathbb{E}\Bigg[\frac{1}{2}\int_0^T\!\!\int_0^T G(|t-s|) dX_s^{i} dX^i_t + \int_0^T\!\!\int_0^{t-} G(t-s)\sum_{j\not=i}d\tilde{\mathbb{E}}[X_s^{j}] d\tilde{\mathbb{E}}[X^i_t] \nonumber \\
        &\quad \quad \quad \quad + \frac{G(0)}{2}\sum_{j\not=i}\sum_{t\in[0,T]} \Delta \tilde{\mathbb{E}}[X_t^{j}]\Delta \tilde{\mathbb{E}}[X^i_t] +C(X^{i})\Bigg],\label{eqn:obj.rep.non.exist.pf}
    \end{align}
where the last equality used, similarly as in the proof of \cref{prop:reduced.rep.obj}, that $\leftindex^{{\rm p}}X^i=(X^i)^{\rm p} = \tilde{\mathbb{E}}[X^i]$ by \cref{prop:pred.is.exp} and the rule $\leftindex^{{\rm p}}(\Delta A)=\Delta (A^{\rm p})$. 

By our assumption, $X^i$ cannot be equal to $\tilde{\mathbb{E}}[X^i]$ up to evanescence. As $x^i=X^i_{0-} = \tilde{\mathbb{E}}[X^i_{0-}]$, this implies that the random measure $d(X_\cdot^{i}-\tilde{\mathbb{E}}[X_\cdot^i])$ is non-zero with positive probability. Recalling that~$G$ is strict positive definite and invoking \cref{lem:jensen.ineq.precursor}, it follows that

\begin{align*}
    0<\mathbb{E}&\left[\frac{1}{2}\int_0^T\!\!\int_0^T G(|t-s|) d(X_s-\tilde{\mathbb{E}}[X_s]) d(X_t-\tilde{\mathbb{E}}[X_t]) \right]\\
    &\quad \quad \quad \quad \quad =\mathbb{E}\left[\frac{1}{2}\int_0^T\!\!\int_0^T G(|t-s|) dX_sdX_t -\frac{1}{2}\int_0^T\!\!\int_0^T G( |t-s|)d\tilde{\mathbb{E}}[X_s] d\tilde{\mathbb{E}}[X_t]\right]
\end{align*}
which after rearranging states that
\[\mathbb{E}\left[\int_0^T\!\!\int_0^T G(|t-s|) dX_s^{i}dX_t^{i}\right]>\mathbb{E}\left[\int_0^T\!\!\int_0^T G(|t-s|)d\tilde{\mathbb{E}}[X_s^i] d\tilde{\mathbb{E}}[X_t^i]\right].\]
Combining this with \eqref{eqn:obj.rep.non.exist.pf}, we obtain
\begin{align*}J(X^i;\boldsymbol{X}^{-i}) &>\mathbb{E}\Bigg[\frac{1}{2}\int_0^T\!\!\int_0^T G( |t-s|) d\tilde{\mathbb{E}}[X_s^i] d\tilde{\mathbb{E}}[X^i_t] + \int_0^T\!\!\int_0^{t-} G(t-s) \sum_{j\not=i}d\tilde{\mathbb{E}}[X_s^{j}] d\tilde{\mathbb{E}}[X^i_t] \nonumber \\
        &\quad \quad \quad \quad + \frac{G(0)}{2}\sum_{j\not=i}\sum_{t\in[0,T]} \Delta \tilde{\mathbb{E}}[X_t^{j}]\Delta \tilde{\mathbb{E}}[X^i_t] + C(X^{i})\Bigg ].
    \end{align*}
By \cref{as:C.no.randomized} we have $\mathbb{E}[C(X^i)]\geq \mathbb{E}[C(\tilde{\mathbb{E}}[X^i])]$. Thus we may replace $C(X^i)$ by $C(\tilde{\mathbb{E}}[X^i])$ in the above inequality, and then the right hand side is precisely $J(\tilde{\mathbb{E}}[X^i];\tilde{\mathbb{E}}[\boldsymbol{X}^{-i}])$.
Applying \cref{prop:reduced.rep.obj} once more yields the claim:
\[J(X^i;\boldsymbol{X}^{-i})>J(\tilde{\mathbb{E}}[X^i];\tilde{\mathbb{E}}[\boldsymbol{X}^{-i}])=J(\tilde{\mathbb{E}}[X^i];\boldsymbol{X}^{-i}). \qedhere\]
\end{proof}

\subsection{Proofs for \cref{se:uniqueness} (Uniqueness)}\label{se:proofs.uniqueness}

The first step towards uniqueness of Nash equilibria is to ensure that the traders' objectives are convex in their controls. (In fact, the proof of our main result below only uses the assertion of convexity, not the strict convexity provided in the lemma.)

\begin{lemma}\label{lem:strict.convex}
    For any admissible $\boldsymbol{X}^{-i}$,  the objective $J(\,\cdot\,; \boldsymbol{X}^{-i})$ is strictly\footnote{An extended real-valued convex function $F$ is called strictly convex if it is strictly convex on its domain
$\mathrm{dom}(F)=\{h:F(h)<\infty\}$.} convex in its first argument.
\end{lemma}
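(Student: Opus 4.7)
The plan is to exploit the explicit representation \eqref{eqn:obj.schied.rep} of the objective, which splits $J(X^i;\boldsymbol{X}^{-i})$ into a quadratic form in $X^i$ (the double integral $\tfrac12\mathbb{E}[\int_0^T\!\!\int_0^T G(|t-s|)dX^i_s dX^i_t]$), a term linear in $X^i$ (the cross integrals and simultaneous-jump sums against the fixed $\boldsymbol{X}^{-i}$), and the convex cost $\mathbb{E}[C(X^i)]$. Strict convexity will come from strict positive definiteness of $G$, with the other pieces being linear or convex and hence harmless.

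Fix admissible $X^i \ne Y^i$ (up to evanescence) both with $J(\,\cdot\,;\boldsymbol{X}^{-i})<\infty$, let $\lambda\in(0,1)$, and set $Z^i:=\lambda X^i+(1-\lambda)Y^i$. I would first verify that $Z^i$ is admissible per \cref{def:admissible.X}: it is c\`adl\`ag and $\mathbb{F}^i$-predictable as a linear combination of processes with those properties, $Z^i_{0-}=\lambda x^i+(1-\lambda)x^i=x^i$, it is constant on $[T,\infty)$, and its total variation is $\mathbb{P}$-essentially bounded by the analogous bounds for $X^i$ and $Y^i$. Finiteness of $J(Z^i;\boldsymbol{X}^{-i})$ will then follow from boundedness of $G$ and convexity of $C$.

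Next I would use the bilinearity of $B(U,V):=\int_0^T\!\!\int_0^T G(|t-s|)\,dU_s dV_t$. Writing $Q(U):=B(U,U)$, a direct expansion gives the parallelogram-type identity
$$\lambda Q(X^i)+(1-\lambda)Q(Y^i)-Q(Z^i)=\lambda(1-\lambda)\,Q(X^i-Y^i).$$
Because $X^i_{0-}-Y^i_{0-}=0$ while $X^i-Y^i$ is not evanescent, there is a set $A$ with $\mathbb{P}(A)>0$ on which the c\`adl\`ag bounded-variation path $t\mapsto (X^i-Y^i)_t(\omega)$ is non-constant. \Cref{as:kernel} then yields $Q(X^i-Y^i)(\omega)>0$ on $A$ (and $\geq 0$ off $A$), so that $\mathbb{E}[Q(X^i-Y^i)]>0$, which is the key strict inequality.

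The remaining terms in \eqref{eqn:obj.schied.rep} are linear in $X^i$ with $\boldsymbol{X}^{-i}$ fixed, so their contributions to $\lambda J(X^i;\boldsymbol{X}^{-i})+(1-\lambda)J(Y^i;\boldsymbol{X}^{-i})-J(Z^i;\boldsymbol{X}^{-i})$ vanish, while convexity of $C$ contributes a nonnegative term. Combining with the strict quadratic inequality above yields
$$J(Z^i;\boldsymbol{X}^{-i})<\lambda J(X^i;\boldsymbol{X}^{-i})+(1-\lambda)J(Y^i;\boldsymbol{X}^{-i}),$$
which is the claim. The only mildly delicate point is ensuring integrability so that the decomposition manipulations are legitimate; this is handled by boundedness of $G$ together with the bounded-variation requirement, and by the assumption that $X^i,Y^i$ lie in the domain of $J(\,\cdot\,;\boldsymbol{X}^{-i})$.
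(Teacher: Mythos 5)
Your proposal is correct and follows essentially the same route as the paper: expand the quadratic form $Q$ via bilinearity to isolate the term $\lambda(1-\lambda)Q(X^i-Y^i)$, invoke strict positive definiteness of $G$ (noting the linear terms cancel and convexity of $C$ gives the right sign), and conclude. Your observation that non-evanescence of $X^i-Y^i$ together with the common initial value forces the difference path to be non-constant on a set of positive probability is a slightly more explicit justification of the strict inequality than the paper provides, but it is the same argument.
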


The proof is analogous to the proof in \cite[Lemma 4.7]{SchiedStrehleZhang.17} for the exponential kernel and a quadratic cost $C$. We include it for completeness and to clarify that it relies only on the convexity of the cost~$C$ and the strict positive definiteness of the kernel~$G$.

\begin{proof}
    Let $Y^0$ and $Y^1$ be distinct admissible strategies for trader $i$ with $\mathbb{E}[C(Y^0)]<\infty$ and $\mathbb{E}[C(Y^1)]<\infty$. Fix any admissible strategy profile $\boldsymbol{X}^{-i}$ of the other traders. For any $\alpha\in(0,1)$ define
    \[Y^\alpha:=\alpha Y^1+(1-\alpha) Y^0\]
    so that by the convexity of $C$,
    \begin{align*}
    J(Y^\alpha;\boldsymbol{X}^{-i})&\leq \mathbb{E}\Bigg[\frac{1}{2}\int_0^T\!\!\int_0^T G(|t-s|)dY^\alpha_s dY^\alpha_t + \int_0^T\!\!\int_0^{t-} G(t-s) \sum_{j\not=i}dX_s^{j} dY^\alpha_t \nonumber \\
    &\quad \quad \quad \quad + \frac{G(0)}{2}\sum_{j\not=i}\sum_{t\in[0,T]} \Delta X_t^{j}\Delta Y^\alpha_t +\alpha C(Y^1) + (1-\alpha) C(Y^0)\Bigg].
    \end{align*}
    Observe that the middle two terms are linear in $\alpha$, so to show strict convexity of $J(\cdot;\boldsymbol{X}^{-i})$ it suffices to show
    \begin{align}H(Y^\alpha):=\mathbb{E}\left[\int_0^T\!\!\int_0^T G(|t-s|)dY^\alpha_s dY^\alpha_t\right]&<\alpha\mathbb{E}\left[\int_0^T\!\!\int_0^T G(|t-s|)dY^1_s dY^1_t\right]\label{eqn:suff.cond.convexity}\\
    &\quad \quad +(1-\alpha)\mathbb{E}\left[\int_0^T\!\!\int_0^T G(|t-s|)dY^0_s dY^0_t\right].\nonumber
    \end{align}
    Expanding the left hand side of \eqref{eqn:suff.cond.convexity},
    \begin{align}
        H(Y^\alpha)&=\alpha^2\mathbb{E}\left[\int_0^T\!\!\int_0^T G(|t-s|)dY^1_t dY^1_t\right]+(1-\alpha)^2\mathbb{E}\left[\int_0^T\!\!\int_0^T G(|t-s|)dY^0_s dY^0_t\right]\nonumber\\
        &\quad \quad +2\alpha(1-\alpha) \mathbb{E}\left[\int_0^T\!\!\int_0^T G(|t-s|)dY^1_s dY^0_t\right].\label{eqn:pf.convexity.1}
    \end{align}
    At the same time, by the strict positive definiteness of $G$,
    \begin{align}
    0&<\alpha(1-\alpha)\mathbb{E}\left[\int_0^T\!\!\int_0^T G(|t-s|)d(Y^1_s-Y^0_s) d(Y^1_t-Y^0_t)\right]\nonumber\\
    &=\alpha(1-\alpha)\mathbb{E}\left[\int_0^T\!\!\int_0^T G(|t-s|)dY^1_s dY^1_t\right]+\alpha(1-\alpha)\mathbb{E}\left[\int_0^T\!\!\int_0^T G(|t-s|)dY^0_s dY^0_t\right]\nonumber\\
    &\quad \quad -2\alpha(1-\alpha)\mathbb{E}\left[\int_0^T\!\!\int_0^T G(|t-s|)dY^1_s dY^0_t\right]. \label{eqn:pf.convexity.2}
    \end{align}
    Combining \eqref{eqn:pf.convexity.1} and \eqref{eqn:pf.convexity.2} we obtain
    \begin{align*}
        H(Y^\alpha) &< (\alpha^2+\alpha(1-\alpha))\mathbb{E}\left[\int_0^T\!\!\int_0^T G(|t-s|)dY^1_t dY^1_t\right]\\
        &\quad \quad +((1-\alpha)^2+\alpha(1-\alpha))\mathbb{E}\left[\int_0^T\!\!\int_0^T G(|t-s|)dY^0_s dY^0_t\right].
    \end{align*}
    Simplifying the coefficients recovers \eqref{eqn:suff.cond.convexity} and completes the proof.
\end{proof}

\begin{proof}[Proof of \cref{thm:nash.eq.unique}]
Let $\boldsymbol{X}^0$ and $\boldsymbol{X}^1$ be distinct equilibria. Define
    \[\boldsymbol{X}^\alpha = \alpha \boldsymbol{X}^1 +(1-\alpha)\boldsymbol{X}^0, \ \ \ \alpha\in(0,1), \]
and
\begin{equation*}V(\alpha) = \sum_{i=1}^N\left(J(X^{\alpha,i};\boldsymbol{X}^{0,-i})+J(X^{1-\alpha,i};\boldsymbol{X}^{1,-i})\right).\end{equation*}
It follows from \cref{lem:strict.convex} is that $V$ is strictly convex. Moreover, the Nash equilibrium property implies that
\begin{equation}\label{eqn:min.V}V(\alpha)\geq \sum_{i=1}^N \left(J(X^{0,i};\boldsymbol{X}^{0,-i})+J(X^{1,i};\boldsymbol{X}^{1,-i}) \right)=V(0).
\end{equation}

Next, we will show that the right derivative at zero satisfies $\dot{V}(0+)<0$. This will be  a contradiction to \eqref{eqn:min.V} as it implies that $0$ is not in the subdifferential of $V$ at $\alpha=0$ which, in turn, implies that $V(0)$ is not a minimum.

We treat separately the four terms in the expression~\eqref{eqn:obj.schied.rep} for $J$ and their corresponding terms in~$V$. Let 
\begin{align*}F_1^i(\alpha) &= \mathbb{E}\left[\frac{1}{2}\int_0^T\!\!\int_0^T G(|t-s|) dX_s^{\alpha,i} dX_t^{\alpha, i}+\frac{1}{2}\int_0^T\!\!\int_0^T G(|t-s|) dX_s^{1-\alpha,i} dX_t^{1-\alpha, i}\right].
\end{align*}
Substituting the explicit form of $\boldsymbol{X}^\alpha,\boldsymbol{X}^{1-\alpha}$ and using the linearity of integration with respect to scalar multiplication, we see that $F_1^i(\alpha)$ is quadratic in~$\alpha$. Differentiating at $\alpha=0$,
\begin{align*}
    \dot{F}^i_1(0) &=-\mathbb{E}\left[\int_0^T\!\!\int_0^T G(|t-s|) d(X_s^{1,i} -X_s^{0, i})d(X_t^{1,i} -X_t^{0, i})\right]
\end{align*}
and thus
\begin{equation} \label{eqn:J1.sum} \sum_{i=1}^N\dot{F}^i_1(0)=-\mathbb{E}\left[\sum_{i=1}^N\int_0^T\!\!\int_0^T G( |t-s|) d(X_s^{1,i} -X_s^{0, i})d(X_t^{1,i} -X_t^{0, i})\right].
\end{equation}
Similarly, let
\begin{align*}
    F_2^i(\alpha) &=  \mathbb{E}\left[\int_0^T\!\!\int_0^{t-} G(t-s)\sum_{j\not=i}dX_s^{0,j} dX_t^{\alpha, i}+ \int_0^T\!\!\int_0^{t-} G(t-s) \sum_{j\not=i}dX_s^{1,j} dX_t^{1-\alpha, i}\right],
\end{align*}
which is a linear function of $\alpha$. Differentiating gives
\begin{align*}\dot{F}_2^i(\alpha) &= \mathbb{E}\Bigg[\int_0^T\!\!\int_0^{t-} G(t-s) \sum_{j\not=i}dX_s^{0,j} d(X^{1,i}-X^{0,i})\\
&\quad\quad\quad - \int_0^T\!\!\int_0^{t-} G(t-s)\sum_{j\not=i}dX_s^{1,j} d(X^{1,i}- X^{0,i})\Bigg]\\
&=- \mathbb{E}\left[\int_0^T\!\!\int_0^{t-} G(t-s) \sum_{j\not=i}d(X_s^{1,j} - X_s^{0,j})d(X^{1,i}_t- X^{0,i}_t)\right],
\end{align*}
which is constant in $\alpha$. Summing over~$i$ yields

\begin{align*}\sum_{i=1}^N\dot{F}_2^i(0)&= - \mathbb{E}\left[\sum_{i=1}^N\sum_{j\not=i}\int_0^T\!\!\int_0^{t-} G(t-s) d(X_s^{1,j} - X_s^{0,j}) d(X^{1,i}_t- X^{0,i}_t)\right]\\
&= \mathbb{E}\Bigg[-\frac{1}{2} \sum_{i=1}^N\sum_{j\not=i}\int_0^T\!\!\int_0^{t-} G(t-s) d(X_s^{1,j} - X_s^{0,j}) d(X^{1,i}_t- X^{0,i}_t)\\
& \quad\quad\quad -\frac{1}{2} \sum_{i=1}^N\sum_{j\not=i}\int_0^T\int_{s+}^T G(t-s)  d(X^{1,i}_t- X^{0,i}_t) d(X_s^{1,j} - X_s^{0,j})\Bigg]\\
&= \mathbb{E}\Bigg[-\frac{1}{2} \sum_{i=1}^N\sum_{j\not=i}\int_0^T\!\!\int_0^T G(|t-s|) d(X_s^{1,j} - X_s^{0,j}) d(X^{1,i}_t- X^{0,i}_t)\\
&\quad\quad\quad +\frac{G(0)}{2}\sum_{i=1}^N\sum_{j\not=i}\sum_{t\in[0,T]}\Delta(X_t^{1,j} - X_t^{0,j})\Delta(X_t^{1,i} - X_t^{0,i})\Bigg].
\end{align*}
As $G$ is positive definite (in fact, here we only use semi-definiteness),
\[-\sum_{i}\sum_{j\not=i}\int_0^T\!\!\int_0^TG(|t-s|)dM_s^jdM_t^i\leq \sum_{i}\int_0^T\!\!\int_0^TG(|t-s|)dM_s^idM_t^i\] for arbitrary Lebesgue--Stieltjes integrators $M^i$. Applying this,
\begin{align}\label{eqn:J2.sum.bd}
\sum_{i=1}^N\dot{F}_2^i(0)&\leq  \mathbb{E}\Bigg[\frac{1}{2} \sum_{i=1}^N\int_0^T\!\!\int_0^T G(|t-s|) d(X_s^{1,i} - X_s^{0,i}) d(X^{1,i}_t- X^{0,i}_t)\\
&\quad\quad\quad +\frac{G(0)}{2}\sum_{i=1}^N\sum_{j\not=i}\sum_{t\in[0,T]}\Delta(X_t^{1,j} - X_t^{0,j})\Delta(X_t^{1,i} - X_t^{0,i})\Bigg].\nonumber
\end{align}
Turning to the jump terms, let
\begin{align*}F_3^i(\alpha)&=\mathbb{E}\left[\frac{G(0)}{2}\sum_{j\not=i}\sum_{t\in[0,T]} \Delta X_t ^{\alpha, i} \Delta X_t^{0,j}+\frac{G(0)}{2}\sum_{j\not=i}\sum_{t\in[0,T]} \Delta X_t ^{1-\alpha, i} \Delta X_t^{1,j}\right].
\end{align*}
This is once again linear in $\alpha$ and differentiating gives
\begin{align*}\dot{F}_3^i(\alpha)&=\mathbb{E}\left[-\frac{G(0)}{2}\sum_{j\not=i}\sum_{t\in[0,T]} \Delta ( X_t^{0,i}-X_t ^{1, i})\Delta X_t^{0,j}+\frac{G(0)}{2}\sum_{j\not=i}\sum_{t\in[0,T]} \Delta (X_t^{0,i}-X_t ^{1, i}) \Delta X_t^{1,j}\right].
\end{align*}
Collecting terms and summing over $i$,
\begin{equation}\label{eqn:J3.sum.bd}\sum_{i=1}^N\dot{F}_3^i(0)=-\mathbb{E}\left[\frac{G(0)}{2}\sum_{i=1}^N\sum_{j\not=i}\sum_{t\in[0,T]} \Delta (X_t^{1,i}-X_t^{0, i})\Delta  (X_t^{1,j}-X_t^{0,j})\right].
\end{equation}

Lastly, we treat the terms arising from the cost $C$. %
Let
\begin{align*}F_4^i(\alpha) &=\mathbb{E}\left[C(X^{\alpha,i})+C(X^{1-\alpha,i})\right].
\end{align*}
By construction, $[0,1]\ni\alpha \mapsto F_4^i(\alpha)$ is a finite convex function  and so its right derivative exists at $0$ (we allow for the possibility that $\dot{F_4^i}(0+)=-\infty$). Moreover, $F_4^i(\alpha)$ is symmetric about $\alpha=1/2$  which together with convexity implies that $\alpha=1/2$ is a minimum. Hence, $0\in\partial F_4^i(1/2)$ where $\partial F_4^i(\alpha)$ denotes the subgradient set at $\alpha$. 
By convexity of $F_4^i$ and the definition of the subgradient, $\dot{F_4^i}(0+)\leq y$ for any $y\in \partial F_4^i(1/2)$. Taking $y=0$ and summing over $i$ we conclude
\begin{align}\sum_{i=1}^N\dot{F}_4^i(0+)
\leq 0. \label{eqn:J4.sum.bd}
\end{align}

Aggregating the above expressions we recover
\[V(\alpha) = \sum_{i=1}^N \left(F_1^i(\alpha)+F_2^i(\alpha)+F_3^i(\alpha)+F_4^i(\alpha)\right)\]
and obtain its right derivative at $0$,
\begin{align*}\dot{V}(0+) &= \sum_{i=1}^N \left(\dot{F}_1^i(0)+\dot{F}_2^i(0)+\dot{F}_3^i(0)+\dot{F}_4^i(0+)\right)\\
&\leq -\mathbb{E}\left[\frac{1}{2}\sum_{i=1}^N\int_0^T\!\!\int_0^T G(|t-s|) d(X_s^{1,i} -X_s^{0, i})d(X_t^{1,i} -X_t^{0, i})\right]<0.
\end{align*}
Here the first inequality follows from \cref{eqn:J1.sum,eqn:J2.sum.bd,eqn:J3.sum.bd,eqn:J4.sum.bd} whereas the last inequality holds as $G$ is strictly positive definite and $\boldsymbol{X}^0\not=\boldsymbol{X}^1$. As mentioned above, $\dot{V}(0+)<0$ is a contradiction and completes the proof.
\end{proof}

\subsection{Proofs for Section \ref{se:singular} (Singular Kernels)}\label{se:proofs.singular}

In this subsection, we extend our results to the case of a singular kernel, where ``singular kernel'' will refer to a function~$G$ satisfying \cref{as:singular.kernel}. Apart from technical verifications, the extension is based on the fact that a singular kernel admits a monotone approximation by kernels satisfying \cref{as:kernel}; cf.\ \cref{rk:kernel.approx}.

\subsubsection{Proof of \cref{le:singular.posdef}}

\begin{proof}[Proof of \cref{le:singular.posdef}]
  We need to show that
  \[
  \int_0^T\!\!\int_0^T G(|t-s|) dX_s dX_t >0
  \]
  whenever $X:[0,T]\to \mathbb{R}$ is a c\`adl\`ag function of finite variation that is not constant and satisfies $\int_0^T\!\!\int_0^T G(|t-s|) d|X|_s d|X|_t<\infty$. 
  By \cite[Proposition~4.5]{GatheralSchiedSlynko.12}, the double integral admits a representation of the form
  \begin{equation}\label{eq:singular.fourier}
  \int_0^T\!\!\int_0^T G(|t-s|) dX_s dX_t=\int|\widehat X(z)|^{2} \mu(dz).
  \end{equation}
  As $X$ has finite variation, the Fourier--Stieltjes transform~$\widehat X(z)=\int_0^Te^{itz}dX_t$ is a continuous function that does not vanish identically as soon as~$X$ is not constant. Moreover, \cite[Lemma~4.2]{GatheralSchiedSlynko.12} gives an expression for the positive Radon measure~$\mu$ from which we see that~$\mu$ has full support. As a consequence, the right hand side of \eqref{eq:singular.fourier} is strictly positive.
\end{proof}

\subsubsection{Objective Function (\cref{prop:singular.obj.func.rep})}

We first record a simple lemma for later of reference. Recall \cref{def:admissible.X.singular} of admissibility.

\begin{lemma}\label{lem:cross.term.integrable}
    For a profile $\boldsymbol{X}=(X^{1},\dots,X^{N})$ of admissible strategies and $i,j\in\{1,\dots,N\}$, \begin{align*}\mathbb{E}&\left[\int_0^\infty\!\!\int_0^\infty G(|t-s|) d|X^j|_sd|X^i|_t\right]\\
    &\quad \quad \quad \quad \quad \leq \frac{1}{2}\mathbb{E}\left[\int_0^\infty\!\!\int_0^\infty G(|t-s|) \left(d|X^i|_sd|X^i|_t+d|X^j|_sd|X^j|_t\right)\right]<\infty.
    \end{align*}
\end{lemma}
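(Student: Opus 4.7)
My plan is to view the double integral as a positive semi-definite bilinear form in the two (positive) variation measures and then apply a standard polarization/AM--GM argument. The finiteness statement on the right is immediate: by the admissibility condition \eqref{eq:admissible.integrability} in \cref{def:admissible.X.singular}, each term $\mathbb{E}[\int_0^\infty\!\int_0^\infty G(|t-s|) d|X^k|_s d|X^k|_t]$ is finite for $k=i,j$. So the real content is the inequality.

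For the inequality, I would work pathwise and then take expectation. Fix $\omega$. The variation processes $|X^i|(\omega)$ and $|X^j|(\omega)$ are c\`adl\`ag, non-decreasing, and of finite total variation, so $\mu := d|X^i|_\cdot(\omega)$ and $\nu := d|X^j|_\cdot(\omega)$ are finite positive Borel measures on $[0,\infty)$ and $\mu - \nu$ is a signed measure corresponding to a c\`adl\`ag function of finite variation. To sidestep any integrability issue with the (possibly infinite) kernel $G$ at zero, I would invoke the monotone approximation of \cref{rk:kernel.approx}: the bounded kernels $G^n(t):=G(t+1/n)$ satisfy \cref{as:kernel}, in particular they are positive semi-definite. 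Hence, pathwise,
\begin{equation*}
0 \;\leq\; \int\!\!\int G^n(|t-s|)\,d(\mu-\nu)(s)\,d(\mu-\nu)(t),
\end{equation*}
and expanding the right-hand side (all terms are finite since $G^n$ is bounded and $\mu,\nu$ are finite) gives the polarization bound
\begin{equation*}
\int\!\!\int G^n(|t-s|)\,d\mu(s)\,d\nu(t) \;\leq\; \tfrac{1}{2}\!\int\!\!\int G^n(|t-s|)\,\bigl(d\mu(s)\,d\mu(t) + d\nu(s)\,d\nu(t)\bigr).
\end{equation*}

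Finally, I would pass to the limit $n\to\infty$. By \cref{rk:kernel.approx}, $G^n\uparrow G$ pointwise, and since $\mu\otimes\nu$, $\mu\otimes\mu$, $\nu\otimes\nu$ are all positive (product) measures, the monotone convergence theorem applied pathwise upgrades the inequality from $G^n$ to $G$. Taking expectations and applying monotone convergence once more yields the stated inequality; the right-hand side is finite by the admissibility bound, which in particular bounds the left-hand side and completes the proof.

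The only mild subtlety I anticipate is bookkeeping around the monotone limit for the cross term $\int\!\int G^n\,d\mu\,d\nu$, but since $\mu$ and $\nu$ are (pathwise) positive finite measures and $G^n\geq 0$ is monotone increasing to $G$, Tonelli plus monotone convergence handles this cleanly; no other assumption on the joint behaviour of $\mu$ and $\nu$ is needed. Thus I do not expect a genuine obstacle, only the care that is needed to apply the monotone approximation before invoking positive (semi-)definiteness, so as to avoid forming any expression like $\infty - \infty$.
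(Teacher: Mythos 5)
Your proof is correct and follows essentially the same route as the paper: apply positive semi-definiteness of the kernel to the signed measure $d(|X^i|-|X^j|)$ and expand (polarization), with finiteness of the diagonal terms coming from the admissibility condition \eqref{eq:admissible.integrability}. Your extra step of first working with the bounded kernels $G^n\uparrow G$ and passing to the limit by monotone convergence is a sensible refinement that cleanly rules out any $\infty-\infty$ issue in the expansion, which the paper's more terse argument glosses over.
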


\begin{proof}
Using the positive definiteness of $G$,
\begin{align*}0&\leq \mathbb{E}\left[\int_0^\infty\!\!\int_0^\infty G(|t-s|) d(|X^i|_s-|X^j|_s)d(|X^i|_t-|X^j|_t)\right]\\
&=\mathbb{E}\left[\int_0^\infty\!\!\int_0^\infty G(|t-s|) \left(d|X^i|_sd|X^i|_t+d|X^j|_sd|X^j|_t-2d|X^i|_sd|X^j|_t\right)\right].
\end{align*}
Rearranging and applying the admissibility of $X^i$ and $X^j$ yields the claim.
\end{proof}

\begin{proof}[Proof of \cref{prop:singular.obj.func.rep}]
  Recalling the absence of jumps in the present setting (\cref{rk:no.jumps}), the proof follows the same lines as the one of \cref{prop:obj.func.rep}; we merely indicate where additional justifications are needed. 
  First, to justify the application of Fubini's theorem below~\eqref{eqn:obj.intermediate.rep}, we observe that 
\begin{align*}\left|\int_0^T\!\!\int_0^{t-} G(t-s) \sum_{j=1}^NdX_s^{j} dX_t^{i}\right|&\leq \int_0^\infty\!\!\int_0^{\infty} G(|t-s|) \sum_{j=1}^Nd|X^{j}|_sd|X^{i}|_t
\end{align*}
where the right hand side is integrable and therefore almost surely finite by \cref{lem:cross.term.integrable}. 
  Second, to argue $|J(X^{i};\boldsymbol{X}^{-i})|<\infty$ at the end of the proof, note that \eqref{eqn:singular.obj.schied.rep} gives
\begin{align*}
    &|J(X^{i};\boldsymbol{X}^{-i})|\\
    &\leq \mathbb{E}\Bigg[\frac{1}{2}\int_0^\infty\!\!\int_0^\infty G(|t-s|)d|X^{i}|_s d|X^{i}|_t + \int_0^\infty\!\!\int_0^{\infty} G(|t-s|) \sum_{j\not=i}d|X^{j}|_s d|X^{i}|_t +C(X^{i})\Bigg]%
\end{align*}
  which is finite by admissibility, \cref{lem:cross.term.integrable} and our assumption that $\mathbb{E}[C(X^{i})]<\infty$.
\end{proof}

\subsubsection{Extension of \cref{lem:exp.admissible}}\label{se:ext.adm}

Let $X$ be an admissible strategy for the singular kernel~$G$ (\cref{def:admissible.X.singular}). To extend \cref{lem:exp.admissible} to the case where~$G$ is singular, it suffices to show that $\tilde{\mathbb{E}}[X]$ satisfies the integrability condition \eqref{eq:admissible.integrability}. 

Clearly the processes $t\mapsto TV(X;[0,t])$, $t\mapsto TV(\tilde{\mathbb{E}}[X];[0,t])$ and $t\mapsto \tilde{\mathbb{E}}[TV(X;[0,t])]$ are c\`adl\`ag, increasing, and essentially bounded. (As $X$ is continuous in the present setting, these processes are even continuous.) %
As above, we write $d|X|_t$ to denote integration with respect to the total variation of $X$ and $d|\tilde{\mathbb{E}}[X]|_t$ to denote integration with respect to the total variation of $\tilde{\mathbb{E}}[X]$. Similarly, we write $d\tilde{\mathbb{E}}[|X|_t]$ to denote integration with respect to the measure induced by $t\mapsto \tilde{\mathbb{E}}[TV(X;[0,t])]$.

\begin{lemma}\label{lem:TV.meas.ineq} If $G$ is a kernel satisfying \cref{as:kernel},
    \begin{align*}
    \mathbb{E}\left[\int_0^\infty\!\!\int_0^\infty G(|t-s|)d| \tilde{\mathbb{E}}[X]|_s d | \tilde{\mathbb{E}}[X]|_t\right]&\leq \mathbb{E}\left[\int_0^\infty\!\!\int_0^\infty G(|t-s|)d \tilde{\mathbb{E}}[|X|_s] d  \tilde{\mathbb{E}}[|X|_t]\right]\\
    &\leq \mathbb{E}\left[\int_0^\infty\!\!\int_0^\infty G(|t-s|)d|X|_s d |X|_t\right].
    \end{align*}
\end{lemma}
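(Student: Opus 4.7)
I would prove the two inequalities separately: the first is a direct consequence of \cref{lem:TV.ineq} together with $G\geq 0$, and the second is a Jensen-type inequality driven by the positive (semi-)definiteness of the kernel.

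\textbf{First inequality.} Applying \cref{lem:TV.ineq} on sub-intervals shows that $\nu_t := \tilde{\mathbb{E}}[|X|_t] - |\tilde{\mathbb{E}}[X]|_t$ is c\`adl\`ag and increasing, hence induces a positive Radon measure $d\nu$ on $[0,\infty)$. Writing $d\tilde{\mathbb{E}}[|X|] = d|\tilde{\mathbb{E}}[X]| + d\nu$ and expanding,
\[
\iint G\,d\tilde{\mathbb{E}}[|X|]_s\,d\tilde{\mathbb{E}}[|X|]_t = \iint G\,d|\tilde{\mathbb{E}}[X]|_s\,d|\tilde{\mathbb{E}}[X]|_t + 2\iint G\,d|\tilde{\mathbb{E}}[X]|_s\,d\nu_t + \iint G\,d\nu_s\,d\nu_t.
\]
The last two terms are non-negative since $G$, $|\tilde{\mathbb{E}}[X]|$ and $\nu$ are all non-negative, so taking $\mathbb{E}$ yields the first inequality.

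\textbf{Second inequality.} Fix $\omega^{\noll}\in\Omega^{\noll}$ and view $\mu_{\tilde{\omega}}:=d|X|_\cdot(\omega^{\noll},\tilde{\omega})$ as a $\tilde{\mathbb{P}}$-random positive Borel measure on $[0,T]$; Fubini gives $\tilde{\mathbb{E}}[\mu_\cdot]=d\tilde{\mathbb{E}}[|X|]_\cdot(\omega^{\noll})$. By \cref{as:kernel}, the symmetric bilinear form $B(\mu,\nu):=\iint G(|t-s|)\,d\mu(s)\,d\nu(t)$ satisfies $B(\rho,\rho)\geq 0$ for every signed Borel measure $\rho$ induced by a c\`adl\`ag bounded-variation function. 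Introducing independent copies $\tilde{\omega}_1,\tilde{\omega}_2$ of $\tilde{\omega}$ and applying this to $\rho=\mu_{\tilde{\omega}_1}-\mu_{\tilde{\omega}_2}$, bilinearity of $B$ yields
\[
0\leq (\tilde{\mathbb{E}}\otimes\tilde{\mathbb{E}})\!\left[B(\mu_{\tilde{\omega}_1}-\mu_{\tilde{\omega}_2},\,\mu_{\tilde{\omega}_1}-\mu_{\tilde{\omega}_2})\right] = 2\tilde{\mathbb{E}}[B(\mu,\mu)] - 2B(\tilde{\mathbb{E}}[\mu],\tilde{\mathbb{E}}[\mu]).
\]
Hence $B(\tilde{\mathbb{E}}[\mu],\tilde{\mathbb{E}}[\mu])\leq\tilde{\mathbb{E}}[B(\mu,\mu)]$; taking $\mathbb{E}$ over $\omega^{\noll}$ produces the second inequality.

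\textbf{Main obstacle.} The only genuine work is the measure-theoretic bookkeeping---identifying $\tilde{\mathbb{E}}[d|X|_\cdot]$ with $d\tilde{\mathbb{E}}[|X|_\cdot]$ and justifying the iterated Fubini applications used to expand $B$ on the product space $\tilde{\mathbb{P}}\otimes\tilde{\mathbb{P}}$. Under \cref{as:kernel}, $G$ is bounded and admissibility makes the total variation of $X$ essentially bounded, so all relevant double integrals are uniformly bounded by a constant multiple of $\|G\|_\infty$ and Fubini applies without difficulty.
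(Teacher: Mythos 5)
Your proof is correct. The first inequality is essentially the paper's argument: the paper simply notes (via \cref{lem:TV.ineq}) that the two positive measures are ordered and that $G\geq 0$; your decomposition $d\tilde{\mathbb{E}}[|X|]=d|\tilde{\mathbb{E}}[X]|+d\nu$ with $\nu$ increasing is the same observation written out explicitly. For the second inequality you take a genuinely different route. The paper applies \cref{lem:jensen.ineq.precursor} to the admissible process $t\mapsto TV(X;[0,t])$; that lemma supplies the exact identity
\[
\mathbb{E}\Big[\tsum\!\!\iint G\,d(|X|-\tilde{\mathbb{E}}[|X|])\,d(|X|-\tilde{\mathbb{E}}[|X|])\Big]=\mathbb{E}\Big[\iint G\,d|X|\,d|X|-\iint G\,d\tilde{\mathbb{E}}[|X|]\,d\tilde{\mathbb{E}}[|X|]\Big],
\]
whose proof runs through the dual predictable projection machinery of \cref{thm:dual.pred.proj} to identify the cross term. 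You instead fix $\omega^{\noll}$, work purely on the randomization factor, and use the classical symmetrization with two independent copies $\mu_{\tilde{\omega}_1},\mu_{\tilde{\omega}_2}$: positive semi-definiteness of $B$ applied to $\mu_{\tilde{\omega}_1}-\mu_{\tilde{\omega}_2}$ plus Fubini gives $B(\tilde{\mathbb{E}}[\mu],\tilde{\mathbb{E}}[\mu])\leq\tilde{\mathbb{E}}[B(\mu,\mu)]$ directly. Your argument is more elementary and self-contained (it avoids projections entirely, since for fixed $\omega^{\noll}$ the mean measure $\tilde{\mathbb{E}}[\mu]$ is deterministic on $\tilde{\Omega}$ and the cross term factorizes by Fubini alone), at the cost of redoing a quadratic expansion that the paper gets for free from an already-established lemma; the paper's route has the advantage of reusing \cref{lem:jensen.ineq.precursor}, which is needed elsewhere anyway. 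Both ultimately rest on the same principle, namely positive semi-definiteness of the kernel combined with a bilinear expansion.
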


\begin{proof}
By \cref{lem:TV.ineq} the non-negative measures corresponding to $TV(\tilde{\mathbb{E}}[X];[0,t])$ and $\tilde{\mathbb{E}}[TV(X;[0,t])]$ are ordered.\footnote{In the pointwise sense: two  measures $\mu,\nu$ are called ordered if $\mu(A)\leq\nu(A)$ for every Borel set $A$.} Therefore, as $G$ is non-negative,
\[\int_0^\infty\!\!\int_0^\infty G(|t-s|)d| \tilde{\mathbb{E}}[X]|_s d | \tilde{\mathbb{E}}[X]|_t\leq \int_0^\infty\!\!\int_0^\infty G(|t-s|)d \tilde{\mathbb{E}}[|X|_s] d  \tilde{\mathbb{E}}[|X|_t],\]
from which the first claim follows by taking expectations. For the second inequality we use the positive definiteness of the kernel and apply \cref{lem:jensen.ineq.precursor} to the admissible process $t\mapsto TV(X;[0,t])$, 
    \begin{align*}
    0 &\leq \mathbb{E}\left[\int_0^\infty\!\!\int_0^\infty G(|t-s|) d(|X|_s-\tilde{\mathbb{E}}[|X|_s]) d(|X|_t-\tilde{\mathbb{E}}[|X|_t])\right]\\
    & \quad \quad \quad \quad =\mathbb{E}\left[\int_0^\infty\!\!\int_0^\infty G(|t-s|) d|X|_sd|X|_t -\int_0^\infty\!\!\int_0^\infty G( |t-s|)d\tilde{\mathbb{E}}[|X|_s] d\tilde{\mathbb{E}}[|X|_t]\right].
\end{align*}
Rearranging completes the proof.
\end{proof}

We are now in a position to prove that \eqref{eq:admissible.integrability} holds for $\tilde{\mathbb{E}}[X^i]$.

\begin{lemma}\label{le:exp.G.int} If $X$ is admissible for the singular kernel $G$, then
    \begin{align*}\mathbb{E}\left[\int_0^\infty\!\!\int_0^\infty G(|t-s|)d|\tilde{\mathbb{E}}[X]|_s d |\tilde{\mathbb{E}}[X]|_t\right]&\leq \mathbb{E}\left[\int_0^\infty\!\!\int_0^\infty G(|t-s|)d|X|_s d |X|_t\right]<\infty
    \end{align*}
and therefore $\tilde{\mathbb{E}}[X^i]$ is admissible; i.e., \cref{lem:exp.admissible} holds for the singular kernel $G$.
\end{lemma}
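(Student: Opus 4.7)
The plan is a monotone approximation argument built on the bounded-kernel approximants $G^n(t) := G(t+1/n)$ introduced in \cref{rk:kernel.approx}. By that remark, for all $n$ large enough that $G$ is non-constant on $[1/n,\infty)$, the kernel $G^n$ is bounded, continuous and satisfies \cref{as:kernel}, and $G^n(t) \uparrow G(t)$ pointwise as $n \uparrow \infty$.

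First I would apply \cref{lem:TV.meas.ineq} to $G^n$ (which is legitimate precisely because $G^n$ satisfies \cref{as:kernel}) to obtain, for each such $n$,
\[
\mathbb{E}\left[\int_0^\infty\!\!\int_0^\infty G^n(|t-s|)\, d|\tilde{\mathbb{E}}[X]|_s\, d|\tilde{\mathbb{E}}[X]|_t\right] \leq \mathbb{E}\left[\int_0^\infty\!\!\int_0^\infty G^n(|t-s|)\, d|X|_s\, d|X|_t\right].
\]
Next, I would let $n \to \infty$ and invoke monotone convergence on both sides. Since $G^n \uparrow G$ and the measures $d|X|_s\,d|X|_t$ and $d|\tilde{\mathbb{E}}[X]|_s\,d|\tilde{\mathbb{E}}[X]|_t$ are non-negative and finite on $[0,\infty)^2$ (as $X$ and $\tilde{\mathbb{E}}[X]$ have essentially bounded total variation and are constant after $T$), the inner double integrals increase monotonically to the corresponding integrals with $G$. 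A second application of monotone convergence at the level of the outer expectation then yields the claimed inequality for the singular kernel $G$.

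The right-hand side is finite by the admissibility of $X$ in the sense of \cref{def:admissible.X.singular}, so the left-hand side is finite as well, establishing \eqref{eq:admissible.integrability} for $\tilde{\mathbb{E}}[X]$. The remaining admissibility properties (c\`adl\`ag paths, $\mathbb{F}^i$-predictability, the initial condition $\tilde{\mathbb{E}}[X]_{0-}=x^i$, and essentially bounded total variation) are exactly those verified in \cref{lem:exp.admissible} from the bounded-kernel setting, and their proof depends only on conditions of \cref{def:admissible.X} that are unchanged in the singular setup. Hence $\tilde{\mathbb{E}}[X]$ is admissible in the sense of \cref{def:admissible.X.singular}, which completes the extension of \cref{lem:exp.admissible}.

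The only genuine subtlety is making sure the monotone approximation is applicable; this is handled by \cref{rk:kernel.approx}, so the main obstacle reduces to a careful bookkeeping of the monotone convergence steps, which is routine given that all integrands are non-negative.
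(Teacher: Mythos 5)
Your proposal is correct and follows essentially the same route as the paper: approximate $G$ by the bounded kernels $G^n(t)=G(t+1/n)$, apply \cref{lem:TV.meas.ineq} for each $G^n$, and pass to the limit by monotone convergence, with finiteness coming from the admissibility of $X$. The only cosmetic difference is that the paper bounds the right-hand side by the $G$-integral for every $n$ (using $G^n\leq G$) before taking the limit on the left, whereas you take monotone limits on both sides; both are valid.
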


\begin{proof}
    As seen in \cref{rk:kernel.approx}, the singular kernel $G$ can be approximated by a sequence of kernels $G^n\uparrow G$ satisfying \cref{as:kernel}. By the monotonicity of the sequence and the non-negativity of the integrating measures,
    \begin{align*}0\leq \int_0^\infty\!\!\int_0^\infty G^n(|t-s|)d| \tilde{\mathbb{E}}[X]|_s d | \tilde{\mathbb{E}}[X]|_t&\leq \int_0^\infty\!\!\int_0^\infty G^{n+1}(|t-s|)d| \tilde{\mathbb{E}}[X]|_s d | \tilde{\mathbb{E}}[X]|_t, \ \ \ \forall n\geq1.
    \end{align*}
    Given that $X$ is admissible for $G$, by  applying \cref{lem:TV.meas.ineq} for each $G^n$ we have
    \begin{align*}\mathbb{E}\left[\int_0^\infty\!\!\int_0^\infty G^n(|t-s|)d|\tilde{\mathbb{E}}[X]|_s d |\tilde{\mathbb{E}}[X]|_t\right]&\leq \mathbb{E}\left[\int_0^\infty\!\!\int_0^\infty G^n(|t-s|)d|X|_s d |X|_t\right]\\
    &\leq \mathbb{E}\left[\int_0^\infty\!\!\int_0^\infty G(|t-s|)d|X|_s d |X|_t\right]<\infty.    \end{align*}
    Using that $G^n\uparrow G$ for $d|\tilde{\mathbb{E}}[X]|\otimes d|\tilde{\mathbb{E}}[X]|$ almost every $(s,t)$, almost surely, we have by monotone convergence that
    \begin{align*}\mathbb{E}\left[\int_0^\infty\!\!\int_0^\infty G(|t-s|)d|\tilde{\mathbb{E}}[X]|_s d |\tilde{\mathbb{E}}[X]|_t\right]&\leq \mathbb{E}\left[\int_0^\infty\!\!\int_0^\infty G(|t-s|)d|X|_s d |X|_t\right]<\infty. \qedhere
    \end{align*}
\end{proof}

\subsubsection{Extension of \cref{prop:reduced.rep.obj}}\label{se:ext.proj}

The following stability property of the objective with respect to monotone approximation of~$G$ will be used to extend \cref{prop:reduced.rep.obj} to the singular case. When there are several kernels under consideration, we write $J_G$ for the objective function with kernel~$G$.

\begin{proposition}\label{prop:G.consistency}
    Let $\boldsymbol{X}$ be an admissible strategy profile for the singular kernel $G$ and let $(G^n)_{n\geq1}$ be a sequence of kernels satisfying \cref{as:kernel} with $G^n\uparrow G$. Then
    \[\lim_{n\to\infty}J_{G^n}(X^i;\boldsymbol{X}^{-i})=J_G(X^i;\boldsymbol{X}^{-i}).\]
\end{proposition}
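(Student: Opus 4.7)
The plan is to prove $L^1$ convergence of the double integrals appearing in the objective by combining monotone convergence on Jordan-decomposed pieces with a global dominated convergence argument. Since $C(X^i)$ is identical in $J_{G^n}$ and $J_G$, and since admissible strategies in the singular setting have no jumps (\cref{rk:no.jumps}), the block-trade terms in the representation~\eqref{eqn:obj.schied.rep} of $J_{G^n}$ vanish as well. So by \cref{prop:obj.func.rep} and \cref{prop:singular.obj.func.rep}, it suffices to show
\begin{align*}
  \mathbb{E}\!\left[\int_0^T\!\!\int_0^T G^n(|t-s|)\,dX^i_s dX^i_t\right] &\longrightarrow \mathbb{E}\!\left[\int_0^T\!\!\int_0^T G(|t-s|)\,dX^i_s dX^i_t\right],\\
  \mathbb{E}\!\left[\int_0^T\!\!\int_0^{t-} G^n(t-s)\,dX^j_s dX^i_t\right] &\longrightarrow \mathbb{E}\!\left[\int_0^T\!\!\int_0^{t-} G(t-s)\,dX^j_s dX^i_t\right]
\end{align*}
for each $j\neq i$.

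For the almost sure convergence of the integrands I would invoke the Jordan (or Hahn) decomposition. For each admissible strategy, write $X^k = X^{k,+} - X^{k,-}$ where $X^{k,\pm}$ are non-decreasing (with $X^{k,+} + X^{k,-}$ equal to the total variation process). Then, for example,
\[
\int_0^T\!\!\int_0^T G^n(|t-s|)\,dX^i_s dX^i_t = \sum_{\sigma,\tau \in \{+,-\}} \sigma\tau \int_0^T\!\!\int_0^T G^n(|t-s|)\,dX^{i,\sigma}_s dX^{i,\tau}_t,
\]
and similarly for the cross terms on the simplex $\{s<t\}$. Each summand is an integral against a non-negative product measure, and $G^n \uparrow G$ pointwise by \cref{rk:kernel.approx}. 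Hence monotone convergence applies termwise almost surely, and summing the (finitely many) signed combinations yields the desired almost sure limit for both the diagonal and off-diagonal integrals.

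For the passage under the expectation, the same Jordan decomposition provides the uniform dominating bound
\[
\left|\int_0^T\!\!\int_0^T G^n(|t-s|)\,dX^i_s dX^i_t\right| \le \int_0^T\!\!\int_0^T G(|t-s|)\,d|X^i|_s d|X^i|_t,
\]
whose expectation is finite by \cref{def:admissible.X.singular}, and analogously
\[
\left|\int_0^T\!\!\int_0^{t-} G^n(t-s)\,dX^j_s dX^i_t\right| \le \int_0^T\!\!\int_0^T G(|t-s|)\,d|X^j|_s d|X^i|_t,
\]
whose expectation is finite by \cref{lem:cross.term.integrable}. Dominated convergence then transfers the almost sure limits to $L^1(\mathbb{P})$, giving the claimed convergence of each term in the objective and hence of $J_{G^n}(X^i;\boldsymbol{X}^{-i})$.

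The only mildly delicate step is the bookkeeping of the Jordan decomposition to reduce the signed integrals to positive ones; once that is in place, monotone convergence (for almost sure pointwise convergence) and the admissibility-based dominating bound (for the expectation) do all the work, and no new analytic ingredient beyond $G^n \uparrow G$ and the integrability guaranteed by \cref{def:admissible.X.singular} is required.
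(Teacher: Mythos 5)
Your proof is correct and follows essentially the same route as the paper: reduce to the representation \eqref{eqn:singular.obj.schied.rep} (jump terms vanishing since admissible strategies have no jumps), dominate each double integral by the corresponding $G$-integral against the total variation measures --- finite by \cref{def:admissible.X.singular} and \cref{lem:cross.term.integrable} --- and conclude by dominated convergence. The Jordan-decomposition/monotone-convergence bookkeeping you add is a harmless (and slightly more explicit) way of justifying the pointwise step that the paper folds directly into its dominated convergence argument.
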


\begin{proof}
    As  $\boldsymbol{X}$ is admissible for $G$, it has no jumps and the objective is given by~\eqref{eqn:singular.obj.schied.rep}. Consider $J_{G^n}(X^i;\boldsymbol{X}^{-i})$ as given by~\eqref{eqn:singular.obj.schied.rep} with $G^n$, we want to take the limit and obtain the same expression with~$G$ instead of $G^n$. Indeed, by \cref{lem:cross.term.integrable},
\begin{align*}
    \frac{1}{2}\int_0^T\!\!\int_0^T G(|t-s|)d|X^{i}|_s d|X^{i}|_t + \int_0^T\!\!\int_0^{T} G(|t-s|) \sum_{j\not=i}d|X^{j}|_s d|X^{i}|_t
\end{align*}
is integrable. As $0\leq G^n \leq G$ and $G^n\to G$ pointwise, dominated convergence yields the claim.
\end{proof}

Using this stability result, we readily obtain the extension of \cref{prop:reduced.rep.obj}.

\begin{proposition}\label{prop:ext.proj}
    \Cref{prop:reduced.rep.obj} holds for the singular kernel $G$.
\end{proposition}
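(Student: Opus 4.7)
The plan is to lift the bounded-kernel result \cref{prop:reduced.rep.obj} to the singular case by the monotone approximation $G^n\uparrow G$ from \cref{rk:kernel.approx}, using \cref{prop:G.consistency} to pass to the limit on both sides of the identity $J_{G^n}(X^i;\boldsymbol{X}^{-i})=J_{G^n}(X^i;\tilde{\mathbb{E}}[\boldsymbol{X}^{-i}])$.

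First I would fix an admissible profile $\boldsymbol{X}$ for the singular kernel $G$ and verify that $\tilde{\mathbb{E}}[\boldsymbol{X}^{-i}]$ is again an admissible profile for $G$. Each component $\tilde{\mathbb{E}}[X^j]$ is c\`adl\`ag, $\mathbb{F}^M$-predictable with the correct initial value, and has essentially bounded total variation by \cref{lem:exp.admissible} (which only requires the bounded-variation aspect of admissibility, not the $G$-integrability). The integrability \eqref{eq:admissible.integrability} for $\tilde{\mathbb{E}}[X^j]$ relative to the singular~$G$ is precisely the content of \cref{le:exp.G.int}, established in \cref{se:ext.adm}. Hence $(X^i,\tilde{\mathbb{E}}[\boldsymbol{X}^{-i}])$ is an admissible profile for $G$.

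Next, pick the approximation $G^n(t):=G(t+1/n)$ from \cref{rk:kernel.approx}, so that $G^n$ is a bounded kernel satisfying \cref{as:kernel} and $G^n\uparrow G$. For each $n$, the profiles $\boldsymbol{X}$ and $(X^i,\tilde{\mathbb{E}}[\boldsymbol{X}^{-i}])$ are automatically admissible in the bounded-kernel sense of \cref{def:admissible.X} (the extra integrability condition disappears when $G^n$ is bounded). Thus \cref{prop:reduced.rep.obj} applies to the kernel $G^n$ and yields
\begin{equation*}
  J_{G^n}(X^i;\boldsymbol{X}^{-i})=J_{G^n}(X^i;\tilde{\mathbb{E}}[\boldsymbol{X}^{-i}]), \qquad n\geq 1.
\end{equation*}
Finally I would apply \cref{prop:G.consistency} to both sides: on the left with the profile $\boldsymbol{X}$ (admissible for $G$ by hypothesis), and on the right with the profile $(X^i,\tilde{\mathbb{E}}[\boldsymbol{X}^{-i}])$ (admissible for $G$ by the first step). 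Passing $n\to\infty$ gives the desired identity $J_G(X^i;\boldsymbol{X}^{-i})=J_G(X^i;\tilde{\mathbb{E}}[\boldsymbol{X}^{-i}])$.

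The substantive obstacle is really bookkeeping of admissibility. The only non-trivial point is that the projected profile $\tilde{\mathbb{E}}[\boldsymbol{X}^{-i}]$ retains the $G$-integrability \eqref{eq:admissible.integrability}, which is needed so that \cref{prop:G.consistency} may legitimately be invoked on the right-hand side; this is precisely what \cref{le:exp.G.int} secures via the positive-definiteness-based Jensen bound in \cref{lem:jensen.ineq.precursor}. Once that hurdle is cleared, monotone/dominated convergence via \cref{prop:G.consistency} does the rest.
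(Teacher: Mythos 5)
Your proposal is correct and follows essentially the same route as the paper: approximate by the bounded kernels $G^n\uparrow G$ of \cref{rk:kernel.approx}, apply \cref{prop:reduced.rep.obj} for each $G^n$, and pass to the limit on both sides via \cref{prop:G.consistency}, with \cref{le:exp.G.int} supplying the admissibility of the projected profile for $G$. The extra bookkeeping you spell out (that \eqref{eq:admissible.integrability} for $\tilde{\mathbb{E}}[\boldsymbol{X}^{-i}]$ is the one non-automatic point) is exactly the ingredient the paper also relies on.
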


\begin{proof}
Given an admissible strategy profile $\boldsymbol{X}$ for the singular kernel $G$, we need to show that $J_G(X^i;\boldsymbol{X}^{-i})=J_G(X^i;\tilde{\mathbb{E}}[\boldsymbol{X}^{-i}])$. Define the approximating sequence $G^n\uparrow G$ as in \cref{rk:kernel.approx}. We apply \cref{prop:G.consistency} to the strategies $X^i,\boldsymbol{X}^{-i}$ and $\tilde{\mathbb{E}}[\boldsymbol{X}^{-i}]$ which are admissible for $G$ by \cref{le:exp.G.int}: using \cref{prop:reduced.rep.obj} for each $G^n$, we obtain
\[J_G(X^i;\boldsymbol{X}^{-i})=\lim_{n\to\infty}J_{G_n}(X^i;\boldsymbol{X}^{-i})=\lim_{n\to\infty}J_{G_n}(X^i;\tilde{\mathbb{E}}[\boldsymbol{X}^{-i}])=J_{G}(X^i;\tilde{\mathbb{E}}[\boldsymbol{X}^{-i}]). \qedhere\]
\end{proof}

\subsubsection{Extension of \cref{thm:non.existence.random.eq}}\label{se:ext.non.exist}

The extension of \cref{thm:non.existence.random.eq} does not follow from the stability result alone (which would only give a non-strict inequality). Instead, we appeal once again to strict positive definiteness.

\begin{proposition}\label{prop:ext.non.exist}
    \Cref{thm:non.existence.random.eq} holds for the singular kernel $G$.
\end{proposition}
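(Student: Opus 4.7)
The plan is to mimic the proof of \cref{thm:non.existence.random.eq}, substituting the singular counterparts of its inputs. First, \cref{le:exp.G.int} guarantees that $Z^i := \tilde{\mathbb{E}}[X^i]$ is an admissible non-randomized strategy for trader~$i$. Applying \cref{prop:ext.proj} to both profiles, it suffices to compare $J_G(X^i;\tilde{\mathbb{E}}[\boldsymbol{X}^{-i}])$ with $J_G(Z^i;\tilde{\mathbb{E}}[\boldsymbol{X}^{-i}])$, so that the competitors are $\mathbb{F}^M$-predictable.

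Expanding the difference via the (jump-free) representation~\eqref{eqn:singular.obj.schied.rep}, I would decompose it into the quadratic piece $\tfrac{1}{2}\mathbb{E}[\iint G\,dX^i\,dX^i - \iint G\,dZ^i\,dZ^i]$, the competitor cross term $\mathbb{E}[\int_0^T H_t\, d(X^i - Z^i)_t]$ with $H_t := \int_0^{t-} G(t-s)\sum_{j\neq i} d\tilde{\mathbb{E}}[X^j]_s$, and the cost term $\mathbb{E}[C(X^i) - C(Z^i)]$. The process $H$ is $\mathbb{F}^M$-predictable, and its integrability against $d|X^i|$ and $d|Z^i|$ follows from admissibility and a \cref{lem:cross.term.integrable}-style estimate, which lets one extend \cref{thm:dual.pred.proj} by truncation to conclude
\[
\mathbb{E}\!\left[\int_0^T H_t\, dX^i_t\right] = \mathbb{E}\!\left[\int_0^T H_t\, d(X^i)^{\rm p}_t\right] = \mathbb{E}\!\left[\int_0^T H_t\, dZ^i_t\right],
\]
using $(X^i)^{\rm p} = Z^i$ from \cref{prop:pred.is.exp}. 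The cross term thus vanishes and the cost term is non-negative by \cref{as:C.no.randomized}, so the claim reduces to strict positivity of the quadratic piece.

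For the quadratic piece, I would establish the singular analogue of \cref{lem:jensen.ineq.precursor},
\[
\mathbb{E}\!\left[\iint G\,dX^i\,dX^i - \iint G\,dZ^i\,dZ^i\right] = \mathbb{E}\!\left[\iint G\,d(X^i - Z^i)\,d(X^i - Z^i)\right],
\]
by taking the monotone limit $G^n \uparrow G$ from \cref{rk:kernel.approx} in the bounded identity. Pathwise, each double integral with signed integrators is rewritten as an alternating sum of four non-negative double integrals with respect to Jordan components, so monotone convergence handles the limit; dominated convergence in $\omega$, with integrable dominant $\iint G\,d|X^i - Z^i|\,d|X^i - Z^i|$ (finite by admissibility and \cref{le:exp.G.int}), allows passage to the expectation. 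Strict positivity of the right-hand side is then immediate: \cref{le:singular.posdef} forces the integrand to be strictly positive on the set where $X^i - Z^i$ is non-constant, which has positive $\mathbb{P}$-measure since $X^i$ is strictly randomized and $(X^i - Z^i)_{0-} = 0$; elsewhere the integrand is non-negative by positive semi-definiteness.

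The main technical obstacle will be this bootstrapping through $G^n \uparrow G$: both the equality from \cref{lem:jensen.ineq.precursor} and the dual predictable projection identity require suitable extensions, and \cref{prop:G.consistency} alone is too weak to preserve strictness in the limit, so the strict inequality must instead be injected pathwise via \cref{le:singular.posdef} after the equality is in hand.
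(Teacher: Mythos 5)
Your proposal is correct and follows essentially the same route as the paper: reduce via the singular version of \cref{prop:reduced.rep.obj}, extend \cref{lem:jensen.ineq.precursor} to the singular kernel by the monotone approximation $G^n\uparrow G$ with dominated convergence, and obtain strictness pathwise from \cref{le:singular.posdef}. The only (minor) divergence is the cross term: you extend the dual predictable projection identity directly by truncating the possibly unbounded integrand $H$, whereas the paper instead passes to the limit in the already-established bounded-kernel identity; both are valid given the integrability supplied by \cref{lem:cross.term.integrable}.
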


\begin{proof}
   In view of~\eqref{eqn:singular.obj.schied.rep}, it suffices to show two (in)equalities,
   \begin{align}\label{eq:proof.ext.non.exist.1}
   \mathbb{E}\left[\int_0^T\!\!\int_0^{t-} G(t-s)\sum_{j\not=i}d\tilde{\mathbb{E}}[X_s^{j}] dX^i_t\right]=\mathbb{E}\left[\int_0^T\!\!\int_0^{t-} G(t-s)\sum_{j\not=i}d\tilde{\mathbb{E}}[X_s^{j}] d\tilde{\mathbb{E}}[X^i_t] \right]
   \end{align}
   and
   \begin{align}\label{eq:proof.ext.non.exist.2}
   \mathbb{E}\left[\frac{1}{2}\int_0^T\!\!\int_0^T G(|t-s|) dX_s^{i} dX^i_t\right]>\mathbb{E}\left[\frac{1}{2}\int_0^T\!\!\int_0^T G(|t-s|) d\tilde{\mathbb{E}}[X_s^{i}] d\tilde{\mathbb{E}}[X^i_t]\right].
   \end{align}
   Define again the approximating sequence $G^n\uparrow G$ as in \cref{rk:kernel.approx}. Then \eqref{eq:proof.ext.non.exist.1} holds with $G^n$ instead of $G$, and since $X^i,X^j$  satisfy the admissibility condition~\eqref{eq:admissible.integrability}, we can repeat the same argument as in \cref{prop:G.consistency} to apply dominated convergence and deduce \eqref{eq:proof.ext.non.exist.1} for~$G$. To prove~\eqref{eq:proof.ext.non.exist.2}, we first argue that the equality
   \begin{align}\label{eq:proof.ext.non.exist.3}
    \mathbb{E}&\left[\frac{1}{2}\int_0^\infty\!\!\int_0^\infty G(|t-s|) d(X^i_s-\tilde{\mathbb{E}}[X^i_s]) d(X^i_t-\tilde{\mathbb{E}}[X^i_t]) \right] \\
    &\quad \quad \quad \quad \quad =\mathbb{E}\left[\frac{1}{2}\int_0^\infty\!\!\int_0^\infty G(|t-s|) dX^i_sdX^i_t -\frac{1}{2}\int_0^\infty\!\!\int_0^\infty G( |t-s|)d\tilde{\mathbb{E}}[X^i_s] d\tilde{\mathbb{E}}[X^i_t]\right] \nonumber
\end{align}
from \cref{lem:jensen.ineq.precursor} extends to singular~$G$. Indeed, \eqref{eq:proof.ext.non.exist.3} holds for each $G^n$ by \cref{lem:jensen.ineq.precursor}, and given that $X^i$ and $\tilde{\mathbb{E}}[X^i]$ satisfy \eqref{eq:admissible.integrability}, dominated convergence then yields \eqref{eq:proof.ext.non.exist.3} for $G$. As $G$ is strictly positive definite (\cref{le:singular.posdef}), our assumption that $X^i-\tilde{\mathbb{E}[X^i]}$ is not identically zero yields
\begin{align*}
    0<\mathbb{E}&\left[\frac{1}{2}\int_0^\infty\!\!\int_0^\infty G(|t-s|) d(X^i_s-\tilde{\mathbb{E}}[X^i_s]) d(X^i_t-\tilde{\mathbb{E}}[X^i_t]) \right].
\end{align*}
Applying~\eqref{eq:proof.ext.non.exist.3} and rearranging gives the desired inequality~\eqref{eq:proof.ext.non.exist.2}.
\end{proof}

\subsubsection{Proof of \cref{thm:extensions.singular}}\label{se:pf.ext.thm.sing}

In the preceding subsections, we have shown the extensions of \cref{lem:exp.admissible}, \cref{prop:reduced.rep.obj} and \cref{thm:non.existence.random.eq}. %
The extension of the uniqueness result (\cref{thm:nash.eq.unique}) is obtained by following the steps in the proof of \cref{thm:nash.eq.unique} and omitting the jump terms. To justify the applications of Fubini's theorem, 
we use \cref{lem:cross.term.integrable} as in the proof of \cref{prop:singular.obj.func.rep}.

\bibliographystyle{abbrv}
\bibliography{stochfin}

\end{document}